\documentclass[12pt,a4paper]{article}

\usepackage[english]{babel}

\usepackage[a4paper,top=2cm,bottom=2.54cm,left=1.91cm,right=1.91cm]{geometry}
\usepackage[style=phys, backend=biber, sortcites=false]{biblatex}
\usepackage{amsmath}
\allowdisplaybreaks
\usepackage{amssymb}
\usepackage{amsthm}
\usepackage{graphicx}
\usepackage[colorlinks=true, allcolors=blue]{hyperref}
\usepackage[title]{appendix}
\usepackage{mathrsfs}
\usepackage{amsfonts}
\usepackage{caption}
\usepackage{authblk}
\usepackage{csquotes}
\usepackage[T1]{fontenc} 
\usepackage{lipsum}
\usepackage{float}

\usepackage{setspace}
\usepackage{titlesec}
\titleformat{\section} 
  {\normalfont\Large\bfseries}{\thesection.}{1em}{}
  
\theoremstyle{plain}
\newtheorem{theorem}{Theorem}
\newtheorem{lemma}[theorem]{Lemma}
\newtheorem{proposition}[theorem]{Proposition}

\theoremstyle{definition}

\newtheorem{example}{Example}

\title{A Closer Look on the Influence of Constraints Upon the Optimization of the Nonadditive Entropic Functional $S_q$}

\author[1,2]{Leandro Lyra Braga Dognini\thanks{Corresponding author. E-mail: \href{mailto:leandro.dognini@uerj.br}{leandro.dognini@uerj.br}.}}
\author[3,4,5]{Constantino Tsallis}

\affil[1]{\small Rio de Janeiro State University, Rua São Francisco Xavier 524, 20550-900, Rio de Janeiro, Brazil}
\affil[2]{\small Legislative Advisory, Federal Senate of Brazil, Praça dos Três Poderes, 70165-900, Brasília, Brazil}
\affil[3]{\small Centro Brasileiro de Pesquisas Físicas, Rua Dr. Xavier Sigaud 150, 22290-180, Rio de Janeiro, Brazil}
\affil[4]{Santa Fe Institute, 1399 Hyde Park Road, 87501, Santa Fe, USA}
\affil[5]{Complexity Science Hub, Metternichgasse 8, 1030 Vienna, Austria}
\date{\today} 

\begin{document}
\maketitle
\begin{abstract}
\noindent The thermal-equilibrium canonical distribution is currently obtained by maximizing the Boltzmann-Gibbs-von Neumann-Shannon entropy $S_{BG}(p)=k\sum^{W}_{i=1}p_{i}\ln 1/p_{i}$ constrained to $\sum^{W}_{i=1}p_{i}=1$ and $\sum^{W}_{i=1}p_{i}\,e_{i}=U$, $e_{1}\leq\ldots\leq e_{W}$ being the energies of the $W$ possible states and $U\in[e_{1},e_{W}]$ their mean value. We revisit a generalized version of this optimization problem grounded in the nonadditive entropy $S_{q}(p)=k\,(\sum^{W}_{i=1}p_{i}^{q}-1)/(1-q)$ (frequently, though not necessarily, $q\in(0,1)$; $S_1=S_{BG}$), and the constraint $\sum^{W}_{i=1} p_{i}^{q^{\prime}}e_{i} / \sum^{W}_{i=1}p_{i}^{q^{\prime}}=U$, $q^{\prime}>0$. Sufficient conditions for existence, strict positivity, and uniqueness of solutions are derived, along with a theorem that enables their closed-form calculation. We apply these results to deepen the understanding of the two standard cases in the literature ($q^{\prime}=1$ and $q^{\prime}=q$), as well as of a new one ($q^{\prime}=2-q$). We prove that these standard cases are the only ones yielding optimizing probability distributions of $q$-exponential form. Furthermore, we define an effective temperature $T_{q,q^{\prime}}$ through a Clausius-like relation $1/T_{q,q^{\prime}}=\partial S_{q} / \partial U$ and derive a Helmholtz-like energy $F_{q,q^{\prime}}=U-T_{q,q^{\prime}}S_{q}$, with the former grounding the validity of the $0^{th}$ Principle of Thermodynamics within this generalized statistical mechanics. Finally, we show that the case with a linear constraint (i.e., $q^{\prime}=1$) with $q\in(0,1)$ (i) preserves the Third Law of Thermodynamics; (ii) can be used to model classical many-body Hamiltonian systems with arbitrarily-ranged interactions; and (iii) resembles features of low-dimensional nonlinear dynamical systems at the edge of chaos.
\end{abstract}


\section{Introduction}\label{sec1}

Standard statistical mechanics (herein referred to as BG statistical mechanics \cite{Boltzmann1872,Boltzmann1877,Gibbs1902}), grounded on the Boltzmann-Gibbs-von Neumann-Shannon additive entropic functional $S_{BG}$ defined below, constitutes one of the pillars of contemporary theoretical physics. This superb theory satisfactorily handles, as is well known, a plethora of physical systems. However, strictly speaking, it fails when space and/or time long-range correlations are relevant, which is currently the case for wide classes of so-called complex systems, whether natural, technological, or social ones. By failure, we naturally mean that it does not quantitatively  reproduce the corresponding experimental, observational, or computational data. 

A generalization of the BG theory was proposed in 1988 \cite{Tsallis1988} by introducing nonadditive entropic functionals that generalize the BG one. This proposal, currently referred to in the literature as {\it nonextensive statistical mechanics}, has been submitted to careful scrutiny by the physics community for decades: details are available in \cite{UmarovTsallis2022,Tsallis2023} and in the \href{https://tsallis.cbpf.br/biblio.htm}{Bibliography}. The term ``nonextensive'' refers to the fact that long-range-interacting  Hamiltonians constitute paradigmatic examples of such systems; in such cases, the total energy can be superextensive in the thermodynamic sense, in contrast to usual BG systems, where the total energy is extensive.

The analytical formulation of this generalized theory is basically presented in \cite{Tsallis1988,CuradoTsallis_1991,TsallisMendesPlastino1998}. However, a variety of details concerning its mathematical status have not been completely addressed in a unified approach. The goal of the present paper is to revisit the entire theoretical formalism by closely examining the influence of the constraints under which the optimization of the entropic functional is performed.

The thermal-equilibrium canonical distribution is currently obtained through the maximization of the Boltzmann-Gibbs-von Neumann-Shannon entropic functional \cite{Boltzmann1872,Boltzmann1877,Gibbs1902} under two constraints: one related to probability normalization and the other to the mean energy of the system. This optimization problem is written as:
\begin{eqnarray}\label{maxProblemBG}
    \max & S_{BG}(p)\\
\text{s.t.}& p_{i}\geq0, 1\leq i\leq W\nonumber\\
&\sum^{W}_{i=1}p_{i}=1\nonumber\\
&\sum^{W}_{i=1}p_{i}e_{i}=U,\nonumber
\end{eqnarray}
with the entropic functional $S_{BG}:\mathbb{R}^{W}_{+}\rightarrow\mathbb{R}$\footnote{We write $\mathbb{R}^W_{+}\equiv\{x\in\mathbb{R}^{W}\mid x_{i}\geq0,1\leq i\leq W\}$ and $\mathbb{R}^W_{++}\equiv\{x\in\mathbb{R}^{W}\mid x_{i}>0,1\leq i\leq W\}$.} given by 
\begin{eqnarray*}
    S_{BG}(p)=k\sum^{W}_{i=1}p_{i}\ln \frac{1}{p_{i}},
\end{eqnarray*}
$k>0$ being a positive constant chosen once and for all (usually $k=k_B$, the Boltzmann constant, in physics, and $k=1$ in computational sciences); $W\geq 2$ the number of possible states of the system; $e_{1}\leq\ldots\leq e_{W}$ the energy levels of each state (all such values form the \textit{energy spectrum}); and $e_{1}\le U \le e_{W}$ the mean or internal energy\footnote{Notice that, for two independent systems $A$ and $B$ (i.e., $p_{ij}^{A+B}=p_{i}^{A}p_{j}^{B}$, $1\leq i\leq W_{A}$, $1\leq j\leq W_{B}$) with energy spectra of the joint system given by $e^{A+B}_{ij}=e^{A}_{i}+e^{B}_{j}$, the internal energies $U^{A}=\sum_{i}p_{i}^{A}e_{i}^{A}$, $U^{B}=\sum_{j}p_{j}^{B}e_{j}^{B}$ and $U^{A+B}=\sum_{i,j}p_{ij}^{A+B}e_{ij}^{A+B}$ are additive as well (i.e., $U^{A+B}=U^{A}+U^{B}$), so that microscopic independence and additivity lead towards macroscopic additivity.}. We recall that the \textit{degeneracy function} is given by $g(e_{i})=\#\{1\leq j\leq W\mid e_{j}=e_{i}\}$, $1\leq i\leq W$.

Also, we straightforwardly verify that the entropic functional $S_{BG}$ is {\it additive}, meaning that, for two independent systems $A$ and $B$ (i.e., $p_{ij}^{A+B}=p_{i}^{A}p_{j}^{B}$, $1\leq i\leq W_{A}$, $1\leq j\leq W_{B}$), the entropy of the joint system $A+B$ satisfies 
\begin{equation*}
S_{BG}(p^{A+B})=S_{BG}(p^{A})+S_{BG}(p^{B}) \,.
\end{equation*}
Nonadditive entropies are a natural way to build generalizations of this optimization problem in order to tackle physical phenomena that are not well addressed by the Boltzmann-Gibbs paradigm. For instance, a class of optimization problems that generalizes (\ref{maxProblemBG}) through the nonadditive entropic functional $S_{q}:\mathbb{R}^{W}_{+}\rightarrow\mathbb{R}$, $q\in\mathbb{R}$,
\begin{eqnarray*}
    S_{q}(p)= k\sum^{W}_{i=1} p_{i}\ln_{q}\frac{1}{p_{i}}=k\biggr(\frac{\sum^{W}_{i=1}p^{q}_{i}-1}{1-q}\biggr),
\end{eqnarray*}
for $p\in\mathbb{R}^{W}_{+}$, is given by
\begin{eqnarray}\label{maxProblemGeneral}
\max& S_{q}(p)\\
\text{s.t.}& p_{i}\geq0, 1\leq i\leq W\nonumber\\
&\sum^{W}_{i=1}p_{i}=1\nonumber\\
&\dfrac{\sum^{W}_{i=1}p_{i}^{q^{\prime}}e_{i}}{\sum^{W}_{i=1}p_{i}^{q^{\prime\prime}}}=U,\nonumber
\end{eqnarray}
with $q, q^{\prime},q^{\prime\prime}>0$\footnote{The case $q<0$ can, in fact, be similarly discussed; however, we will not address it in the present discussion. Let us only remind that $S_{q}$ is to be maximized for $q>0$, as in (\ref{maxProblemGeneral}), and minimized for $q<0$.}. Clearly, $S_{1}=S_{BG}$ and we remind that the $q$-logarithm $\ln_{q}:(0,+\infty)\rightarrow\mathbb{R}$, $q\in\mathbb{R}$, is defined by 
\begin{eqnarray*}
    \ln_{q}z=\frac{z^{1-q}-1}{1-q},
\end{eqnarray*}
for $z>0$, hence $\ln_{q}xy=\ln_q x+\ln_q y +(1-q)\ln_q x\ln_q y$, for $x,y>0$. We consistently verify that the entropic functional $S_{q}$, $q\neq1$, is {\it nonadditive}, meaning that, for two independent systems $A$ and $B$,
\begin{equation*}
S_{q}(p^{A+B})=
S_{q}(p^{A})+
S_{q}(p^{B})+\frac{(1-q)}{k}S_{q}(p^{A})
S_{q}(p^{B})
\ne S_{q}(p^{A})+
S_{q}(p^{B})\,.
\end{equation*}

Notice that $q=q^{\prime}=q^{\prime\prime}=1$ in (\ref{maxProblemGeneral}) recovers (\ref{maxProblemBG}) and several other cases have been studied in the literature (see \cite{MartinezNicolasPenniniPlastino2000, FerriMartinezPlastino_2005} for a comprehensive discussion on the solutions of these optimization problems). For instance, the case $(q,q^{\prime},q^{\prime\prime})=(q,q,1)$, $q>0$, along with its thermodynamic implications, was dealt with in  \cite{CuradoTsallis_1991}. 

In this paper, we aim to solve a subclass of (\ref{maxProblemGeneral}). Namely, the one for which $q,q^{\prime}>0$ and $q^{\prime\prime}=q^{\prime}$. Then, (\ref{maxProblemGeneral}) becomes
\begin{eqnarray}\label{maxProblem}
\max& S_{q}(p)\\
\text{s.t.}& p_{i}\geq0, 1\leq i\leq W\nonumber\\
&\sum^{W}_{i=1}p_{i}=1\nonumber\\
&\sum^{W}_{i=1}p_{i}^{q^{\prime}}E_{i}=0,\nonumber
\end{eqnarray}
with $E_{i}=e_{i}-U$, $1\leq i\leq W$, forming the \textit{relative energy spectrum}. In particular, the boundaries of the relative spectrum satisfy $E_{1}\leq0$ and $E_{W}\geq0$. Let us emphasize that this subclass of optimization problems (\ref{maxProblem}) holds the physically desirable property of being invariant under translations of the energy spectrum, and this motivated us to deal with it in the first place.

Our results (e.g., Theorems \ref{theo1} and \ref{theo2}) provide a common theoretical framework to address this entire subclass, deepening the understanding of standard cases in the literature (i.e., $q^{\prime}=1$ and $q^{\prime}=q$) and revealing the existence of a new well-behaved one ($q^{\prime}=2-q$). 

Furthermore, we shed light on the thermodynamic implications of these optimization problems by defining a Clausius-like relation $1/T_{q,q^{\prime}}=\partial S_{q}/\partial U$ and a generalized Helmholtz-like energy $F_{q,q^{\prime}}=U-T_{q,q^{\prime}}S_{q}$, with the former being closely related to the $0^{th}$ Principle of Thermodynamics. This is a relevant milestone since it has been repeatedly shown in the literature (see, for instance, \cite{NobreCuradoSouzaAndrade2015} and references therein) that the statistical mechanics grounded in the nonadditive entropic functional $S_q$ preserves the $0^{th}$ Principle of Thermodynamics in dissipative systems; however, for conservative systems, this issue has remained only partially clarified.

We also discuss the applicability of our results to modeling classical many-body Hamiltonians with arbitrarily-ranged interactions (see, for instance, \cite{CirtoRodriguezNobreTsallis2018}). It is known that the Boltzmann-Gibbs paradigm in (\ref{maxProblemBG}) is well-suited to model systems in which all physically relevant space-time correlations are short-ranged, meaning that their momenta of all orders are finite. Therefore, (\ref{maxProblem}) and the results from this paper are particularly relevant once correlations become long-ranged. 

Lastly, it is necessary to highlight that the applicability of nonadditive entropies\footnote{See Bibliography in: \href{https://tsallis.cbpf.br/biblio.htm}{https://tsallis.cbpf.br/biblio.htm}.} and their consequences has been profusely validated in many different scenarios. For example, we may cite cold atoms in dissipative optical lattices \cite{Renzoni, LutzRenzoni2013}, granular matter \cite{GheorghiuOmmenCoppens2003,Combe}, nonlinear dynamical systems at the edge of chaos \cite{TirnakliBorges2016,TirnakliTsallis2020b,BountisVeermanVivaldi2020,RuizTirnakliBorgesTsallis2017,beckgauss,SaberiTirnakliTsallis2026},  high energy collisions of elementary particles at CERN \cite{WaltonRafelski2000,WongWilk2013,WongWilkCirtoTsallis2015,deppmanfractal4,deppmanfokker2,baptistaairton}, cosmology \cite{TsallisCirto2013,JizbaLambiase2022,JizbaLambiase2023,SalehiPouraliAbedini2023,DenkiewiczSalzanoDabrowski2023,TsallisJensen2025}, long-range interactions in many-body Hamiltonian systems \cite{CirtoAssisTsallis2014,CirtoRodriguezNobreTsallis2018}, overdamped systems such as type-II superconductivity \cite{AndradeSilvaMoreiraNobreCurado2010,CasasNobreCurado2019}, turbulence \cite{Boghosian1996,DanielsBeckBodenschatz2004}, economics \cite{Borland2002a,Borland2002b, kaizoji2003,RuizMarcos2018}, earthquakes \cite{AntonopoulosMichasVallianatosBountis2014}, asymptotically scale-invariant networks \cite{nunesrole,samuraiproperties,samurailuciano,cinardisubmitted,rute,rute2}, slow chemical reactions through quantum tunneling \cite{wildnature,wildnature3}, neurosciences \cite{abramovEEG,abramovEEG2}, among other complex systems.

After this \hyperref[sec1]{Introduction}, the outline of the paper is as follows. \hyperref[sec2]{Section 2} derives the theoretical foundation for solving the generalized problem (\ref{maxProblem}) for $q,q^{\prime}>0$ and \hyperref[sec3]{Section 3} applies this method to three cases that provide closed-form solutions. \hyperref[sec4]{Section 4} reveals how the linear constraint case can be used to model classical many-body Hamiltonians with arbitrarily-ranged interactions. \hyperref[sec5]{Section 5} discusses the resemblance of this case with low-dimensional nonlinear dynamical systems at the edge of chaos. \hyperref[sec6]{Section 6} presents the concluding remarks. All proofs are stated in the \hyperref[app]{Appendix}.

\section{Solving the generalized optimization problem}\label{sec2}

Before trying to find a closed-form solution for the generalized optimization problem (\ref{maxProblem}), it is natural to focus on the existence and uniqueness of a solution to it. This motivates the following theorem.

\begin{theorem}\label{theo1}
    For $U\in [e_{1},e_{W}]$, $q,q^{\prime}>0$, there is a solution for (\ref{maxProblem}). Furthermore, for $U\in(e_{1},e_{W})$: (i) if $q>0$ and $q^{\prime}>1$, then there is a strictly positive solution; (ii) if $q>1$ and $q^{\prime}=1$, then there is a unique solution; and (iii) if $q\in(0,1)$ and $q^{\prime}=1$, then there is a unique and strictly positive solution.
\end{theorem}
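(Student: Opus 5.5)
Existence will follow from Weierstrass. The feasible set $\mathcal{F}=\{p\in\mathbb{R}^W_+:\sum_i p_i=1,\ \sum_i p_i^{q'}E_i=0\}$ is closed, since $p\mapsto p_i^{q'}$ is continuous on $[0,\infty)$ for $q'>0$. It is bounded, being inside the simplex. We then check that it is nonempty for $U\in[e_1,e_W]$. If $E_1=0$ or $E_W=0$, the corresponding vertex works. Otherwise $E_1<0<E_W$, and we take $p=t\,\delta_1+(1-t)\,\delta_W$. The function $t\mapsto t^{q'}E_1+(1-t)^{q'}E_W$ is continuous, positive at $t=0$ and negative at $t=1$, so the intermediate value theorem gives a zero. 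Since $S_q$ is continuous on $\mathbb{R}^W_+$ for $q>0$, a maximizer exists.

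For (ii) and (iii) take $q'=1$. Then $\mathcal{F}$ is a polytope, hence convex. $S_q$ is strictly concave on $\mathbb{R}^W_+$ for every $q>0$: the map $x\mapsto (x^q-x)/(1-q)$ has second derivative $-q x^{q-2}<0$ on $(0,\infty)$, is continuous at $0$, and is therefore strictly concave on $[0,\infty)$. We use the form $S_q=k\sum_i (p_i^q-p_i)/(1-q)$, which is valid on the simplex. A strictly concave function has at most one maximizer on a convex set, so uniqueness follows in both cases.

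Strict positivity rests on a boundary-derivative argument. The tool is the infinite slope of $p_i^q$ at $0$ when $q<1$, or the vanishing slope of $p_i^{q'}$ at $0$ when $q'>1$.

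\emph{Case (iii)}, $q\in(0,1)$, $q'=1$, $U\in(e_1,e_W)$. Since $U$ is interior, there is a strictly positive feasible $r$. For instance, mix the uniform-type point with a two-point solution: take $r=(1-\varepsilon)p^{(2)}+\varepsilon u$, where $p^{(2)}$ is supported on $\{1,W\}$ with mean slightly adjusted, and $u$ is uniform. Because the constraint is affine, this adjustment can be made exactly. Let $p^*$ be the maximizer and suppose $p^*_j=0$ for some $j$. Consider $p^\varepsilon=(1-\varepsilon)p^*+\varepsilon r\in\mathcal{F}$. Then $\frac{d}{d\varepsilon}S_q(p^\varepsilon)|_{0^+}=+\infty$, because the term $(\varepsilon r_j)^q$ has infinite right derivative while all other terms have finite one-sided derivatives. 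This contradicts optimality, so $p^*>0$.

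\emph{Case (i)}, $q'>1$, any $q>0$. Convexity is lost, so we argue locally. Take a maximizer $p^*$ and suppose $J=\{j:p^*_j=0\}\neq\emptyset$. Let $I$ be the complementary support. Move a small mass $\varepsilon$ into some $j\in J$, taking it from the support. The constraint change from $j$ is $\varepsilon^{q'}E_j=o(\varepsilon)$. The entropy gain from $j$ is of order $\varepsilon^q/(1-q)$ when $q<1$ (superlinear), $\varepsilon\ln(1/\varepsilon)$ when $q=1$, and $\varepsilon^q/(q-1)$ when $q>1$; in the last case only $-\varepsilon^q/(q-1)=o(\varepsilon)$ is lost, while the removal side contributes a linear term. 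The plan is to re-satisfy normalization and the constraint by an $O(\varepsilon)$ (in fact $O(\varepsilon+\varepsilon^{q'})$) correction on the support $I$. This needs the gradients of $\sum_I p_i$ and $\sum_I p_i^{q'}E_i$ to be independent on $I$, i.e. $(q'(p^*_i)^{q'-1}E_i)_{i\in I}$ not proportional to $(1)_{i\in I}$. The implicit function theorem then yields a feasible curve, and the first-order entropy change is compared with the Lagrange/KKT multipliers on $I$.

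The main obstacle is this last comparison. At a KKT point on $I$, first-order moves on $I$ give zero, so the net change is the $j$-contribution minus the mass cost $\lambda\varepsilon$ plus $o(\varepsilon)$. For $q\le1$ the $j$-term is superlinear or $\varepsilon\ln(1/\varepsilon)$ and wins. For $q>1$ I will instead pick which index loses mass so that the linear terms favor the move, using that $\partial_{p_j}S_q=kq\,p_j^{q-1}/(1-q)\cdot$(sign-adjusted) equals $k/(q-1)$ at $0$, which exceeds the marginal value on $I$ unless multiplier conditions are violated. The degenerate cases must also be ruled out: $|I|=1$, or gradients dependent on $I$, meaning all $E_i$ on $I$ vanish or satisfy the proportionality. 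Here interiority of $U$ is used: since $E_1<0<E_W$ and the constraint on $I$ holds, $I$ must contain indices with $E_i$ of both signs or with $E_i=0$. I expect that handling these carefully, possibly by a direct two-index perturbation instead of the implicit function theorem, is the delicate part.
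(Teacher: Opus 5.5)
Your existence argument (including the nonemptiness check, which the paper only asserts), the uniqueness for $q'=1$, and your proof of (iii) are correct. For (iii) your route is cleaner than the paper's: convexity of the feasible set lets you push directly toward a strictly positive feasible $r$. The paper instead must first show that the optimizer charges two states with $E_m<0<E_n$ and then perturb from them.

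The gap is in (i), exactly in the case you leave open: the support $I$ of $p^*$ lies inside the level set $\{i: E_i=0\}$, which includes $|I|=1$. This is the only way your constraint qualification can fail. If $((p^*_i)^{q'-1}E_i)_{i\in I}$ were a nonzero constant vector, all $E_i$ on $I$ would share a strict sign and $\sum_{i\in I}(p^*_i)^{q'}E_i\neq0$. But in this case it fails irreparably. Since $\sum_{i\in I}p_i^{q'}E_i\equiv0$ on the whole face, after placing $\varepsilon$ on a single $j$ with $E_j\neq0$ no correction supported on $I$ can restore the energy constraint. The case cannot be dismissed, because there $p^*_1=p^*_W=0$.

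The missing idea, which is the paper's first step and works for every $q'>0$, is to insert mass into two empty states of opposite energy at once. Remove $\varepsilon$ from some $j\in I$ and put $\theta\varepsilon$ on state $1$ and $(1-\theta)\varepsilon$ on state $W$, with $\theta=E_W^{1/q'}/(E_W^{1/q'}+|E_1|^{1/q'})$, so that $\theta^{q'}E_1+(1-\theta)^{q'}E_W=0$. By homogeneity both constraints then hold exactly, and the removal costs nothing in the energy constraint since $E_j=0$. The entropy strictly increases for small $\varepsilon$:
\begin{itemize}
\item for $q<1$, the gain $k\varepsilon^q(\theta^q+(1-\theta)^q)/(1-q)$ is superlinear;
\item for $q=1$, the $-k\varepsilon\ln\varepsilon$ terms dominate;
\item for $q>1$, the linear gain from decreasing $p_j$ beats the $O(\varepsilon^q)$ cost of the insertion.
\end{itemize}
Only after this step do you know that $I$ contains $m,n$ with $E_m<0<E_n$.

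Your first-order comparison for $q>1$ in the nondegenerate case is also unfinished. You cannot ``pick which index loses mass'' once the correction on $I$ is dictated by the implicit function theorem, and ``unless multiplier conditions are violated'' leaves the sign open. In fact the condition is never violated. Write $S_q=k(\sum_ip_i^q-1)/(1-q)$ and pair the Lagrange conditions $\partial_iS_q=\lambda_1+\lambda_2q'(p^*_i)^{q'-1}E_i$, $i\in I$, with $p^*_I$. Using both constraints gives $\lambda_1=\frac{kq}{1-q}\sum_{i\in I}(p^*_i)^q<0$. The net change is then $-\lambda_1\varepsilon-k\varepsilon^q/(q-1)+o(\varepsilon)$, which is positive for small $\varepsilon$.

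Alternatively, as the paper does, restrict the correction to two indices $m,n\in I$ with $E_m<0<E_n$. The implicit function theorem gives $\delta_m'(0),\delta_n'(0)<0$, so both lose mass. Since $S_q$ is strictly decreasing in each positive coordinate for $q>1$ while the insertion costs only $o(\varepsilon)$, no multipliers are needed.

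With these two repairs your argument for (i) is complete. It also covers $q=1$, which the paper's displayed computations, written for $q\neq1$, omit.
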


Theorem \ref{theo1} clarifies the situations in which a well-behaved solution of (\ref{maxProblem}) exists. By well-behaved, we mean a strictly positive and possibly unique solution, as is the case of (\ref{maxProblemBG}). Notice, however, that Proposition \ref{theo1} does not state necessary conditions; it only provides sufficient conditions for existence, uniqueness, and strict positiveness. 

The need for the existence of a strictly positive solution $\tilde{p}\in\mathbb{R}^{W}_{++}$\footnote{To clearly state our notation, the $\sim$ overscript always references variables related to an optimal solution of the generalized problem (\ref{maxProblem}).} from (\ref{maxProblem}) will become clear once we enunciate Theorem \ref{theo2}. Before doing so, we need some extra definitions.

For $q>0$, $q\neq1$, let $\psi_{q}:\mathbb{R}^{W}_{++}\rightarrow\mathbb{R}^{W}_{++}$ be given by 
\begin{eqnarray*}
    \psi_{q}(p)=\frac{\nabla S_{q}(p)}{p\cdot \nabla S_{q}(p)}=\biggr(\frac{p_{1}^{q-1}}{\sum^{W}_{i=1}p_{i}^{q}},\ldots,\frac{p_{W}^{q-1}}{\sum^{W}_{i=1}p_{i}^{q}}\biggr),
\end{eqnarray*}
for $p\in\mathbb{R}^{W}_{++}$. Since $q\neq1$, $\psi_{q}(\cdot)$ is a diffeomorphism (i.e., it is a differentiable function with a differentiable inverse) and 
\begin{eqnarray*}
\psi^{-1}_{q}(x)=\biggr(\frac{x_{1}^{\frac{1}{q-1}}}{\sum^{W}_{i=1}x_{i}^{\frac{q}{q-1}}},\ldots,\frac{x_{W}^{\frac{1}{q-1}}}{\sum^{W}_{i=1}x_{i}^{\frac{q}{q-1}}}\biggr),
\end{eqnarray*}
for $x\in\mathbb{R}^{W}_{++}$. Furthermore, for $q,q^{\prime}>0$, $q\neq 1$, let $f_{q,q^{\prime}}:\mathbb{R}^{W}_{++}\rightarrow\mathbb{R}^{W}$ be given by
\begin{eqnarray}\label{eqF}
    f_{q,q^{\prime}}(x)&=&\biggr(E_{1}x_{1}^{\,\frac{q^{\prime}-1}{q-1}},\ldots,E_{W}x_{W}^{\,\frac{q^{\prime}-1}{q-1}}\biggr).
\end{eqnarray}

\begin{theorem}\label{theo2}
    Let $q,q^{\prime}>0$, $q\neq1$, and $f_{q,q^{\prime}}(\cdot)$ be as defined above. If $\tilde{p}\in \mathbb{R}^{W}_{++}$ is a solution of (\ref{maxProblem}), then there exists $\beta_{q,q^{\prime}}\in\mathbb{R}$ satisfying
    \begin{eqnarray}
        \sum^{W}_{i=1}\tilde{x}_{i}^{\,\frac{1}{q-1}}(\tilde{x}_{i}-1)&=&0\label{eqDiffeoSimplex}\\
        \tilde{x}&=&1+(1-q)\beta_{q,q^{\prime}} f_{q,q^{\prime}}(\tilde{x}),\label{eqDiffeoX}
    \end{eqnarray}
    with $\tilde{x}=\psi_{q}(\tilde{p})\in \mathbb{R}^{W}_{++}$\footnote{We write $1$ to refer to both the scalar and the vector (as in (\ref{eqDiffeoX})) with all coordinates equal to $1$.}. Furthermore, if the value function of (\ref{maxProblem}) is differentiable, then
    \begin{eqnarray}\label{eqEnvelope2}
        \frac{\partial S_{q}}{\partial U}=\frac{kq}{q^{\prime}}\biggr(\sum^{W}_{i=1}\tilde{p}_{i}^{\,q}\biggr)^{\frac{q-q^{\prime}}{q-1}}\biggr(\sum^{W}_{i=1}\tilde{p}_{i}^{\,q^{\prime}}\biggr)\beta_{q,q^{\prime}}.
    \end{eqnarray}
\end{theorem}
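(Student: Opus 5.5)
The plan is to use the classical Lagrange multiplier method at the interior optimum $\tilde p\in\mathbb{R}^W_{++}$. Because every $\tilde p_i>0$, the nonnegativity constraints are inactive near $\tilde p$. Locally, (\ref{maxProblem}) is then a smooth problem with two equality constraints: $g_1(p)=\sum_i p_i-1=0$ and $g_2(p)=\sum_i p_i^{q'}E_i=0$.

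\textbf{Constraint qualification.} First I check that $\nabla g_1=(1,\dots,1)$ and $\nabla g_2=(q'\tilde p_i^{q'-1}E_i)_i$ are linearly independent. If they were dependent, then $\tilde p_i^{q'-1}E_i=c$ for all $i$. Multiplying by $\tilde p_i$ and summing gives $c=\sum_i \tilde p_i^{q'}E_i=0$, hence $E_i=0$ for all $i$. In that degenerate case ($e_1=\dots=e_W=U$) the second constraint is vacuous, and $\beta_{q,q'}=0$ works trivially. Otherwise the first-order conditions give $\lambda,\mu\in\mathbb{R}$ with
\[
\frac{kq}{1-q}\,\tilde p_i^{\,q-1}=\lambda+\mu q'\,\tilde p_i^{\,q'-1}E_i,\qquad 1\le i\le W,
\]
for the Lagrangian $L=S_q-\lambda g_1-\mu g_2$.

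\textbf{Deriving the two equations.} Multiplying by $\tilde p_i$ and summing, using both constraints, gives $\lambda=kq\sum_i\tilde p_i^{q}/(1-q)$. Dividing the stationarity condition by $\lambda$ yields
\[
\tilde x_i=\frac{\tilde p_i^{q-1}}{\sum_j \tilde p_j^q}=1+(1-q)\frac{\mu q'}{kq\sum_j\tilde p_j^q}\,\tilde p_i^{q'-1}E_i .
\]
Next I rewrite $\tilde p_i^{q'-1}$ through $\psi_q^{-1}$. The identity $\sum_j \tilde x_j^{q/(q-1)}=(\sum_j\tilde p_j^q)^{-1/(q-1)}$ gives
\[
\tilde p_i^{q'-1}=\tilde x_i^{\frac{q'-1}{q-1}}\Bigl(\sum_j\tilde p_j^q\Bigr)^{\frac{q'-1}{q-1}}.
\]
This produces (\ref{eqDiffeoX}) with
\[
\beta_{q,q'}=\frac{\mu q'}{kq}\Bigl(\sum_j\tilde p_j^q\Bigr)^{\frac{q'-q}{q-1}} .
\]
Equation (\ref{eqDiffeoSimplex}) is just normalization transported by $\psi_q$. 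Since $\tilde p_i=\tilde x_i^{1/(q-1)}/\sum_j\tilde x_j^{q/(q-1)}$, the relation $\sum_i\tilde p_i=1$ is equivalent to $\sum_i\tilde x_i^{1/(q-1)}=\sum_i\tilde x_i^{q/(q-1)}$, which is (\ref{eqDiffeoSimplex}).

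\textbf{Envelope formula.} Only $g_2$ depends on $U$, since $E_i=e_i-U$, so $\partial L/\partial U=\mu\sum_i\tilde p_i^{q'}$. Substituting $\mu=\frac{kq}{q'}\beta_{q,q'}(\sum_j\tilde p_j^q)^{\frac{q-q'}{q-1}}$ gives exactly (\ref{eqEnvelope2}).

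\textbf{Main obstacle.} The hardest step is justifying $\partial S_q/\partial U=\partial L/\partial U$ assuming only that the value function $V(U)$ is differentiable, with no smooth dependence of $\tilde p$ on $U$. I would argue as follows.
\begin{itemize}
\item Using the linear independence of the constraint gradients, the implicit function theorem gives a smooth curve $U'\mapsto p(U')$ of feasible points, with $p(U)=\tilde p$, staying in $\mathbb{R}^W_{++}$ for $U'$ near $U$.
\item Then $V(U')-S_q(p(U'))\ge 0$ near $U$, with equality at $U'=U$. Both functions are differentiable at $U$, so their derivatives coincide: $V'(U)=\frac{d}{dU'}S_q(p(U'))|_{U}$.
\item Differentiating $g_1(p(U'))=0$ and $g_2(p(U');U')=0$ and combining with the stationarity condition turns this into $\frac{d}{dU'}S_q(p(U'))|_{U}=\mu\sum_i\tilde p_i^{q'}$.
\end{itemize}
In the degenerate case ($E_i=0$ for all $i$, where this curve construction is unavailable), the problem is unchanged from the nondegenerate analysis only if it is genuinely degenerate. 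There I would instead note that $U=e_1=e_W$ is an endpoint, so a two-sided derivative does not arise in the interior sense.
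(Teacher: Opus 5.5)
Your proposal is correct and is essentially the paper's proof: the same stationarity condition, the same evaluation $\lambda=\frac{kq}{1-q}\sum_i\tilde p_i^{\,q}$ obtained by contracting with $\tilde p$, the same identification of $\beta_{q,q'}$ as in (\ref{eqBetaLambda2}), normalization transported through $\psi_q$ for (\ref{eqDiffeoSimplex}), and $\partial\mathcal{L}/\partial U=\mu\sum_i\tilde p_i^{\,q'}$ for (\ref{eqEnvelope2}). Beyond the paper, which only invokes the first-order conditions and ``the Envelope Theorem,'' you add the linear-independence check and the feasible-curve argument giving $V'(U)=\mu\sum_i\tilde p_i^{\,q'}$ under mere differentiability of $V$; both are sound, and your closing remark on the degenerate case $E\equiv0$ can simply say that the only admissible value there is $U=e_1=e_W$, so the envelope claim is vacuous.
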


Theorem \ref{theo2} reveals that the strictly positive solutions of (\ref{maxProblem}) are included among the solutions of (\ref{eqDiffeoSimplex}) and (\ref{eqDiffeoX}). Therefore, given $q,q^{\prime}>0$ (which can be independently chosen), $q\neq1$, and $E_{1}\leq\ldots\leq E_{W}$ (the relative energy spectrum), $\beta_{q,q^{\prime}}$ is a parameter that characterizes a strictly positive, possibly unique, solution of (\ref{maxProblem}). 

Let us anticipate that for the cases with $(q,q^{\prime})$ equal to $(q,1),(q,q)$ or $(q,2-q)$ (see Subsections \ref{subsec31}, \ref{subsec32} and \ref{subsec33}), the $W$-dimensional equation (\ref{eqDiffeoX}) can be solved in closed-form, thus yielding $\tilde{x}(\beta_{q,q^{\prime}})$ (i.e., the image of the solution $\tilde{p}$ by the diffeomorphism $\psi_{q}(\cdot)$ as a known smooth function of the parameter $\beta_{q,q^{\prime}}$). Then, we can use the $1$-dimensional equation (\ref{eqDiffeoSimplex}) to effectively find the parameter that yields the solution of (\ref{maxProblem}).

A closer look at Theorem \ref{theo2} leads us to a $(q,q^{\prime})$-generalized Clausius relation given by
\begin{equation}\label{eqClausiusGeneralized}
\frac{1}{T_{q,q\prime}}\equiv \frac{\partial S_q}{\partial U},
\end{equation}
so that
\begin{equation}\label{eqTemperatureGeneralized}
\frac{1}{T_{q,q\prime}}=
\frac{kq}{q^{\prime}}\biggr(\sum^{W}_{i=1}\tilde{p}_{i}^{\,q}\biggr)^{\frac{q-q^{\prime}}{q-1}}\biggr(\sum^{W}_{i=1}\tilde{p}_{i}^{\,q^{\prime}}\biggr)\beta_{q,q^{\prime}}.
\end{equation}
Consequently, we obtain a $(q,q^{\prime})$-generalized Helmholtz energy given by
\begin{equation}\label{eqHelmholtzGeneralized}
    F_{q,q^{\prime}}\equiv U-T_{q,q^{\prime}}S_{q},
\end{equation}
and a $(q,q^{\prime})$-generalized specific heat given by
\begin{eqnarray}\label{eqSpecificHeatGeneralized}
    C_{q,q^{\prime}}\equiv \frac{\partial U}{\partial T_{q,q^{\prime}}}=T_{q,q^{\prime}}\frac{\partial S_{q}}{\partial T_{q,q^{\prime}}}=-T_{q,q^{\prime}}\frac{\partial^{2}F_{q,q^{\prime}}}{\partial T^{2}_{q,q^{\prime}}},
\end{eqnarray}
where the last two equalities come from (\ref{eqClausiusGeneralized}) and (\ref{eqHelmholtzGeneralized}). 

Consistently, if we have two systems $A$ and $B$ characterized by $q_{A},q^{\prime}_{A},q_{B},q^{\prime}_{B}>0$ in generalized thermal contact, then, in equilibrium, we expect that $\partial S_{q_{A}}/\partial U^{A}=\partial S_{q_{B}}/\partial U^{B}$, so that $T_{q_{A}, q^{\prime}_{A}}=T_{q_{B},q^{\prime}_{B}}$. Notice that this trivially implies the transitivity of the concept of equilibrium (0th Principle of Thermodynamics), i.e., if $A$ is in generalized thermal equilibrium with $B$, and $B$ is in equilibrium with $C$, then $A$ is in equilibrium with $C$. 

Furthermore, it is important to highlight that when dealing with the limit $W\rightarrow \infty$ (i.e., solving (\ref{maxProblem}) for finite $W$ and then taking the limit of the solution when $W\rightarrow\infty$), the right side of (\ref{eqTemperatureGeneralized}) may diverge or tend to zero. In this case, an adequate rescaling of the temperature becomes necessary to ensure well-behaved thermodynamics. For instance, when $0<q\leq q^{\prime}\leq1$, a possible choice for such rescaled temperature is
\begin{eqnarray}\label{eqTRescaling}
    T^{\prime}_{q,q^{\prime}}= T_{q,q^{\prime}} \ln_{q}W,
\end{eqnarray}
since in this case the sums in (\ref{eqTemperatureGeneralized}) are bounded according to
\begin{eqnarray*}
   1 \leq \biggr(\sum^{W}_{i=1}\tilde{p}_{i}^{\,q}\biggr)^{\frac{q-q^{\prime}}{q-1}}\biggr(\sum^{W}_{i=1}\tilde{p}_{i}^{\,q^{\prime}}\biggr)\leq \biggr(\frac{W}{W^{q}}\biggr)^{\frac{q-q^{\prime}}{q-1}}\biggr(\frac{W}{W^{q^{\prime}}}\biggr)=W^{1-q}=1+(1-q)\ln_{q}W.
\end{eqnarray*}
We highlight that (\ref{eqTRescaling}) can also be seen as a rescaling of the entropic functional by $\ln_{q}W=S_{q}(W^{-1},\ldots,W^{-1})/k$ (i.e., $1/T^{\prime}_{q,q^{\prime}}=\partial(S_{q}/\ln_{q}W)/\partial U$ ).

There is still one last remark to be made about Theorem \ref{theo1}. Notice that Theorem \ref{theo1} assures that a strictly positive and unique solution to (\ref{maxProblem}) exists if $q\in(0,1)$ and $q^{\prime}=1$. The latter is simply the classical linear energy constraint, but one may ask if the former restriction is a severe setback to the physical usefulness of (\ref{maxProblem}). 

To answer this, let us use the classical framework of a thermodynamic system with $N>>1$ elements. When $W(N)\sim \mu^{N}$, $\mu>1$ (i.e., the number of microstates as a function of the number of elements in the system grows exponentially), for equiprobabilities we have
\begin{eqnarray*}
    S_{BG}(W^{-1},\ldots,W^{-1})=k\ln W\sim kN\ln\mu\propto N,
\end{eqnarray*}
and so the Boltzmann-Gibbs entropy is \textit{extensive}. However, physical systems in which $W(N)\sim W(1)N^{\rho}$, $\rho\in\mathbb{R}/\{0\}$ (i.e., the number of microstates as a function of the number of elements in the system behaves as a power law), lie outside this Boltzmann-Gibbs paradigm (see, for instance, \cite{TsallisGellMannSato2005}). When $\rho<0$, the value of $W(1)$ is assumed to be large enough so that $W(N)\geq1$ for all physically admissible values of $N$ (which excludes the limiting case $N\rightarrow\infty$).

In these cases, we have
\begin{eqnarray*}
    S_{1-\rho^{-1}}(W^{-1},\ldots,W^{-1})=\ln_{1-\rho^{-1}}W\sim \rho[W(1)^{\frac{1}{\rho}}N-1]\propto N,
\end{eqnarray*}
so that $q=1-\rho^{-1}$ makes the nonadditive entropic functional to be extensive, and this relation is depicted below. 
\begin{figure}[H]
\centering
\includegraphics[width=0.7\linewidth]{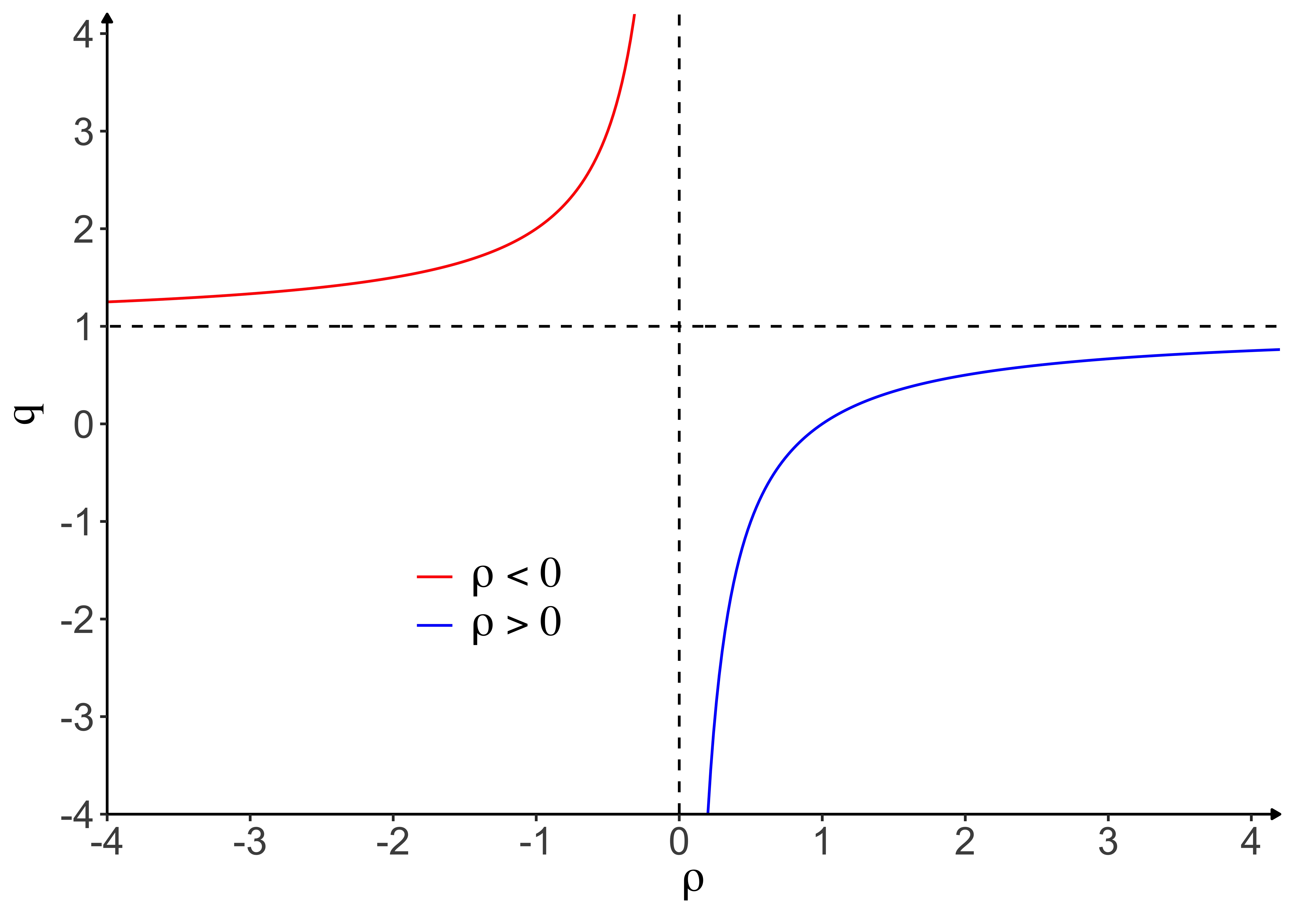}
    \caption{$\rho$-dependence of $q=1-1/\rho$: $q<1$ ($q>1$) typically corresponds to attractive (repulsive) two-body interactions of many-body systems. Also, $\rho>0$ ($\rho<0$) corresponds to $W(N)$ increasing (decreasing) with $N$. The $q=1$ dashed horizontal line corresponds to the BG limit; the $\rho=0$ dashed vertical line corresponds to the $|q| \to \infty$ asymptotic behavior.}
\label{figRhoQ}
\end{figure}

Figure \ref{figRhoQ} reveals that $q\in(0,1)$ corresponds to physical systems in which $W(N)\sim W(1)N^{\rho}$, $\rho>1$ (i.e., the number of microstates as a function of the number of elements in the system grows as a convex power law). Therefore, the assumption that $q\in(0,1)$ is in accordance with a fairly large class of physical systems, although not with all. For instance, measures of the magnetic field in the solar wind indicate a negative value for an entropic index (i.e., $q<0$) \cite{BurlagaVinas2005} and many other such examples are available in the literature (see \cite{GazeauTsallis2019} and references therein).

In the next section, we will use Theorems \ref{theo1} and \ref{theo2} to analyze cases that have emerged in the literature on nonadditive entropies and a new one based on possible dualities between $q$ and $q^{\prime}$.

\section{How $(q,q^{\prime})$ influence the optimal probability distribution?}\label{sec3}

This section uses Theorems \ref{theo1} and \ref{theo2} to derive the solutions of (\ref{maxProblem}) for specific values of $q, q^{\prime}>0$. We remind that the $q$-exponential $\exp_{q}:\mathbb{R}\rightarrow[0,+\infty]$, $q\in\mathbb{R}$, is given by\footnote{We write $[z]_{+}\equiv\max\{z,0\}$, $z\in\mathbb{R}$.} 
\begin{eqnarray*}
    \exp_q(z)=[1+(1-q)z]_+^{\frac{1}{1-q}},
\end{eqnarray*}
for $z\in\mathbb{R}$. Whenever possible, connections between the optimal probability distribution from the generalized problem (\ref{maxProblem}) and the one from the Boltzmann-Gibbs case (\ref{maxProblemBG}) are highlighted. We start, for clarity, by revisiting the Boltzmann-Gibbs case.  

\subsection{The Boltzmann-Gibbs case ($q=q^\prime=1$)}\label{subsec30}
Let $U\in (e_{0},e_{W})$ and $\tilde{p}\in\mathbb{R}^W_{++}$ be the unique and strictly positive solution of (\ref{maxProblemBG}). First-order conditions are written
\begin{eqnarray*}
    -k(1+\ln \tilde{p}_{1},\ldots,1+\ln \tilde{p}_{W})=\lambda_{1}(1,\ldots,1)+\lambda_{2}(E_{1},\ldots, E_{W}),
\end{eqnarray*}
with $\lambda_{1},\lambda_{2}\in\mathbb{R}$ being the Lagrange multipliers. Therefore, $\tilde{p}_{i}=\exp(-1-\lambda_{1}/k-\lambda_{2} E_{i}/k)\propto\exp(-\beta_{1,1} E_{i})$, $1\leq i\leq W$, $\beta_{1,1} =\lambda_{2}/k$ (c.f., (\ref{eqBetaLambda2})). Furthermore, $\sum^{W}_{i=1}\tilde{p}_{i}E_{i}=0$ implies that
\begin{eqnarray}\label{eqFirstBG}
    \sum^{W}_{i=1}\exp(-\beta_{1,1} E_{i})E_{i}=0.
\end{eqnarray}

Notice that the structural parameter $\beta_{1,1}\in\mathbb{R}$ is completely determined by (\ref{eqFirstBG}). We refer to $\beta_{1,1}$ as \textit{structural} since it depends solely on the relative energy spectrum of the system $\{E_{i}\}_{1\leq i\leq W}$ through (\ref{eqFirstBG}). Let us emphasize, therefore, that $\beta_{1,1}$ does not depend on the constant $k>0$. This same pattern (i.e., a single equation defined solely by the relative energy spectrum fully determining the structural parameter and, therefore, the solution of (\ref{maxProblem})) will also emerge in Subsections \ref{subsec31}, \ref{subsec32} and \ref{subsec33}.

By solving (\ref{eqFirstBG}), one obtains the value of $\beta_{1,1} \in\mathbb{R}$ and, consequently,
\begin{eqnarray}\label{eqBGDistribution1}
    \tilde{p}_{i}=\frac{\exp(-\beta_{1,1} E_{i})}{\sum^{W}_{j=1}\exp(-\beta_{1,1} E_{j})},
\end{eqnarray}
for $1\leq i\leq W$. Also, notice that (\ref{eqBGDistribution1}) leads to the standard expression
\begin{eqnarray*}
        \tilde{p}_{i}=\frac{\exp(-\beta_{1,1} e_{i})}{\sum^{W}_{j=1}\exp(-\beta_{1,1} e_{j})}=\frac{\exp(-\beta_{1,1} e_{i})}{Z_{1,1}(\beta_{1,1})},
\end{eqnarray*}
for $1\leq i\leq W$, thus providing the celebrated Boltzmann-Gibbs factor, with $Z_{1,1}:\mathbb{R}\rightarrow\mathbb{R}_{++}$ the \textit{partition function}. Let $g_{1,1}:\mathbb{R}\times(e_{1},e_{W})\rightarrow\mathbb{R}$ be given by
\begin{eqnarray*}
    g_{1,1}(\beta,U)=\sum^{W}_{i=1}\exp(-\beta(e_{i}-U)).
\end{eqnarray*}
Notice that (\ref{eqFirstBG}) implies $\partial g_{1,1}(\beta_{1,1})/\partial \beta=0$ (i.e., $\beta_{1,1}$ is a critical point of $g_{1,1}(\cdot,U)$) and that
\begin{eqnarray}\label{eqPartitionFunctionBG}
    Z_{1,1}(\beta)=\exp(-\beta U)g_{1,1}(\beta,U),
\end{eqnarray}
for $\beta\in\mathbb{R}$, with this pattern also present in the generalized partition functions in Subsections \ref{subsec31}, \ref{subsec32} and \ref{subsec33}). The temperature satisfies (\ref{eqTemperatureGeneralized}), so that $1/T_{1,1}=k\beta_{1,1}$. Furthermore, the entropy is given by
\begin{eqnarray}\label{eqEntropyBG}
    S_{BG}(\tilde{p})=k(\beta U +\ln Z_{1,1}(\beta_{1,1}))=k \ln g_{1,1}(\beta_{1,1}).
\end{eqnarray}
Therefore, the Helmholtz free energy can be written as
\begin{eqnarray}\label{eqIdentityFZBG}
    F_{1,1}=-\frac{\ln Z_{1,1}(\beta_{1,1})}{\beta_{1,1}},
    \end{eqnarray}
and we also have the well-known identity
\begin{eqnarray}\label{eqIdentityUZBG}
    U=-\frac{\partial \ln Z_{1,1}(\beta_{1,1})}{\partial \beta}.
\end{eqnarray}

In the next subsections, we will show that, for specific values of $(q,q^{\prime})$, the solutions of the generalized problem (\ref{maxProblem}) can be found in closed form and follow the same pattern as this classical Boltzmann-Gibbs case.

\subsection{The linear constraint case ($q^{\prime}=1$)}\label{subsec31}

This case was originally handled in \cite{Tsallis1988}. For $q\in(0,1)$ and $q^{\prime}=1$, Theorem \ref{theo1} implies that there is a unique and strictly positive solution $\tilde{p}\in\mathbb{R}^{W}_{++}$ to (\ref{maxProblem}). For $q>1$, Theorem \ref{theo1} implies that there is a solution to (\ref{maxProblem}) and the next result characterizes it. 
\begin{proposition}\label{propQ1Case}
    Let $\tilde{p}\in\mathbb{R}^{W}_{+}$ be a solution of (\ref{maxProblem}) for $q>1$ and $q^{\prime}=1$. If $\tilde{p}_{i},\tilde{p}_{k}>0$ and $i<j<k$, then $\tilde{p}_{j}>0$.
\end{proposition}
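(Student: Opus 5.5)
We argue by contradiction with a local perturbation. Suppose $\tilde p$ solves (\ref{maxProblem}) with $q>1$, $q^{\prime}=1$, that $\tilde p_i,\tilde p_k>0$, $i<j<k$, and $\tilde p_j=0$. The plan is to build a feasible direction $d\in\mathbb{R}^W$, supported on $\{i,j,k\}$, which keeps both linear constraints and strictly increases $S_q$ for small steps. The key point is that for $q>1$ the partial derivative $\partial S_q/\partial p_l = kq\,p_l^{q-1}/(1-q)$ vanishes at $p_l=0$ and is strictly negative at $p_l>0$. So moving mass \emph{into} an empty state costs nothing to first order, while removing mass from occupied states gains something to first order.

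First, the direction. Since $q^{\prime}=1$, the constraints are $\sum d_l=0$ and $\sum d_l E_l=0$. We set $d_j=1$ and choose $d_i,d_k$ with $d_i+d_k=-1$ and $d_iE_i+d_kE_k=-E_j$. If $E_i<E_k$, this gives
\[
d_i=-\frac{E_k-E_j}{E_k-E_i},\qquad d_k=-\frac{E_j-E_i}{E_k-E_i}.
\]
Both are $\le 0$ because $E_i\le E_j\le E_k$, and their sum is $-1$, so at least one is strictly negative. If $E_i=E_k$, then $E_j=E_i$ as well, and we may simply take $d_i=-1$, $d_k=0$. Feasibility for small $t>0$ holds because $\tilde p_i,\tilde p_k>0$ and $d_j>0$, so $\tilde p+td\ge 0$.

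Second, the first-order gain. We have
\[
\frac{d}{dt}S_q(\tilde p+td)\Big|_{t=0^+}=\frac{kq}{1-q}\bigl(d_i\tilde p_i^{\,q-1}+d_k\tilde p_k^{\,q-1}+d_j\cdot 0\bigr).
\]
Here $kq/(1-q)<0$, each $d_i,d_k\le 0$, and at least one is strictly negative while multiplying a strictly positive power. The bracket is therefore strictly negative, and the whole derivative is strictly positive.

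The one technical point is differentiability at the boundary. Near $t=0$ the term $t^{q}$ in coordinate $j$ is differentiable with derivative $0$ because $q>1$, and the $i,k$ terms are smooth since those coordinates stay positive. Hence $S_q(\tilde p+td)>S_q(\tilde p)$ for small $t>0$, which contradicts optimality. This is also exactly where $q>1$ is used: for $q<1$ the derivative at $0$ is infinite and the statement is covered by Theorem \ref{theo1} instead.
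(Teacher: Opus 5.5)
Your proposal is correct and follows the paper's proof essentially verbatim: it uses the same three-coordinate perturbation, $\delta_i=\frac{E_j-E_k}{E_k-E_i}\delta_j$ and $\delta_k=\frac{E_i-E_j}{E_k-E_i}\delta_j$, and the same first-order comparison, in which the $\delta_j^{q}$ term (negligible since $q>1$) is dominated by the linear gain from removing mass at $i$ and $k$. The only difference is how ties are handled. You treat $E_j=E_i$, $E_j=E_k$ and $E_i=E_k$ directly, via the signs of $d_i,d_k$ and a separate direction when $E_i=E_k$. The paper instead first rules out ties using the fact that strict concavity forces equal-energy states to have equal optimal probabilities.
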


Proposition \ref{propQ1Case} states that, for $q>1$, the zero coordinates of an optimal solution $\tilde{p}\in\mathbb{R}^{W}_{+}$ can only be grouped at its edges. This implies that we can always find a strictly positive solution for (\ref{maxProblem}) after a suitable redefinition of the energy spectrum.

Then, let $\tilde{p}=\psi^{-1}_{q}(\tilde{x})\in\mathbb{R}^{W}_{++}$ be a strictly positive solution to (\ref{maxProblem}). Notice that (\ref{eqF}) implies
\begin{eqnarray*}
    f_{q,1}(\tilde{x})&=&(E_{1},\ldots,E_{W}),
\end{eqnarray*}
and (\ref{eqDiffeoX}) in Theorem \ref{theo2} allows us to write 
\begin{eqnarray}\label{eqXP1}
    \tilde{x}=1+(1-q)\beta_{q,1} (E_{1},\ldots,E_{W}).
\end{eqnarray}
Also, by (\ref{eqDiffeoSimplex}), the structural parameter $\beta_{q,1}\in\mathbb{R}$ satisfies
\begin{eqnarray}\label{eqBetaDiscrete}
     \sum^{W}_{i=1}(1+(1-q)\beta_{q,1}  E_{i})^{\frac{1}{q-1}}E_{i}=\sum^{W}_{i=1}\exp_{2-q}(-\beta_{q,1} E_{i})E_{i}=0.
\end{eqnarray}
Clearly, (\ref{eqBetaDiscrete}) generalizes (\ref{eqFirstBG}) and can also be written as
\begin{eqnarray}\label{eqBetaDiscreteOriginal1}
    \sum^{W}_{i=1}\exp_{2-q}(-\beta_{q,1} E_{i})=\sum^{W}_{i=1}\exp_{2-q}(-\beta_{q,1} E_{i})^{q}.
\end{eqnarray}
Since $\tilde{p}=\psi^{-1}(\tilde{x})$, (\ref{eqXP1}) and (\ref{eqBetaDiscreteOriginal1}) imply
\begin{eqnarray}\label{eqProbDiscrete1}
    \tilde{p}_{i}=\frac{(1+(1-q)\beta_{q,1} E_{i})^{\frac{1}{q-1}}}{\sum^{W}_{j=1}(1+(1-q)\beta_{q,1}  E_{j})^{\frac{q}{q-1}}}=\frac{\exp_{2-q}(-\beta_{q,1} E_{i})}{\sum^{W}_{j=1}\exp_{2-q}(-\beta_{q,1}  E_{j})},
\end{eqnarray}
for $1\leq i\leq W$, which recovers (\ref{eqBGDistribution1}) in the limit $q\rightarrow1$.

Notice that in the classical problem (\ref{maxProblemBG}) (i.e., $q=q^{\prime}=1$), the Boltzmann-Gibbs distribution is fully characterized by the relative energy spectrum and the structural parameter $\beta_{1,1}$ that solves (\ref{eqFirstBG}). When $q>0$, $q\neq1$, $q^\prime=1$, a similar phenomenon happens. The optimal probability distribution given by (\ref{eqProbDiscrete1}) is analogous to the Boltzmann-Gibbs distribution and is fully characterized by the relative energy spectrum and the structural parameter $\beta_{q,1}$ that solves (\ref{eqBetaDiscrete}). 

The following proposition characterizes $S_{q}(\tilde{p}(U))$ (i.e., the value function of (\ref{maxProblem})) for $q\in(0,1)$.

\begin{proposition}\label{propConcaveSq}
    For $q\in(0,1)$, $q^{\prime}=1$, let $S_{q}(\tilde{p}(U))$, $U\in (e_{1},e_{W})=(0,e_{W})$, be the value function of (\ref{maxProblem}). Then, $S_{q}(\tilde{p}(\cdot))$ is smooth and strictly concave, its maximum $k\ln_{q}W>0$ is attained at $\sum^{W}_{i=1}e_{i}/W\in(0,e_{W})$, $\lim_{U\rightarrow 0} S_{q}(\tilde{p}(U))=k\ln_{q}g(e_{1})$, $\lim_{U\rightarrow e_{W}}S_{q}(\tilde{p}(U))=k\ln_{q}g(e_{W})$ and
    \begin{eqnarray}\label{eqPropPartialDerivative}
        \lim_{U\rightarrow 0}\frac{\partial S_{q}(\tilde{p}(U))}{\partial U}=-\lim_{U\rightarrow e_{W}}\frac{\partial S_{q}(\tilde{p}(U))}{\partial U}=+\infty.
    \end{eqnarray}
\end{proposition}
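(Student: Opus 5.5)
The plan is to work throughout with the structural parameter $\beta_{q,1}$ from (\ref{eqBetaDiscrete}), regarded as a function of $U$ through $E_i=e_i-U$. We normalize $e_1=0$, as in the statement. By Theorem \ref{theo1}(iii), for every $U\in(0,e_W)$ the solution $\tilde p(U)$ is unique and strictly positive. Theorem \ref{theo2} then applies, and (\ref{eqXP1})–(\ref{eqProbDiscrete1}) describe $\tilde p(U)$ completely in terms of $\beta_{q,1}$.

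\emph{Concavity and the maximum.} Take $U_1\neq U_2$ with optimizers $p^1,p^2$, and $\lambda\in(0,1)$. Because the constraint is linear ($q'=1$), $\lambda p^1+(1-\lambda)p^2$ is feasible for $\lambda U_1+(1-\lambda)U_2$. Since $p^1\neq p^2$ (their means differ) and $S_q$ is strictly concave on the simplex for $q>0$,
\[
S_q(\tilde p(\lambda U_1+(1-\lambda)U_2))\geq S_q(\lambda p^1+(1-\lambda)p^2)>\lambda S_q(p^1)+(1-\lambda)S_q(p^2).
\]
This gives strict concavity. Without the energy constraint, the maximum of $S_q$ on the simplex is uniquely attained at the uniform distribution, with value $k\ln_q W>0$. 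That distribution is feasible exactly when $U=\sum_i e_i/W$, which lies in $(0,e_W)$ unless all levels coincide; this gives the location and value of the maximum.

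\emph{Smoothness.} Define $G(\beta,U)=\sum_i \exp_{2-q}(-\beta(e_i-U))(e_i-U)$. Using $\frac{d}{dz}\exp_{2-q}(z)=\exp_{2-q}(z)^{2-q}$, we get
\[
\partial_\beta G=-\sum_i\exp_{2-q}(-\beta E_i)^{2-q}E_i^2<0
\]
on the region where all bases $1+(1-q)\beta E_i$ are positive. The implicit function theorem then makes $\beta_{q,1}(U)$ smooth, and hence $\tilde p(U)$ via (\ref{eqProbDiscrete1}) and $S_q(\tilde p(U))$ are smooth. The same monotonicity in $\beta$, together with $\beta=0$ at the mean energy, shows that $\beta_{q,1}>0$ for $U$ below the mean and $\beta_{q,1}<0$ above it.

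\emph{Boundary values.} As $U\to 0$, compactness of the simplex lets us extract limit points of $\tilde p(U)$. Any such limit has mean $e_1$, so it is supported on the ground level. Continuity of $S_q$ then gives $\limsup S_q(\tilde p(U))\le k\ln_q g(e_1)$. For the matching lower bound, we use the feasible point obtained by mixing the uniform distribution on the ground level with a small weight $\epsilon(U)=U/e_W$ on state $W$. Its entropy tends to $k\ln_q g(e_1)$, and since $\tilde p(U)$ is optimal, $\liminf S_q(\tilde p(U))\ge k\ln_q g(e_1)$. The argument at $U\to e_W$ is symmetric.

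\emph{Infinite slopes (main obstacle).} With $q'=1$, (\ref{eqEnvelope2}) reduces to
\[
\partial S_q/\partial U=kq\,\beta_{q,1}\sum_i\tilde p_i^q .
\]
Here $\sum_i\tilde p_i^q\ge 1$ for $q<1$, so it suffices to show $\beta_{q,1}(U)\to+\infty$ as $U\to 0$; the differentiability needed to invoke (\ref{eqEnvelope2}) comes from the smoothness step. Suppose instead that $\beta_{q,1}$ stays bounded along some sequence $U_n\to 0$, and pass to a limit $\beta^*\in[0,\infty)$. For $q<1$ and $\beta\ge 0$ every weight $\exp_{2-q}(-\beta E_i)$ is strictly positive and continuous in $(\beta,U)$. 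Passing to the limit in (\ref{eqBetaDiscrete}) would therefore give $\sum_i \exp_{2-q}(-\beta^* e_i)e_i=0$ with all weights positive. But $e_i\ge 0$ and $e_W>0$, so this sum is strictly positive, a contradiction.

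Hence $\beta_{q,1}\to+\infty$ and the derivative diverges to $+\infty$. By symmetry, $\beta_{q,1}\to-\infty$ and the derivative tends to $-\infty$ as $U\to e_W$. This establishes (\ref{eqPropPartialDerivative}). Concavity ensures the one-sided limits exist, which is consistent with these divergences. The delicate point is uniform positivity of the $q$-exponential weights along the sequence; this is exactly where $q\in(0,1)$ and $\beta\ge0$ are used.
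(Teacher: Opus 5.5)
Your proof is correct, but it takes a different route from the paper's in almost every step.

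\textbf{Concavity.} The paper differentiates twice. It uses $\nabla S_q(\tilde p)=\lambda_1 1+\lambda_2(e-U)$ and $1\cdot\tilde p''=e\cdot\tilde p''=0$ to get $v''=(\tilde p'\,\mathbf{H}S_q)\cdot\tilde p'<0$. Your convex-combination argument is more elementary and needs neither smoothness nor the multipliers.

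\textbf{Smoothness.} The paper simply asserts it from (\ref{eqProbDiscrete1}). Your implicit-function argument with $\partial_\beta G<0$ actually justifies it, and it also gives you the sign of $\beta_{q,1}$.

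\textbf{Boundary values.} The paper uses the bound $1\ge g(e_1)\tilde p_1(U)\ge (e_n-U)/e_n$, which relies on equal probabilities within a degenerate level. You instead combine compactness with an explicit feasible mixture.

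\textbf{Infinite slopes.} This is the main difference. The paper compares $v$ with the value function $\tilde v$ of an auxiliary two-level problem restricted to $e_1$ and the first excited level $e_n$. It notes $\tilde v\le v$, $\tilde v(0)=\lim_{U\to0}v(U)=k\ln_q g(e_1)$ and $\tilde v'(0^+)=+\infty$, and lets concavity of $v$ finish. That argument never uses Theorem \ref{theo2}. You go through the envelope formula $v'=kq\,\beta_{q,1}\sum_i\tilde p_i^q$ and prove $\beta_{q,1}\to\pm\infty$ directly from the structural equation. This costs more machinery, but it yields the divergence of $\beta_{q,1}$ as a primary result, which the paper only infers afterwards.

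\textbf{Two phrasing fixes.}
\begin{itemize}
\item The sign of $\beta_{q,1}$ follows most cleanly from $G(0,U)=W\bigl(\sum_i e_i/W-U\bigr)$ together with strict decrease of $G(\cdot,U)$ on the $\beta$-interval where all bases $1+(1-q)\beta E_i$ are positive. ``Together with $\beta=0$ at the mean energy'' is vaguer than what you need.
\item Your claim that for $\beta\ge0$ every weight $\exp_{2-q}(-\beta E_i)$ is positive and continuous is false when $E_i<0$, since the base $1-(1-q)\beta|E_i|$ can vanish. What you actually use is that at the limit point $(\beta^*,0)$ all bases equal $1+(1-q)\beta^*e_i\ge1$. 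Hence $G$ is continuous in a neighborhood of that point, which is all the contradiction argument requires.
\end{itemize}
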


Proposition \ref{propConcaveSq} reveals that $S_{q}(\tilde{p}(\cdot))$, $q\in(0,1)$, is a strictly concave function that attains its maximum at $\sum^{W}_{i=1}e_{i}/W$. If $g(e_{1})=g(e_{W})=1$, the entropy reaches zero in both sides of the energy spectrum, since these points correspond to fully ordered systems with minimum or maximum energy. Furthermore, (\ref{eqEnvelope2}) and (\ref{eqPropPartialDerivative}) imply that the structural parameter $\beta_{q,1}$, $q\in(0,1)$, varies from $+\infty$ to $0$ when the internal energy increases from $0$ to $\sum^{W}_{i=1}e_{i}/W$. When moving the internal energy from $\sum^{W}_{i=1}e_{i}/W$ to $e_{W}$, the structural parameter becomes negative and varies from $0$ to $-\infty$. Let us mention that, to the best of our knowledge, this is the first time Proposition \ref{propConcaveSq} is proven, thus highlighting the common and well-suited behavior of the family of value functions $\{S_{q}(\tilde{p}(\cdot))\}_{0<q<1}$. 

Furthermore, Proposition \ref{propConcaveSq} and (\ref{eqClausiusGeneralized}) imply that when $T_{q,1}\rightarrow0^{+}$ we have $S_{q}\rightarrow k\ln_{q}g(e_{1})$, i.e., the entropy converges towards its minimum value (at least in the range of positive temperatures, since to be a global minimum, this value must still be compared with the one attained at the other extreme of the energy spectrum; i.e., when $T_{q,1}\rightarrow0^{-}$, we have $S_{q}\rightarrow k\ln_{q}g(e_{W})$). We conclude that, for $q\in(0,1)$, this linear constraint case is compatible with the Third Law of Thermodynamics.

The following example illustrates Proposition \ref{propConcaveSq}.

\begin{example}
Let the states be $0=e_{1}<e_{2}=0.5<e_{3}=1$ and $U\in[0,1]$, so that (\ref{maxProblem}) can be written as
\begin{eqnarray*}
    \max_{0\leq p_{1}\leq 1}\frac{k}{(1-q)}\biggr(p_{1}^{q}+\biggr(\frac{(1-U)-p_{1}}{1-e_{2}}\biggr)^{q}+\biggr(\frac{U-e_{2}+p_{1}e_{2}}{1-e_{2}}\biggr)^{q}-1 \biggr).
\end{eqnarray*}
For $k=1$, the family of optimal-value functionals $\{S_{q}(\tilde{p}(\cdot))\}_{0<q<1}$ is depicted in Figure \ref{figSqOptimal}.
\begin{figure}[H]
\centering
\includegraphics[width=0.7\linewidth]{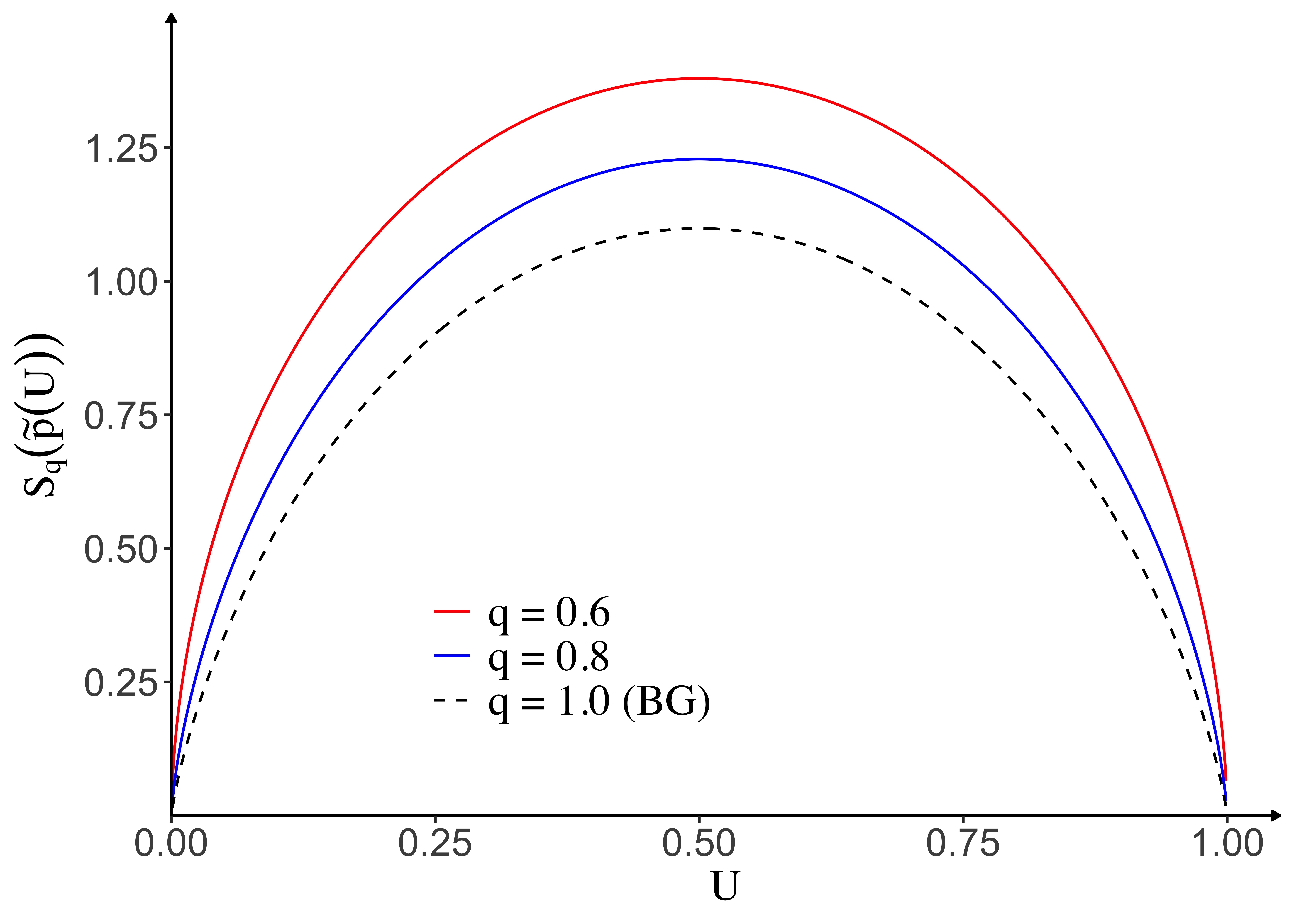}
            \caption{Graphic depiction of $S_{q}(\tilde{p}(U))$, $U\in[0,1]$, $k=1$, for $q=0.6, 0.8$, and the limiting Boltzmann-Gibbs (BG) case (i.e., $q\rightarrow 1$). Notice that both $\beta\rightarrow0^{+}$ and $\beta\rightarrow0^{-}$ correspond to $U=0.5$ for all values of $q$, whereas the limit $1/\beta\rightarrow 0^{\pm}$ is physically inaccessible, as well known. Let us mention that, when $q>1$, (\ref{eqPropPartialDerivative}) is violated.}
            \label{figSqOptimal}
\end{figure}
\end{example}
We define the generalized partition function $Z_{q,1}:\mathbb{R}\times(e_{1},e_{W})\rightarrow[0,+\infty]$ by
\begin{eqnarray}\label{eqPartition1}
    Z_{q,1}(\beta,U)=\exp_{2-q}(-q\beta U)\otimes_{2-q}g_{q,1}(\beta,U),
\end{eqnarray}
with $g_{q,1}:\mathbb{R}\times(e_{1},e_{W})\rightarrow[0,+\infty]$ given by
\begin{eqnarray*}
    g_{q,1}(\beta,U)=\sum^{W}_{i=1}\exp_{2-q}(-\beta (e_{i}-U))^{q},
\end{eqnarray*}
and the generalized product $\otimes_{2-q}$ denoting\footnote{This generalized product was introduced independently by \textcite{Nivanen2003} and \textcite{Borges2004}. See \textcite{DogniniTsallis2025} for a discussion on it.} $x\otimes_{2-q}y\equiv[x^{q-1}+y^{q-1}-1]_{+}^{\frac{1}{q-1}}$, $x,y\geq0$. Notice that (\ref{eqBetaDiscrete}) implies that $\beta_{q,1}\in\mathbb{R}$ is a critical point of $g_{q,1}(\cdot,U)$ (i.e., $\partial g_{q,1}(\beta_{1,1},U)/\partial \beta=0$) and (\ref{eqProbDiscrete1}) implies
\begin{eqnarray*}
    g_{q,1}(\beta_{q,1},U)=\biggr(\sum^{W}_{i=1}\tilde{p}_{i}^{\,q}\biggr)^{\frac{1}{1-q}}.
\end{eqnarray*}
Then, the temperature given by (\ref{eqTemperatureGeneralized}) simplifies to
\begin{equation*}
\frac{1}{T_{q,1}}= 
kq\biggr(\sum^{W}_{i=1}\tilde{p}_{i}^{\,q}\biggr)\beta_{q,1}=kqg_{q,1}(\beta_{q,1},U)^{1-q}\beta_{q,1},
\end{equation*}
and the entropy is given by
\begin{eqnarray*}
    S_{q}(\tilde{p})=k\ln_{q}g_{q,1}(\beta_{q,1},U).
\end{eqnarray*}
Therefore, the Helmholtz energy given by (\ref{eqHelmholtzGeneralized}) satisfies
\begin{eqnarray}\label{eqHelmholtzBG1}
    F_{q,1}=U- \frac{g_{q,1}(\beta_{q,1},U)^{q-1}\ln_{q}g_{q,1}(\beta_{q,1},U)}{q\beta_{q,1}}
    =-\frac{\ln_{2-q}Z_{q,1}(\beta_{q,1},U)}{q\beta_{q,1}},
\end{eqnarray}
where the last equality assumes that
\begin{eqnarray*}
    \ln_{2-q} Z_{q,1}(\beta_{q,1},U)=-q\beta_{q,1}U+\ln_{2-q}g_{q,1}(\beta_{q,1},U).
\end{eqnarray*}
The next lemma provides a sufficient condition for this assumption to hold.
\begin{lemma}\label{LemmaDomainQ1}
    Let $q\in(0,1)$ and $U\in(e_{1},e_{W})=(0,e_{W})$. If $\beta>0$, then
    \begin{eqnarray}\label{eqLnQFactors}
        \ln_{2-q} Z_{q,1}(\beta,U)=-q\beta U +\ln_{2-q} g_{q,1}(\beta,U).
    \end{eqnarray}
\end{lemma}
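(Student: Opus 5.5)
The plan is to unwind the definitions of $\exp_{2-q}$, $\ln_{2-q}$ and $\otimes_{2-q}$ with the exponent $q-1<0$ kept in view throughout. The identity (\ref{eqLnQFactors}) then reduces to the pseudo-additivity $\ln_{2-q}(x\otimes_{2-q}y)=\ln_{2-q}x+\ln_{2-q}y$. This holds exactly when the bracket $[x^{q-1}+y^{q-1}-1]_{+}$ is not truncated, i.e.\ when $x^{q-1}+y^{q-1}-1>0$. So the whole proof amounts to checking this positivity for $x=\exp_{2-q}(-q\beta U)$ and $y=g_{q,1}(\beta,U)$, using $\beta>0$.

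\textbf{Step 1: the first factor.} Set $a=\exp_{2-q}(-q\beta U)=[1+(1-q)q\beta U]_{+}^{\frac{1}{q-1}}$. Since $q\in(0,1)$, $\beta>0$ and $U>e_{1}=0$, the bracket exceeds $1$. Hence $a\in(0,1)$ is finite and strictly positive, with $a^{q-1}=1+(1-q)q\beta U$. Because no truncation occurs, $\ln_{2-q}a=(a^{q-1}-1)/(q-1)=-q\beta U$. Equivalently, $\ln_{2-q}$ inverts $\exp_{2-q}$ here. The key quantity to retain is $a^{q-1}-1=(1-q)q\beta U>0$.

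\textbf{Step 2: the second factor.} Next I control $g=g_{q,1}(\beta,U)\in[0,+\infty]$. Each term is $\exp_{2-q}(-\beta E_{i})^{q}$ with
\[
\exp_{2-q}(-\beta E_{i})=[1+(1-q)\beta E_{i}]_{+}^{\frac{1}{q-1}} .
\]
Since $U<e_{W}$, we have $E_{W}>0$, so the $W$-th term is finite and positive, and therefore $g>0$. For indices with $E_{i}<0$ the bracket may vanish, in which case the term, and hence $g$, equals $+\infty$ under the convention $0^{\frac{1}{q-1}}=+\infty$. In every case $g^{q-1}\in[0,+\infty)$, with $g^{q-1}=0$ exactly when $g=+\infty$. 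Moreover, $\ln_{2-q}g=(g^{q-1}-1)/(q-1)$ is well defined, equal to $1/(1-q)$ when $g=+\infty$.

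\textbf{Step 3: combine.} Steps 1 and 2 give
\[
a^{q-1}+g^{q-1}-1=(1-q)q\beta U+g^{q-1}>0 .
\]
So the positive part in the definition of $\otimes_{2-q}$ is inactive, $Z_{q,1}(\beta,U)$ is finite and positive, and $Z_{q,1}^{q-1}=a^{q-1}+g^{q-1}-1$. Therefore
\[
\ln_{2-q}Z_{q,1}=\frac{a^{q-1}-1}{q-1}+\frac{g^{q-1}-1}{q-1}=-q\beta U+\ln_{2-q}g_{q,1}(\beta,U).
\]

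The only delicate point is Step 2: handling the extended-value conventions when some $1+(1-q)\beta E_{i}\le 0$, so that $g=+\infty$. I would treat that case separately, checking that both sides equal $-q\beta U+1/(1-q)$. This also makes clear where $\beta>0$ is used. It forces $a^{q-1}>1$, which guarantees positivity of the bracket regardless of how small $g^{q-1}$ is. For $\beta<0$ the bracket could be truncated, and the identity could then fail.
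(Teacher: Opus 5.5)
Your proof is correct and follows essentially the same route as the paper. In both, $\beta>0$ together with $U>0$ and $E_W>0$ gives $\exp_{2-q}(-q\beta U)^{q-1}-1=(1-q)q\beta U>0$ and $g_{q,1}(\beta,U)>0$, so the bracket $[(1-q)q\beta U+g_{q,1}(\beta,U)^{q-1}]_+$ in $\otimes_{2-q}$ is never truncated and the $q$-logarithms add; your explicit treatment of the case $g_{q,1}(\beta,U)=+\infty$ (with the convention $\ln_{2-q}(+\infty)=1/(1-q)$) is more careful than the paper, which passes over it silently.
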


Theorem \ref{theo2}, Proposition \ref{propConcaveSq}, and Lemma \ref{LemmaDomainQ1} imply that, for $q\in(0,1)$ and $U<\sum^{W}_{i=1}e_{i}/W$, $\beta_{q,1}>0$. Therefore, for $q\in(0,1)$, (\ref{eqHelmholtzBG1}) is valid for all positive temperatures and generalizes (\ref{eqIdentityFZBG}).

The following result relates to the generalization of (\ref{eqIdentityUZBG}). Before stating it, however, let $O_{q,1}=\textrm{int}\{(\beta,U)\in\mathbb{R}\times(e_{1},e_{W})\mid \ln_{2-q} Z_{q,1}(\beta,U)=-q\beta U +\ln_{2-q} g_{q,1}(\beta,U)\}$ (i.e., $O_{q,1}\subseteq\mathbb{R}\times(e_{1},e_{W})$ is the interior of the set in which (\ref{eqLnQFactors}) holds).

\begin{lemma}\label{LemmaNablaQ1}
If $(\beta_{q,1},U)\in O_{q,1}$ then $\nabla\ln_{2-q}Z_{q,1}(\beta_{q,1},U)=(-qU, q(q-1)\beta_{q,1}\ln_{2-q}g_{q,1}(\beta_{q,1},U))$.
\end{lemma}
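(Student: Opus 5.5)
The plan is to exploit that $O_{q,1}$ is open. If $(\beta_{q,1},U)\in O_{q,1}$, then identity (\ref{eqLnQFactors}) holds on a whole neighbourhood of that point. So $\nabla\ln_{2-q}Z_{q,1}$ there equals the gradient of the right-hand side $-q\beta U+\ln_{2-q}g_{q,1}(\beta,U)$. This reduces the lemma to differentiating an explicit function and then evaluating at $\beta=\beta_{q,1}$ using the two characterizations of the structural parameter, (\ref{eqBetaDiscrete}) and (\ref{eqBetaDiscreteOriginal1}).

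First I will check smoothness near the point. At $\beta=\beta_{q,1}$, (\ref{eqXP1}) gives $1+(1-q)\beta_{q,1}E_i=\tilde{x}_i>0$ for every $i$. By continuity the bases $1+(1-q)\beta(e_i-U)$ stay strictly positive in a neighbourhood, so the brackets $[\cdot]_+$ in $\exp_{2-q}$ are inactive there. Then each $\exp_{2-q}(-\beta(e_i-U))$ is smooth, $g_{q,1}>0$, and $\ln_{2-q}g_{q,1}=(g_{q,1}^{q-1}-1)/(q-1)$ is smooth, with $\partial\ln_{2-q}g_{q,1}=g_{q,1}^{q-2}\,\partial g_{q,1}$.

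Next I will compute the partial derivatives of $g_{q,1}$. Since $\frac{d}{dz}\exp_{2-q}(z)=\exp_{2-q}(z)^{2-q}$, the chain rule gives
\begin{eqnarray*}
\frac{\partial g_{q,1}}{\partial U}=q\beta\sum_{i=1}^{W}\exp_{2-q}(-\beta E_i)^{q-1}\exp_{2-q}(-\beta E_i)^{2-q}=q\beta\sum_{i=1}^{W}\exp_{2-q}(-\beta E_i),
\end{eqnarray*}
with $E_i=e_i-U$. Similarly,
\begin{eqnarray*}
\frac{\partial g_{q,1}}{\partial \beta}=-q\sum_{i=1}^{W}\exp_{2-q}(-\beta E_i)E_i,
\end{eqnarray*}
which vanishes at $\beta_{q,1}$ by (\ref{eqBetaDiscrete}). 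This is the critical-point property already noted in the text.

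Evaluating at $\beta=\beta_{q,1}$ then gives the two components. The $\beta$-component is $-qU+g_{q,1}^{q-2}\cdot 0=-qU$. For the $U$-component, (\ref{eqBetaDiscreteOriginal1}) gives $\sum_i\exp_{2-q}(-\beta_{q,1}E_i)=\sum_i\exp_{2-q}(-\beta_{q,1}E_i)^q=g_{q,1}$. Hence the $U$-component is
\begin{eqnarray*}
-q\beta_{q,1}+g_{q,1}^{q-2}\,q\beta_{q,1}\,g_{q,1}=q\beta_{q,1}\bigl(g_{q,1}^{q-1}-1\bigr)=q(q-1)\beta_{q,1}\ln_{2-q}g_{q,1}(\beta_{q,1},U),
\end{eqnarray*}
as claimed.

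The only delicate point is the first step: the identity must hold on an open set rather than just at the point, and the positive parts must be inactive so that the derivatives exist. Both follow from the definition of $O_{q,1}$ as an interior together with the strict positivity of $\tilde{x}$.
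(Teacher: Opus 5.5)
Your proof is correct and follows essentially the same route as the paper: on the open set $O_{q,1}$ you differentiate the factorized form $-q\beta U+\ln_{2-q}g_{q,1}(\beta,U)$, the $\beta$-component reduces to $-qU$ by (\ref{eqBetaDiscrete}), and the $U$-component simplifies by (\ref{eqBetaDiscreteOriginal1}). Your write-up is actually more careful than the paper's in two places: you justify differentiability through the positivity of $\tilde{x}_i=1+(1-q)\beta_{q,1}E_i$, and your sign $\partial g_{q,1}/\partial\beta=-q\sum_i\exp_{2-q}(-\beta E_i)E_i$ is the right one, whereas the paper writes $+q$ (harmless, since this term vanishes at $\beta_{q,1}$).
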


Lemma (\ref{LemmaNablaQ1}) implies that
\begin{eqnarray*}
     U=-\frac{1}{q}\frac{\partial \ln_{2-q}Z_{q,1}(\beta_{q,1},U)}{\partial \beta}\equiv -\frac{1}{q}\frac{\partial \ln_{2-q}Z_{q,1}(\beta,U)}{\partial \beta}\biggl|_{\beta=\beta_{q,1}},
\end{eqnarray*}
which is a first from to generalize (\ref{eqIdentityUZBG}). There is, however, a second form. The Implicit Function Theorem applied to (\ref{eqBetaDiscrete}) allows us to write $U(\beta_{q,1})$ (i.e., the internal energy as a function of the structural parameter) and, therefore, we can also write $Z_{q,1}(\beta_{q,1})\equiv Z_{q,1}(\beta_{q,1},U(\beta_{q,1}))$. Then, we have
\begin{eqnarray*}
    \frac{\partial \ln_{2-q}Z_{q,1}(\beta_{q,1})}{\partial \beta_{q,1}}&=&\nabla \ln_{2-q}Z_{q,1}(\beta_{q,1},U(\beta_{q,1}))\cdot (1, U^{\prime}(\beta_{q,1}))\\
    &=&-qU(\beta_{q,1})+q(q-1)\beta_{q,1}U^{\prime}(\beta_{q,1})\ln_{2-q}g_{q,1}(\beta_{q,1},U(\beta_{q,1})),
\end{eqnarray*}
which also generalizes (\ref{eqIdentityUZBG}). Clearly, when $q=1$, the two generalizations of (\ref{eqIdentityUZBG}) become equivalent.

\subsection{The case $q^{\prime}=q$}\label{subsec32}
This case was originally handled in \cite{TsallisMendesPlastino1998}. For $q>1$, Theorem \ref{theo1} implies that there is a strictly positive solution $\tilde{p}\in\mathbb{R}^{W}_{++}$ to (\ref{maxProblem}). For $q\in(0,1)$, Theorem \ref{theo1} implies that there is a solution to (\ref{maxProblem}) and the next result characterizes it.

\begin{proposition}\label{propQQCase}
    Let $\tilde{p}\in\mathbb{R}^{W}_{+}$ be a solution of (\ref{maxProblem}) for $q=q^{\prime}\in(0,1)$. If $\tilde{p}_{i},\tilde{p}_{k}>0$ and $i<j<k$, then $\tilde{p}_{j}>0$.
\end{proposition}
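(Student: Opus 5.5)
The plan is to change variables so that the singular derivative of $p_{j}^{q}$ at $p_{j}=0$ becomes harmless, and then to run a one-sided Karush–Kuhn–Tucker argument. Since $q\in(0,1)$, maximizing $S_{q}$ is equivalent to maximizing $\sum_{l}p_{l}^{q}$. Set $y_{l}=p_{l}^{q}\geq 0$. Problem (\ref{maxProblem}) with $q^{\prime}=q$ then reads
\begin{eqnarray*}
\max \sum_{l} y_{l}\quad\text{s.t.}\quad \sum_{l}y_{l}E_{l}=0,\quad \sum_{l}y_{l}^{1/q}=1,\quad y_{l}\geq0.
\end{eqnarray*}
Here the objective and the energy constraint are linear in $y$. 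The normalization function $y\mapsto y^{1/q}$ is $C^{1}$ on $[0,\infty)$ with derivative $0$ at $y=0$, because $1/q>1$. So the new problem is smooth up to the boundary face $\{y_{j}=0\}$, and standard first-order necessary conditions with a sign condition on the inactive coordinate are available.

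Suppose, for contradiction, that $\tilde p_{i},\tilde p_{k}>0$, $i<j<k$ and $\tilde p_{j}=0$, so that $E_{i}\leq E_{j}\leq E_{k}$. Apply Lagrange/KKT at $\tilde y$ with multipliers $\mu$ (energy) and $\nu$ (normalization), writing $a_{l}=\tilde y_{l}^{1/q-1}/q=\tilde p_{l}^{\,1-q}/q>0$ for positive coordinates. Every coordinate $l$ with $\tilde y_{l}>0$ satisfies
\begin{eqnarray*}
1=\mu E_{l}+\nu a_{l}.
\end{eqnarray*}
The zero coordinate $j$ only yields the one-sided condition $1-\mu E_{j}\leq 0$, since increasing $y_{j}$ from $0$ must not improve the Lagrangian, and the normalization term contributes nothing there.

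The sign of $\nu$ is determined by multiplying the equality by $\tilde y_{l}$ and summing over the support. Using $\sum \tilde y_{l}E_{l}=0$ and $\tilde y_{l}a_{l}=\tilde p_{l}/q$, this gives $\sum_{l}\tilde y_{l}=\nu/q$, hence $\nu=q\sum_{l}\tilde p_{l}^{\,q}>0$. Consequently $\mu E_{i}=1-\nu a_{i}<1$ and $\mu E_{k}<1$. Because $E_{j}$ lies between $E_{i}$ and $E_{k}$, the quantity $\mu E_{j}$ lies between $\mu E_{i}$ and $\mu E_{k}$, so $\mu E_{j}<1$. This contradicts $\mu E_{j}\geq1$ and proves the proposition.

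The main obstacle is justifying the multipliers, that is, a constraint qualification at $\tilde y$. I would first try the Fritz John form: there exist $(\lambda_{0},\mu,\nu)\neq0$, $\lambda_{0}\geq0$, with $\lambda_{0}=\mu E_{l}+\nu a_{l}$ on the support and $\lambda_{0}-\mu E_{j}\leq0$. If $\lambda_{0}>0$, normalize to the argument above. If $\lambda_{0}=0$, then $\mu E_{l}+\nu a_{l}=0$ on the support. The same summation trick gives $\nu\sum_{l}\tilde p_{l}/q=0$, so $\nu=0$, hence $\mu\neq0$ and $E_{l}=0$ on the support. In that degenerate case, $E_{i}=E_{k}=0$ forces $E_{j}=0$. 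One can then exhibit directly a feasible improvement: moving a small amount of mass from $p_{i}$ to $p_{j}$ keeps $\sum p_{l}^{q}E_{l}=0$ and strictly increases $\sum p_{l}^{q}$ by strict concavity of $t\mapsto t^{q}$. This again gives a contradiction.
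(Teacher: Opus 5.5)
Your argument is correct, and it takes a genuinely different route from the paper's.

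\textbf{The paper's route.} The paper works purely with explicit perturbations. It first invokes the claim that equal energies force equal optimal probabilities, which gives $E_i<E_j<E_k$. It then splits into the cases $E_j=0$, $E_j<0$ and $E_j>0$. In each case it builds a feasible curve (by the Implicit Function Theorem when $E_j=0$, by the Intermediate Value Theorem otherwise) that moves mass into the empty state $j$ while keeping $\sum_l p_l^{q}E_l=0$. The gain of order $\delta^{q}$ then beats losses of order $\delta$.

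\textbf{Your route.} The substitution $y=p^{q}$ removes the singular derivative $p^{q-1}$ at the boundary and makes both the objective and the energy constraint linear. The remaining normalization is $C^{1}$ up to $y_j=0$; to apply Fritz John formally, extend it by $|y|^{1/q}$ to negative arguments. The Euler-type summation gives $\nu=q\sum_l\tilde p_l^{\,q}>0$, and the betweenness of $\mu E_j$ then settles everything in one step. You also handle the degenerate case $\lambda_0=0$ correctly, reducing it to $E_i=E_j=E_k=0$ and disposing of it by subadditivity of $t\mapsto t^{q}$.

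\textbf{What each buys.} Your argument is shorter, needs no case split on the sign of $E_j$, and requires only $E_i\le E_j\le E_k$. So it does not depend on the equal-energies-equal-probabilities step, which the paper justifies only by strict concavity of $S_q$, even though the feasible set in $p$ is not convex. Your substitution actually repairs that step as well:
\begin{itemize}
\item Relax the normalization to $\sum_l y_l^{1/q}\le 1$. The problem becomes maximizing a linear functional over a convex set.
\item By homogeneity (scale $y$ up), any maximizer lies on $\sum_l y_l^{1/q}=1$.
\item Strict convexity of $y\mapsto\sum_l y_l^{1/q}$ then rules out two distinct maximizers, so the optimum is unique and equal energies get equal probabilities.
\end{itemize}
In exchange, the paper's route is fully elementary and constructive and never appeals to multiplier theorems or constraint qualifications.
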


Proposition \ref{propQQCase} is analogous to Proposition \ref{propQ1Case} and states that, for $q\in(0,1)$, the zero coordinates of $\tilde{p}$ can only be grouped at its edges. This implies that we can always find a strictly positive solution for (\ref{maxProblem}) after a suitable redefinition of the energy spectrum.

Then, let $\tilde{p}=\psi_{q}^{-1}(\tilde{x})\in\mathbb{R}^{W}_{++}$ be a strictly positive solution to (\ref{maxProblem}). Notice that (\ref{eqF}) implies 
\begin{eqnarray*}
    f_{q,q}(\tilde{x})&=&(E_{1}\tilde{x}_{1},\ldots,E_{W}\tilde{x}_{W}),
\end{eqnarray*}
and (\ref{eqDiffeoX}) in Theorem \ref{theo2} allows us to write 
\begin{eqnarray}\label{eqXP2}
     \tilde{x}_{i}=\frac{1}{1-(1-q)\beta_{q,q} E_{i}},
\end{eqnarray}
for $1\leq i\leq W$. Also, by (\ref{eqDiffeoSimplex}), the structural parameter $\beta_{q,q}\in\mathbb{R}$ satisfies
\begin{eqnarray}\label{eqBetaDiscrete2}
     \sum^{W}_{i=1}(1-(1-q)\beta_{q,q}  E_{i})^{\frac{q}{1-q}}E_{i}
     =\sum^{W}_{i=1}\exp_{q}(-\beta_{q,q} E_{i})^{q}E_{i}
     =0.
\end{eqnarray}
Clearly, (\ref{eqBetaDiscrete2}) generalizes (\ref{eqFirstBG}) and can also be written as
\begin{eqnarray}\label{eqBetaDiscreteOriginal2}
    \sum^{W}_{i=1}\exp_{q}(-\beta_{q,q}E_{i})=\sum^{W}_{i=1}\exp_{q}(-\beta_{q,q}E_{i})^{q}.
\end{eqnarray}
Since $\tilde{p}=\psi^{-1}(\tilde{x})$, (\ref{eqXP2}) and (\ref{eqBetaDiscreteOriginal2}) imply
\begin{eqnarray}\label{eqProbDiscrete2}
    \tilde{p}_{i}=\frac{(1-(1-q)\beta_{q,q} E_{i})^{\frac{1}{1-q}}}{\sum^{W}_{j=1}(1-(1-q)\beta_{q,q}  E_{j})^{\frac{q}{1-q}}}=\frac{\exp_{q}(-\beta_{q,q}  E_{i})}{\sum^{W}_{j=1}\exp_{q}(-\beta_{q,q}  E_{j})},
\end{eqnarray}
for $1\leq i\leq W$\footnote{Let us mention that this $q$-exponential form of the optimal probabilities has also been found by assuming $q=q^{\prime}$ and $q^{\prime\prime}=1$ in the generalized problem (\ref{maxProblemGeneral}) \cite{CuradoTsallis_1991}.}, which recovers (\ref{eqBGDistribution1}) in the limit $q\rightarrow1$.

 We define the generalized partition function $Z_{q,q}:\mathbb{R}\times(e_{1},e_{W})\rightarrow[0,+\infty]$ by
\begin{eqnarray}\label{eqPartition2}
    Z_{q,q}(\beta,U)=\exp_{2-q}(-\beta U)\otimes_{2-q} g_{q,q}(\beta,U),
\end{eqnarray}
with $g_{q,q}:\mathbb{R}\times(e_{1},e_{W})\rightarrow [0,+\infty]$ given by
\begin{eqnarray*}
    g_{q,q}(\beta,U)=\sum^{W}_{i=1}\exp_{q}(-\beta(e_{i}-U)).
\end{eqnarray*}
Notice that (\ref{eqBetaDiscrete2}) implies that $\beta_{q,q}\in\mathbb{R}$ is a critical point of $g_{q,q}(\cdot,U)$ and (\ref{eqProbDiscrete2}) implies
\begin{eqnarray*}
    g_{q,q}(\beta_{q,q},U)=\biggr(\sum^{W}_{i=1}\tilde{p}_{i}^{\,q}\biggr)^{\frac{1}{1-q}}.
\end{eqnarray*}
Then, the temperature given by (\ref{eqTemperatureGeneralized}) simplifies to
\begin{equation*}
\frac{1}{T_{q,q}}= 
k\biggr(\sum^{W}_{i=1}\tilde{p}_{i}^{\,q}\biggr)\beta_{q,q}=kg_{q,q}(\beta_{q,q},U)^{1-q}\beta_{q,q},
\end{equation*}
and the entropy is given by
\begin{eqnarray*}
    S_{q}(\tilde{p})=k\ln_{q}g_{q,q}(\beta_{q,q},U).
\end{eqnarray*}
Therefore, the Helmholtz energy given by (\ref{eqHelmholtzGeneralized}) satisfies
\begin{eqnarray}\label{eqHelmholtzBG2}
    F_{q,q}=U- \frac{g_{q,q}(\beta_{q,q},U)^{q-1}\ln_{q}g_{q,q}(\beta_{q,q},U)}{\beta_{q,q}}
    =-\frac{\ln_{2-q}Z_{q,q}(\beta_{q,q},U)}{\beta_{q,q}},
\end{eqnarray}
where the last equality assumes that
\begin{eqnarray*}
    \ln_{2-q} Z_{q,q}(\beta_{q,q},U)=-\beta_{q,q}U+\ln_{2-q}g_{q,q}(\beta_{q,q},U).
\end{eqnarray*}
The next lemma provides a sufficient condition for this assumption to hold.
\begin{lemma}\label{LemmaDomainQQ}
    Let $q\in(0,1)$ and $U\in(e_{1},e_{W})=(0,e_{W})$. If $\beta>0$, then
    \begin{eqnarray}\label{eqLnQFactors2}
        \ln_{2-q} Z_{q,q}(\beta,U)=-\beta U +\ln_{2-q} g_{q,q}(\beta,U).
    \end{eqnarray}
\end{lemma}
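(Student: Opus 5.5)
The plan is to reduce \eqref{eqLnQFactors2} to the $(2-q)$-additivity of $\ln_{2-q}$ under $\otimes_{2-q}$. That identity holds whenever both factors lie in $(0,+\infty)$ and the bracket defining the product is strictly positive. Indeed, for $x,y>0$ with $x^{q-1}+y^{q-1}-1>0$, the positive part is inactive, so $(x\otimes_{2-q}y)^{q-1}=x^{q-1}+y^{q-1}-1$. Since $\ln_{2-q}z=(z^{q-1}-1)/(q-1)$, subtracting $1$ and dividing by $q-1$ gives $\ln_{2-q}(x\otimes_{2-q}y)=\ln_{2-q}x+\ln_{2-q}y$. So the lemma reduces to checking, for $x=\exp_{2-q}(-\beta U)$ and $y=g_{q,q}(\beta,U)$, that $x,y\in(0,+\infty)$, that $\ln_{2-q}x=-\beta U$, and that the bracket is positive.

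First I compute $x$. By definition, $\exp_{2-q}(-\beta U)=[1+(1-q)\beta U]_+^{\frac{1}{q-1}}$. With $q\in(0,1)$, $\beta>0$ and $U\in(0,e_W)$, the base satisfies $1+(1-q)\beta U>1$. Hence $x\in(0,1)$ is finite and positive, $x^{q-1}=1+(1-q)\beta U$, and therefore $\ln_{2-q}x=\big((1+(1-q)\beta U)-1\big)/(q-1)=-\beta U$.

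Next I handle $y=g_{q,q}(\beta,U)=\sum_{i}[1-(1-q)\beta(e_i-U)]_+^{\frac{1}{1-q}}$. Each term is finite and nonnegative because $\frac{1}{1-q}>0$, so $y<+\infty$. For the term $i=1$ we have $e_1=0$, so its base is $1+(1-q)\beta U>1$, which gives $y\geq(1+(1-q)\beta U)^{\frac{1}{1-q}}>1$, and in particular $y>0$. Consequently $y^{q-1}$ is a well-defined positive real number and $\ln_{2-q}y$ is finite.

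Finally, the bracket equals $x^{q-1}+y^{q-1}-1=(1-q)\beta U+y^{q-1}$, which is strictly positive because both summands are. Applying the additivity identity then yields $\ln_{2-q}Z_{q,q}(\beta,U)=-\beta U+\ln_{2-q}g_{q,q}(\beta,U)$. The only delicate point is ensuring that neither $[\cdot]_+$ truncation (in $\exp_{2-q}$ and in $\otimes_{2-q}$) is active and that $g_{q,q}$ does not vanish. Both are secured by the ground-state term $e_1=0<U$ together with $\beta>0$; no information about $\beta_{q,q}$ itself is needed.
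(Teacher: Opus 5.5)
Your proof is correct and follows essentially the same route as the paper's. You check that $\exp_{2-q}(-\beta U)$ and $g_{q,q}(\beta,U)$ are positive for $\beta>0$ (the latter via the ground-state term, since $E_1=-U<0$), observe that the bracket in $\otimes_{2-q}$ reduces to $(1-q)\beta U+g_{q,q}(\beta,U)^{q-1}>0$ so the positive part is inactive, and then read off the $\ln_{2-q}$ identity.
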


Lemma \ref{LemmaDomainQQ} and Theorem \ref{theo2} imply that, for $q\in(0,1)$ and $\beta_{q,1}>0$ (i.e., positive temperature), (\ref{eqHelmholtzBG1}) is valid and generalizes (\ref{eqIdentityFZBG}).

The following result relates to the generalization of (\ref{eqIdentityUZBG}). Before stating it, however, let $O_{q,q}=\textrm{int}\{(\beta,U)\in\mathbb{R}\times(e_{1},e_{W})\mid \ln_{2-q} Z_{q,q}(\beta,U)=-\beta U +\ln_{2-q} g_{q,q}(\beta,U)\}$ (i.e., $O_{q,q}\subseteq\mathbb{R}\times(e_{1},e_{W})$ is the interior of the set in which (\ref{eqLnQFactors2}) holds).

\begin{lemma}\label{LemmaNablaQQ}
If $(\beta_{q,q},U)\in O_{q,q}$ then $\nabla\ln_{2-q}Z_{q,q}(\beta_{q,q},U)=(-U, (q-1)\beta_{q,q}\ln_{2-q}g_{q,q}(\beta_{q,q},U))$.
\end{lemma}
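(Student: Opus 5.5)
The plan is to differentiate the factorized expression directly. Since $(\beta_{q,q},U)$ lies in the open set $O_{q,q}$, the identity (\ref{eqLnQFactors2}) holds on a whole neighborhood of that point. So on this neighborhood we may replace $\ln_{2-q}Z_{q,q}$ by $-\beta U+\ln_{2-q}g_{q,q}(\beta,U)$ and take partial derivatives of the right-hand side. This reduces the lemma to computing $\nabla \ln_{2-q} g_{q,q}$ at $(\beta_{q,q},U)$.

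First, regularity. The solution $\tilde p$ is strictly positive, so by (\ref{eqProbDiscrete2}) every factor $1-(1-q)\beta_{q,q}E_i$ is strictly positive. By continuity this persists near $(\beta_{q,q},U)$. There each $\exp_q(-\beta(e_i-U))$ is smooth with no $[\cdot]_+$ truncation, and $g_{q,q}>0$. We then use the elementary identities
\[
\frac{d}{dz}\exp_q(z)=\exp_q(z)^q, \qquad \frac{d}{dy}\ln_{2-q}y=y^{q-2}.
\]
The second follows from $\ln_{2-q}y=(y^{q-1}-1)/(q-1)$. Together they give
\[
\nabla\ln_{2-q}g_{q,q}=g_{q,q}^{\,q-2}\,\nabla g_{q,q},
\]
with
\[
\partial_\beta g_{q,q}=-\sum_i \exp_q(-\beta E_i)^qE_i, \qquad \partial_U g_{q,q}=\beta\sum_i\exp_q(-\beta E_i)^q .
\]

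Next, evaluate at $\beta=\beta_{q,q}$. By (\ref{eqBetaDiscrete2}), $\partial_\beta g_{q,q}(\beta_{q,q},U)=0$; this is the critical-point property already noted in the text. Hence
\[
\partial_\beta\ln_{2-q}Z_{q,q}=-U .
\]
For the $U$-derivative, (\ref{eqBetaDiscreteOriginal2}) gives $\sum_i\exp_q(-\beta_{q,q}E_i)^q=\sum_i\exp_q(-\beta_{q,q}E_i)=g_{q,q}(\beta_{q,q},U)$. Therefore
\[
\partial_U\ln_{2-q}Z_{q,q}=-\beta_{q,q}+\beta_{q,q}\,g_{q,q}^{\,q-2}\,g_{q,q}
=\beta_{q,q}\bigl(g_{q,q}^{\,q-1}-1\bigr)
=(q-1)\beta_{q,q}\ln_{2-q}g_{q,q}(\beta_{q,q},U),
\]
which is the claimed gradient.

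The only delicate point is justifying the differentiation. We need to differentiate the factorized form rather than the $\otimes_{2-q}$ definition, which involves $[\cdot]_+$. This is handled by the openness of $O_{q,q}$ together with the strict positivity of the $q$-exponentials near $\beta_{q,q}$. The rest is routine.
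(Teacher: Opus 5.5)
Your proposal is correct and is essentially the paper's own proof. On the open set $O_{q,q}$ you differentiate $-\beta U+\ln_{2-q}g_{q,q}(\beta,U)$ and use $\nabla\ln_{2-q}g_{q,q}=g_{q,q}^{\,q-2}\nabla g_{q,q}$. You then kill $\partial_\beta g_{q,q}$ at $\beta_{q,q}$ via (\ref{eqBetaDiscrete2}) and turn $\partial_U g_{q,q}$ into $\beta_{q,q}g_{q,q}$ via (\ref{eqBetaDiscreteOriginal2}), exactly as the paper does. Your extra remark that $1-(1-q)\beta_{q,q}E_i>0$, so that $g_{q,q}$ is smooth near the point, is a small justification the paper leaves implicit rather than a different route.
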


Lemma (\ref{LemmaNablaQQ}) implies that
\begin{eqnarray*}
    U=-\frac{\partial \ln_{2-q}Z_{q,q}(\beta_{q,q},U)}{\partial \beta},
\end{eqnarray*}
which is a first form to generalize (\ref{eqIdentityUZBG}). There is, once more, a second form. The Implicit Function Theorem applied to (\ref{eqBetaDiscrete2}) allows us to write $U(\beta_{q,q})$ (i.e., the internal energy as a function of the structural parameter) and, therefore, we can also write $Z_{q,q}(\beta_{q,q})\equiv Z_{q,q}(\beta_{q,q},U(\beta_{q,q}))$. Then, we have
\begin{eqnarray*}
    \frac{\partial \ln_{2-q}Z_{q,q}(\beta_{q,q})}{\partial \beta_{q,q}}&=&\nabla \ln_{2-q}Z_{q,q}(\beta_{q,q},U(\beta_{q,q}))\cdot (1, U^{\prime}(\beta_{q,q}))\\
    &=&-U(\beta_{q,q})+(q-1)\beta_{q,q}U^{\prime}(\beta_{q,q})\ln_{2-q}g_{q,q}(\beta_{q,q},U(\beta_{q,q})),
\end{eqnarray*}
which also generalizes (\ref{eqIdentityUZBG}). Clearly, when $q=1$, the two generalizations of (\ref{eqIdentityUZBG}) become equivalent.

Next, we must highlight that \cite{TsallisMendesPlastino1998} have worked with a rescaled version of the structural parameter $\beta_{q,q}$ and a different generalized partition function in order to obtain well-behaved thermodynamic identities. The connection between those results and the ones in this paper is established through the following proposition. Notice, in particular, that the below generalization of (\ref{eqIdentityUZBG}) follows the first form mentioned above.

\begin{proposition}[See \cite{TsallisMendesPlastino1998}]\label{prop*Equivalence}
    Let $\beta^{*}=\beta_{q,q}\sum^{W}_{i=1}\tilde{p}_{i}^{q}$, $Z^{*}:\mathbb{R}\times(e_{1},e_{W})\rightarrow\mathbb{R}$ be given by
    \begin{eqnarray*}
        Z^{*}(\beta,U)=\exp_{q}(-\beta U)\otimes_{q} g_{q,q}\biggr(\frac{\beta}{\sum^{W}_{i=1}\tilde{p}_{i}(U)},U\biggr),
    \end{eqnarray*}
    and $O^{*}=\textrm{int}\{(\beta,U)\in\mathbb{R}\times(e_{1},e_{W})\mid \ln_{q} Z^{*}(\beta,U)=-\beta U +\ln_{q} g_{q,q}(\beta/\sum_{i=1}^{W}\tilde{p}_{i}^{q}(U),U)\}$. Then, $(\beta^{*},U)\in O^{*}$ implies
    \begin{eqnarray*}
        F_{q,q}&=&-\frac{\ln_{q}Z^{*}}{\beta^{*}}\\
        U&=&-\frac{\partial \ln_{q}Z^{*}(\beta^{*},U)}{\partial \beta}.
    \end{eqnarray*}
\end{proposition}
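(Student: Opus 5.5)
The plan is to evaluate everything at the point $\beta=\beta^{*}$, where the rescaled argument of $g_{q,q}$ collapses back to the structural parameter. I read the normalization inside $Z^{*}$ as $\sum_{i=1}^{W}\tilde{p}_{i}^{\,q}(U)$, consistent with the definition of $O^{*}$ and of $\beta^{*}$; with $\sum_{i}\tilde{p}_{i}(U)=1$ the statement would not match $O^{*}$. Write $s(U)=\sum_{i=1}^{W}\tilde{p}_{i}^{\,q}(U)$. Then $\beta^{*}/s(U)=\beta_{q,q}$, so on $O^{*}$
\begin{eqnarray*}
\ln_{q}Z^{*}(\beta^{*},U)=-\beta^{*}U+\ln_{q}g_{q,q}(\beta_{q,q},U).
\end{eqnarray*}

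\textbf{First identity.} I will use the expressions already derived in Subsection \ref{subsec32}. Specializing (\ref{eqTemperatureGeneralized}) to $q^{\prime}=q$ gives $1/T_{q,q}=k\,s(U)\,\beta_{q,q}=k\beta^{*}$, and the entropy is $S_{q}(\tilde{p})=k\ln_{q}g_{q,q}(\beta_{q,q},U)$. Substituting both into (\ref{eqHelmholtzGeneralized}) gives
\begin{eqnarray*}
F_{q,q}=U-\frac{\ln_{q}g_{q,q}(\beta_{q,q},U)}{\beta^{*}}=-\frac{-\beta^{*}U+\ln_{q}g_{q,q}(\beta_{q,q},U)}{\beta^{*}}=-\frac{\ln_{q}Z^{*}(\beta^{*},U)}{\beta^{*}},
\end{eqnarray*}
where the last equality is exactly the factorization valid on $O^{*}$. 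This needs $\beta^{*}\neq0$, which is implicit in the statement since $T_{q,q}$ must be finite.

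\textbf{Second identity.} Because $O^{*}$ is open, the factorized expression $\ln_{q}Z^{*}(\beta,U)=-\beta U+\ln_{q}g_{q,q}(\beta/s(U),U)$ holds in a neighbourhood of $(\beta^{*},U)$. We may therefore differentiate it in $\beta$ with $U$, and hence $s(U)$, held fixed. The chain rule gives
\begin{eqnarray*}
\frac{\partial\ln_{q}Z^{*}(\beta^{*},U)}{\partial\beta}=-U+\frac{1}{s(U)}\,g_{q,q}(\beta_{q,q},U)^{-q}\,\frac{\partial g_{q,q}(\beta_{q,q},U)}{\partial\beta}.
\end{eqnarray*}
Since $\partial_{\beta}\exp_{q}(-\beta E_{i})=-E_{i}\exp_{q}(-\beta E_{i})^{q}$ wherever the base is positive, (\ref{eqBetaDiscrete2}) shows that $\beta_{q,q}$ is a critical point of $g_{q,q}(\cdot,U)$. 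The last term therefore vanishes, which yields $U=-\partial\ln_{q}Z^{*}(\beta^{*},U)/\partial\beta$.

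\textbf{Main obstacle.} The delicate step is justifying the differentiation: smoothness of $g_{q,q}(\cdot,U)$ near $\beta_{q,q}$, and the fact that no $[\cdot]_{+}$ cut-off is active. I would handle this using strict positivity of $\tilde{p}$, which holds on the range considered, after the reduction allowed by Proposition \ref{propQQCase}. Then (\ref{eqXP2}) forces $1-(1-q)\beta_{q,q}E_{i}>0$ for all $i$. These strict inequalities persist for $\beta$ near $\beta_{q,q}$, so every $\exp_{q}$ term is smooth there and $g_{q,q}>0$, making $\ln_{q}g_{q,q}$ differentiable.
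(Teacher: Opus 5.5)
Your proposal is correct and follows the paper's own argument. You evaluate the factorized form of $\ln_{q}Z^{*}$ on $O^{*}$ at $\beta^{*}$, where $\beta^{*}/\sum_{i}\tilde{p}_{i}^{\,q}=\beta_{q,q}$, and combine this with $1/T_{q,q}=k\beta^{*}$ and $S_{q}=k\ln_{q}g_{q,q}(\beta_{q,q},U)$ to get the free energy; you then differentiate in $\beta$ and use that $\beta_{q,q}$ is a critical point of $g_{q,q}(\cdot,U)$. Your reading of the normalization inside $Z^{*}$ as $\sum_{i}\tilde{p}_{i}^{\,q}(U)$ matches what the paper's proof actually uses, and your remarks on $\beta^{*}\neq0$ and on the local smoothness of $g_{q,q}$ near $\beta_{q,q}$ supply details the paper leaves implicit.
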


Looking at the results from this and the previous subsection, one may ask if there are any other cases in which the optimal probability distribution is given by a $q_{energy}$-exponential (see (\ref{eqProbDiscrete1}) and (\ref{eqProbDiscrete2})). The next result states that, if the energy spectrum has at least four different energy values, these are the only two.

\begin{proposition}\label{propOnlyTwoCases}
    Let $q, q^{\prime}>0$, $q\neq1$, and the energy spectrum have at least four different energy values. Suppose there is $q_{energy}>0$, $q_{energy}\neq1$, such that, for each $U\in (0,e_{W})$, there is $\gamma\in\mathbb{R}$ such that $\tilde{p}_{i}\propto \exp_{q_{energy}}(-\gamma E_{i})>0$, $1\leq i\leq W$, is an optimal solution of (\ref{maxProblem}). Then, $q^{\prime}=1$ and $q_{energy}=2-q$, or $q_{energy}=q^{\prime}=q$ (i.e., these are the only two cases yielding a optimal probability distribution of $q$-exponential form).
\end{proposition}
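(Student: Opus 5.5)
Since the hypothesis makes $\tilde p\in\mathbb{R}^{W}_{++}$, Theorem \ref{theo2} applies. We therefore compare the two expressions available for $\tilde x=\psi_{q}(\tilde p)$: the one forced by the $q_{energy}$-exponential ansatz, and the one forced by (\ref{eqDiffeoX}).

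From $\tilde p_i=\exp_{q_{energy}}(-\gamma E_i)/Z$ we get $\tilde x_i=\tilde p_i^{\,q-1}/\sum_j\tilde p_j^{\,q}$. Write $r=q_{energy}$ and $u_i=1-(1-r)\gamma E_i>0$. Then
\begin{eqnarray*}
\tilde x_i = c\, u_i^{\frac{q-1}{1-r}}
\end{eqnarray*}
for some constant $c>0$ that does not depend on $i$. On the other hand, (\ref{eqDiffeoX}) states that $\tilde x_i = 1+(1-q)\beta E_i \tilde x_i^{\,a}$, where $a=\frac{q^{\prime}-1}{q-1}$ and $\beta=\beta_{q,q^{\prime}}$. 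Substituting the first expression into the second gives
\begin{eqnarray*}
c\, u_i^{\,s} = 1+(1-q)\beta c^{a} E_i u_i^{\,as}, \qquad s=\frac{q-1}{1-r},
\end{eqnarray*}
for every $i$.

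Because $E_i$ is an affine function of $u_i$, namely $E_i=(1-u_i)/((1-r)\gamma)$, this identity says that the function
\begin{eqnarray*}
h(u)=c\,u^{s}-1-K(1-u)u^{as}, \qquad K=\frac{(1-q)\beta c^{a}}{(1-r)\gamma},
\end{eqnarray*}
vanishes at the points $u_i$. There are at least four distinct values $u_i$ provided $\gamma\neq0$. To guarantee $\gamma\neq0$, I would pick $U\neq\sum_i e_i/W$, since $\gamma=0$ forces equiprobability and hence $U=\sum_i e_i/W$. Only one such $U$ is needed.

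Now $h$ is a generalized polynomial (a signomial) in $u>0$ with exponents among $\{0,\,s,\,as,\,as+1\}$. By Descartes' rule of signs for real exponents, a nonzero signomial with $m$ distinct exponents has at most $m-1$ positive roots. After collecting equal exponents, $h$ has at most four monomials, so at most three positive roots unless $h\equiv0$. With four distinct roots, $h$ must vanish identically. Note that $c\neq0$ and $s\neq0$, the latter because $q\neq1$.

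\textbf{Main obstacle: the case analysis of $h\equiv0$.} We need the term $c\,u^s$ to cancel against something with exponent $s$, and the constant $-1$ to cancel against something with exponent $0$. The exponents of the $K$-terms are $as$ and $as+1$.

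\emph{Case $K=0$.} Then $c\,u^s-1\equiv0$, which is impossible since $s\neq0$.

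\emph{Case $K\neq0$.} The exponents $as$ and $as+1$ must be exactly $\{0,s\}$ in some order.
\begin{itemize}
\item If $as=0$ and $as+1=s$, then $a=0$, so $q^{\prime}=1$. Also $s=1$ gives $q-1=1-r$, i.e.\ $r=2-q$.
\item If $as+1=0$ and $as=s$, then $a=1$, so $q^{\prime}=q$. Also $s=-1$ gives $q-1=r-1$, i.e.\ $r=q$.
\end{itemize}
The coefficient conditions ($c=K$ in the first case, $c=-K$ in the second) then only fix $\beta$ and are consistent with (\ref{eqXP1}) and (\ref{eqXP2}). This yields exactly the two cases in the statement.

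I would also need to handle the degenerate possibility that some exponents coincide, for example $as=s$ together with $as+1=0$. These coincidences all reduce to the enumeration above, because $0$ and $s$ are distinct and $as$ and $as+1$ are distinct.
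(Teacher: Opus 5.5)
Your proof is correct and is essentially the paper's argument: your $h+1$ is exactly the paper's auxiliary $f$ (with $c=\sigma$, $K=\theta$, $s=\tfrac{q-1}{1-q_{energy}}$, $as=\tfrac{q^{\prime}-1}{1-q_{energy}}$), and the four distinct roots force a degeneracy. The only difference is the root count: you cite Descartes' rule of signs for real exponents, where the paper applies Rolle once and notes that a line meets $y^{(q-q^{\prime})/(1-q_{energy})}$ at most twice (the standard proof of that rule), and your explicit analysis of $h\equiv 0$ makes the paper's ``neither of these two cases'' precise. One harmless slip: your coefficient conditions are swapped --- for $q^{\prime}=1$ you get $h(u)=(c+K)u-(1+K)$, so $K=-1$ and $c=-K=1$, while for $q^{\prime}=q$ you get $h(u)=(c-K)u^{-1}+(K-1)$, so $c=K=1$.
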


Proposition \ref{propOnlyTwoCases} narrows down to two (namely, $q^{\prime}=1$ and $q^{\prime}=q$) the cases that must be considered when looking for $q$-exponential forms that optimize $S_{q}$, and the focus on such distributions is due to their wide applications in modeling physical phenomena. 

For instance, consider the Plastino-Plastino equation \cite{PlastinoPlastino1995,TsallisBukman1996}, which is a nonlinear generalization of the Fokker-Planck equation used, among other applications, to model overdamped systems \cite{AndradeSilvaMoreiraNobreCurado2010,NobreCuradoSouzaAndrade2015,VieiraCarmonaAndradeMoreira2016, SouzaAndradeNobreCurado2018, MoreiraVieiraCarmonaAndradeTsallis2018}. This differential equation is given by
\begin{eqnarray*}
    \frac{\partial}{\partial t} p(x,t)^{\mu}=-\frac{\partial}{\partial x}\biggr(F(x)p(x,t)^{\mu}\biggr)+D\frac{\partial^{2}}{\partial x^{2}}p(x,t)^{\nu},
\end{eqnarray*}
with $\nu,\mu\in\mathbb{R}$, $\nu+\mu>0$, $D>0$ the \textit{diffusion coefficient} and $F(\cdot)$ the \textit{drift function}. The cases $\mu=\nu$, $\mu>\nu$ and $\mu<\nu$ correspond to normal diffusion, superdiffusion and subdiffusion, respectively (see Figure 2 in \cite{TsallisBukman1996}). For the commonly used drift function $F(x)=k_{1}-k_{2}x$, $k_{1}\in\mathbb{R}$, $k_{2}\geq0$, this nonlinear differential equation is solved by the following $q$-exponential (actually, $q$-Gaussian)
\begin{eqnarray}\label{eqFKNL}
    p(x,t)\propto \exp_{1+\mu-\nu}(-\beta(t)(x-x_{M}(t))^{2}),
\end{eqnarray}
with $\beta(\cdot)$ and $x_{M}(\cdot)$ well-behaved functions (see (13), (19) and (21) in \cite{TsallisBukman1996}). In particular, notice that if $\mu=1$ and $\nu\in(0,1)$, we have a $(2-\nu)$-exponential in (\ref{eqFKNL}), which is consistent with the results in Subsection \ref{subsec31}.

\subsection{The case $q^{\prime}=2-q$}\label{subsec33}

Since Theorem \ref{theo1} is only valid for $q,q^{\prime}>0$, we restrict our attention to $q\in(0,2)$, $q\neq1$. If $q\in (0,1)$, then $q^{\prime}>1$ and Theorem \ref{theo1} implies that there is a strictly positive solution to (\ref{maxProblem}). If $q\in (1,2)$, then $q^{\prime}<1$ and Theorem \ref{theo1} implies that there is a solution, although it may not be strictly positive.

Then, let $\tilde{p}=\psi_{q}^{-1}(\tilde{x})\in\mathbb{R}^{W}_{++}$ be a strictly positive solution to (\ref{maxProblem}). Notice that (\ref{eqF}) implies 
\begin{eqnarray*}
    f_{q,2-q}(\tilde{x})=(E_{1}\tilde{x}_{1}^{-1},\ldots,E_{W}\tilde{x}_{W}^{-1}),
\end{eqnarray*}
and (\ref{eqDiffeoX}) in Theorem \ref{theo2} allows us to write 
\begin{eqnarray}\label{eqXP3}
    \tilde{x}=1+(1-q)\beta_{q,2-q} (E_{1}\tilde{x}_{1}^{-1},\ldots, E_{W}\tilde{x}_{W}^{-1}) \iff \tilde{x}_{i}=\frac{1\pm \sqrt{1+4(1-q)\beta_{q,2-q}  E_{i}}}{2},
\end{eqnarray}
for $1\leq i\leq W$. Since the optimal solution must satisfy $\tilde{x}_{i}>0$, $1\leq i\leq W$, and be continuous in $U$, all signs are positive. Also, by (\ref{eqDiffeoSimplex}), the structural parameter $\beta_{q,2-q}\in\mathbb{R}$ satisfies 
\begin{eqnarray*}
    \sum^{W}_{i=1}\biggr(1+\sqrt{1+4(1-q)\beta_{q,2-q}  E_{i}}\biggr)^{\frac{1}{q-1}}\biggr(\sqrt{1+4(1-q)\beta_{q,2-q}  E_{i}}-1\biggr)=0,
\end{eqnarray*}
so that,
\begin{eqnarray}\label{eqBetaDiscrete3}
    \sum^{W}_{i=1}\biggr(1+\sqrt{1+4(1-q)\beta_{q,2-q}  E_{i}}\biggr)^{\frac{2-q}{q-1}}E_{i}=0.
\end{eqnarray}
Notice that (\ref{eqBetaDiscrete3}) can also be written as
\begin{eqnarray}\label{eqBetaDiscreteOriginal3}
     \sum^{W}_{i=1}\biggr(\frac{1+\sqrt{1+4(1-q)\beta_{q,2-q}  E_{i}}}{2}\biggr)^{\frac{1}{q-1}}=\sum^{W}_{i=1}\biggr(\frac{1+ \sqrt{1+4(1-q)\beta_{q,2-q}  E_{i}}}{2}\biggr)^{\frac{q}{q-1}}.
\end{eqnarray}
Since $\tilde{p}=\psi^{-1}(\tilde{x})$, (\ref{eqBetaDiscreteOriginal3}) implies
\begin{eqnarray}\label{eqProbDiscrete3}
    \tilde{p}_{i}=\frac{(1+ \sqrt{1+4(1-q)\beta_{q,2-q}  E_{i}})^{\frac{1}{q-1}}}{\sum^{W}_{j=1}(1+ \sqrt{1+4(1-q)\beta_{q,2-q} E_{j}})^{\frac{1}{q-1}}},
\end{eqnarray}
for $1\leq i\leq W$, which recovers\footnote{Notice that
\begin{eqnarray*}
    \lim_{q\rightarrow1}\biggr(\frac{1+ \sqrt{1+4(1-q)\beta_{q,2-q}  E_{i}}}{2}\biggr)^{\frac{1}{q-1}}=\exp(-\beta_{1,1}E_{i}),
\end{eqnarray*}
for $1\leq i\leq W$.} (\ref{eqBGDistribution1}) in the limit $q\rightarrow1$. Notice that this type of solution is reminiscent of the distributions optimizing Kaniadakis entropy \cite{Kaniadakis_2002a,Kaniadakis_2002b}.  

To ease notation, let $\phi:\mathbb{R}\rightarrow\mathbb{R}$ be given by
\begin{eqnarray*}
    \phi(z)=2\biggr(\frac{1+\sqrt{[1+4(1-q)z]_{+}}}{2}\biggr)^{\frac{q}{q-1}}-q\biggr(\frac{1+\sqrt{[1+4(1-q)z]_{+}}}{2}\biggr)^{\frac{1}{q-1}},
\end{eqnarray*}
and notice that
\begin{eqnarray*}
    \phi^{\prime}(z)=-q\biggr(\frac{1+\sqrt{1+4(1-q)z}}{2}\biggr)^{\frac{2-q}{q-1}},
\end{eqnarray*}
if $[1+4(1-q)z]_{+}>0$. We define the generalized partition function $Z_{q,2-q}:\mathbb{R}\times(e_{1},e_{W})\rightarrow[0,+\infty]$ by
\begin{eqnarray}\label{eqPartition3}
    Z_{q,2-q}(\beta, U)&=&\exp_{2-q}\biggr(-\frac{\partial g_{q,2-q}(\beta,U)}{\partial U}\frac{U}{g_{q,2-q}(\beta,U)}\biggr)\otimes_{2-q}g_{q,2-q}(\beta,U)\nonumber\\
    &=&\exp_{2-q}\biggr(-\frac{\partial \ln g_{q,2-q}(\beta,U)}{\partial \ln U}\biggr)\otimes_{2-q}g_{q,2-q}(\beta,U),
\end{eqnarray}
with $g_{q,2-q}:\mathbb{R}\times(e_{1},e_{W})\rightarrow\mathbb{R}$ given by
\begin{eqnarray*}
    g_{q,2-q}(\beta,U)=\frac{\sum_{i=1}^{W}\phi(\beta(e_{i}-U))}{2-q}.
\end{eqnarray*}
Notice that (\ref{eqBetaDiscrete3}) implies that $\beta_{q,2-q}\in\mathbb{R}$ is a critical point of $g_{q,2-q}(\cdot,U)$ and (\ref{eqProbDiscrete3}) implies
\begin{eqnarray*}
    g_{q,2-q}(\beta_{q,2-q},U)=\biggr(\sum^{W}_{i=1}\tilde{p}_{i}^{\,q}\biggr)^{\frac{1}{1-q}}.
\end{eqnarray*}
Then, the temperature given by (\ref{eqTemperatureGeneralized}) simplifies to
\begin{equation*}
\frac{1}{T_{q,2-q}}= 
\frac{kq}{2-q}\biggr(\sum^{W}_{i=1}\tilde{p}_{i}^{\,q}\biggr)^{2}\biggr(\sum^{W}_{i=1}\tilde{p}_{i}^{\,2-q}\biggr)\beta_{q,2-q}=\frac{k}{g_{q,2-q}(\beta_{q,2-q},U)^{q}}\frac{\partial g_{q,2-q}(\beta_{q,2-q},U)}{\partial U},
\end{equation*}
and the entropy is given by
\begin{eqnarray*}
    S_{q}(\tilde{p})=k\ln_{q}g_{q,2-q}(\beta_{q,2-q},U).
\end{eqnarray*}
Therefore, the Helmholtz energy given by (\ref{eqHelmholtzGeneralized}) satisfies
\begin{eqnarray}\label{eqHelmholtzBG2}
    F_{q,2-q}&=&U- \frac{g_{q,2-q}(\beta_{q,2-q},U)^{q}\ln_{q}g_{q,2-q}(\beta_{q,2-q},U)}{\partial g_{q,2-q}(\beta_{q,2-q},U)/\partial U}\nonumber\\
    &=&\frac{\sum^{W}_{i=1}\phi(\beta_{q,2-q}E_{i})}{\sum^{W}_{i=1}\phi^{\prime}(\beta_{q,2-q}E_{i})}\frac{\ln_{2-q}Z_{q,2-q}(\beta_{q,2-q},U)}{\beta_{q,2-q}},
\end{eqnarray}
where the last equality assumes that
\begin{eqnarray}\label{eqLnQFactors3}
    \ln_{2-q} Z_{q,2-q}(\beta_{q,2-q},U)=-\frac{\partial g_{q,2-q}(\beta_{q,2-q},U)}{\partial U}\frac{U}{g_{q,2-q}(\beta_{q,2-q},U)}+\ln_{2-q}g_{q,2-q}(\beta_{q,2-q},U).
\end{eqnarray}

The following result relates to the generalization of (\ref{eqIdentityUZBG}). Before stating it, however, let $O_{q,2-q}=\textrm{int}\{(\beta,U)\in\mathbb{R}\times(e_{1},e_{W})\mid \textrm{(\ref{eqLnQFactors3}) holds at } (\beta,U)\}$.

\begin{lemma}\label{LemmaNablaQ2MinusQ}
If $(\beta_{q,2-q},U)\in O_{q,2-q}$ then 
\begin{eqnarray*}
 \frac{\partial \ln_{2-q}Z_{q,2-q}(\beta_{q,2-q},U)}{\partial \beta}=-\frac{\partial^{2} g_{q,2-q}(\beta_{q,2-q},U)}{\partial\beta \partial U}\frac{U}{g_{q,2-q}(\beta_{q,2-q},U)}.   
\end{eqnarray*}
\end{lemma}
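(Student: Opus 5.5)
The plan is to differentiate the identity (\ref{eqLnQFactors3}) with respect to $\beta$ and then use that $\beta_{q,2-q}$ is a critical point of $g_{q,2-q}(\cdot,U)$. Since $O_{q,2-q}$ is open by definition (an interior), the assumption $(\beta_{q,2-q},U)\in O_{q,2-q}$ gives a neighbourhood of $(\beta_{q,2-q},U)$ on which
\begin{eqnarray*}
\ln_{2-q}Z_{q,2-q}(\beta,U)=-\frac{\partial g_{q,2-q}(\beta,U)}{\partial U}\frac{U}{g_{q,2-q}(\beta,U)}+\ln_{2-q}g_{q,2-q}(\beta,U)
\end{eqnarray*}
holds identically. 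So the partial derivative of the left side in $\beta$ at $\beta_{q,2-q}$ equals the partial derivative of the right side there.

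First I would check that the right side is smooth near $(\beta_{q,2-q},U)$. At the strictly positive solution we have $\tilde{x}_i=(1+\sqrt{1+4(1-q)\beta_{q,2-q}E_i})/2>0$, and I will assume the radicands $1+4(1-q)\beta_{q,2-q}E_i$ are strictly positive, as is implicit in (\ref{eqXP3})--(\ref{eqProbDiscrete3}). Then $[\cdot]_+$ plays no role locally, and $\phi$ is smooth with the derivative $\phi'$ stated in the paper. Hence $g_{q,2-q}$ is smooth and positive near the point, by (\ref{eqProbDiscrete3}) and $g_{q,2-q}(\beta_{q,2-q},U)=(\sum_i\tilde{p}_i^{\,q})^{1/(1-q)}>0$. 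This makes all terms differentiable.

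Next I would verify the critical-point property explicitly. By the chain rule,
\begin{eqnarray*}
\frac{\partial g_{q,2-q}(\beta,U)}{\partial\beta}=\frac{1}{2-q}\sum_{i=1}^{W}\phi^{\prime}(\beta E_i)E_i=-\frac{q}{2-q}\sum_{i=1}^{W}\biggr(\frac{1+\sqrt{1+4(1-q)\beta E_i}}{2}\biggr)^{\frac{2-q}{q-1}}E_i,
\end{eqnarray*}
and this vanishes at $\beta=\beta_{q,2-q}$ by (\ref{eqBetaDiscrete3}), up to the positive factor $2^{(2-q)/(q-1)}$.

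Finally I would differentiate term by term, using $\frac{d}{dz}\ln_{2-q}z=z^{q-2}$:
\begin{eqnarray*}
\frac{\partial}{\partial\beta}\ln_{2-q}g_{q,2-q}=g_{q,2-q}^{\,q-2}\frac{\partial g_{q,2-q}}{\partial\beta},\qquad
\frac{\partial}{\partial\beta}\biggr(-\frac{\partial g_{q,2-q}}{\partial U}\frac{U}{g_{q,2-q}}\biggr)=-\frac{\partial^2 g_{q,2-q}}{\partial\beta\partial U}\frac{U}{g_{q,2-q}}+\frac{\partial g_{q,2-q}}{\partial U}\frac{U}{g_{q,2-q}^{2}}\frac{\partial g_{q,2-q}}{\partial\beta}.
\end{eqnarray*}
At $\beta=\beta_{q,2-q}$, every term containing $\partial g_{q,2-q}/\partial\beta$ vanishes. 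Only $-\frac{\partial^2 g_{q,2-q}}{\partial\beta\partial U}\frac{U}{g_{q,2-q}}$ survives, which is the claim.

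The only delicate point is the regularity step: ensuring the identity holds on an open set, which is supplied by the definition of $O_{q,2-q}$, and that $g_{q,2-q}$ is $C^2$ there, which needs the radicands to be strictly positive. The remaining computation is routine.
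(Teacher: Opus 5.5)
Your proposal is correct and follows the paper's own argument: differentiate the identity defining $O_{q,2-q}$ with respect to $\beta$, then use (\ref{eqBetaDiscrete3}) to make every term containing $\partial g_{q,2-q}/\partial\beta$ vanish at $\beta_{q,2-q}$. Your version is actually more careful than the paper's, since it keeps the factor $g_{q,2-q}^{\,q-2}$ from differentiating $\ln_{2-q}$ and the quotient-rule cross term explicit, and it flags the strict positivity of the radicands that the paper assumes implicitly.
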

Lemma (\ref{LemmaNablaQ2MinusQ}) implies that
\begin{eqnarray*}
    U=-\biggr(\frac{\partial^{2} g_{q,2-q}(\beta_{q,2-q},U)}{\partial\beta \partial U}\biggr)^{-1}g_{q,2-q}(\beta_{q,2-q},U)\frac{\partial \ln_{2-q}Z_{q,2-q}(\beta_{q,2-q},U)}{\partial \beta},
\end{eqnarray*}
which is a first form to generalize (\ref{eqIdentityUZBG}). There is also a second form to generalize (\ref{eqIdentityUZBG}) analogous to the ones in Subsections \ref{subsec31} and \ref{subsec32}, but the expressions become lengthy and are thus omitted. 

Looking at the results of this and the two previous subsections, one may ask if there are any other cases that lead to closed-form solutions of (\ref{maxProblem}) (see (\ref{eqProbDiscrete1}), (\ref{eqProbDiscrete2}) and (\ref{eqProbDiscrete3})). Based on the transformations discussed in \cite{Tsallis2023} (see Figs. 5.86 and 5.87 therein), we have also looked at other possible dualities, such as $q^{\prime}=1/(2-q)$, $q^\prime = 2-1/q$ and $q^\prime=1/q$, but none have led to closed-form solutions of (\ref{maxProblem}).

\subsection{A remark on escort probabilities and the energy constraint}\label{subsecRemark}
The energy constraint in (\ref{maxProblem}) can also be written as 
\begin{eqnarray*}   \sum^{W}_{i=1}\biggr(\frac{p_{i}^{q^{\prime}}}{\sum^{W}_{j=1}p_{j}^{q^{\prime}}}e_{i}\biggr)=\sum^{W}_{i=1}\frac{\psi_{1+q^{\prime}}(p)}{\Vert\psi_{1+q^{\prime}}(p)\Vert}e_{i}=\sum^{W}_{i=1}P_{i}e_{i}=U,
\end{eqnarray*}
with $\{P_{i}\}_{1\leq i\leq W}$ known as \textit{escort probabilities}. In particular, notice that  $p=\psi^{-1}_{1+q^{\prime}}(P)/\Vert\psi^{-1}_{1+q^{\prime}}(P)\Vert$. Therefore, the energy constraint in (\ref{maxProblem}) can be viewed as adopting an average energy based on these escort probabilities. Another equivalent interpretation is the following. Notice that
\begin{eqnarray*}
\sum^{W}_{i=1}p_{i}^{q^{\prime}}E_{i}=\sum^{W}_{i=1}p_{i}p_{i}^{q^{\prime}-1}E_{i}=\sum^{W}_{i=1}p_{i}E_{i}^{\prime},
\end{eqnarray*}
with $E_{i}^{\prime}=p_{i}^{q^{\prime}-1}E_{i}$, $1\leq i\leq W$. Therefore, (\ref{maxProblem}) can also be viewed as rescaling the relative energy spectrum and imposing that the average relative energy must be zero. In this case, instead of distorting the probabilities toward their escort, we distort the relative energy spectrum. 

Stated differently, the relative energy level $E_{i}^{\prime}$ becomes dependent on the occupancy $p_{i}$ of each microstate, $1\leq i\leq W$, and, taking a term from the economic literature, we can say that this makes the relative energy spectrum \textit{endogenous}. The exact rescaling behavior, whether an expansion or a contraction, depends on $q^{\prime}<1$ or $q^{\prime}>1$, respectively. 

A remarkable fact is that, other than the linear case described in Subsection \ref{subsec31}, the other two well-behaved cases are those from Subsections \ref{subsec32} and \ref{subsec33}, for which $q^{\prime}=q$ and $q^{\prime}=2-q$. Notice that if $q^{\prime}=q$, we have $\vert E_{i}^{\prime}\vert=\vert p_{i}^{-(1-q)}E_{i}\vert>\vert E_{i}\vert$; and, if $q^{\prime}=2-q$, we have $\vert E_{i}^{\prime}\vert=\vert p_{i}^{1-q}E_{i}\vert<\vert E_{i}\vert$, for $1\leq i\leq W$. Therefore, the cases described in Subsections \ref{subsec32} and \ref{subsec33} can be seen as dual situations in which the relative energy spectrum expands or contracts according to the same exponent factor $1-q$.

\section{Classical many-body Hamiltonians with arbitrarily-ranged interactions}\label{sec4}

Let us focus here on the continuous limit of Hamiltonian systems. We consider the following classical $d$-dimensional Hamiltonian:
\begin{equation*}
 \mathcal{H}^{(N)}(\{\vec{p}_i\},\{\vec{r}_i\})=\frac{1}{2m}\sum_{i=1}^N p_i^{\,2} + \sum_{i \ne j}V(r_{ij})=   \frac{1}{2m}\sum_{i=1}^N p_i^2 + \frac{1}{2}\sum_{i=1}^N \, \Bigl[\sum_{j\ne i} V(r_{ij})\Bigr]=\sum_{i=1}^N \mathcal{H}_i \,,
\end{equation*}
where $m=1$ (without loss of generality), $p_i=|\vec{p}_i|= |(p_{ix},p_{iy},\dots p_{id})|$,  $r_{ij}=  |\vec{r}_{ij}| = |\vec{r}_j-\vec{r}_i|$, $\vec{r}_i=(r_{ix},r_{iy}, \dots, r_{id}) $,
and the factor $1/2$ avoids double-counting, $\vec{p}_i$ and $\vec{r}_i$ being dimensionless momenta and positions. We assume that $V(r_{ij})$ is a (say attractive) potential that asymptotically vanishes with distance, and we are interested in knowing the thermostatistical behaviors for, say, the microcanonical ensemble (isolated system) or the canonical ensemble (thermal contact with a thermostat). 

If $V(r_{ij})$ rapidly decays with distance (e.g., only close-neighbor coupling or exponentially decaying potential $V(r_{ij})$), it is well established that the BG statistical mechanics, grounded on the additive entropic functional
\begin{eqnarray}
S_{BG}&=&-k \int d\vec{p}_1 \dots d\vec{p}_N \,d\vec{r}_1 \dots d\vec{r}_N \, \nonumber p(\vec{p}_1,\dots, \vec{p}_N,\vec{r}_1,\dots \vec{r}_N) \ln p(\vec{p}_1,\dots, \vec{p}_N,\vec{r}_1, \dots \vec{r}_N) \,, 
\end{eqnarray}
is applicable (for temperatures not too close to zero, where quantum effects become relevant). 

In contrast, if $V(r_{ij})$ slowly decays with distance (e.g., like a power law $1/r_{ij}^\alpha; \, 0\le \alpha<\infty)$, it is mandatory to use nonadditive entropic functionals such as 
\begin{eqnarray*}
S_{q_{entropy}}=k  \frac{\int d\vec{p}_1 \dots d\vec{p}_N \,d\vec{r}_1 \dots d\vec{r}_N \,[p(\vec{p}_1,\dots, \vec{p}_N,\vec{r}_1,\dots \vec{r}_N)]^{q_{entropy}}-1}{1-q_{entropy}} \;\;(S_1=S_{BG})  \,,
\end{eqnarray*}
 typically for $q_{entropy}\in(0,1)$. Applying the results of Subsection \ref{subsec31} to the continuous limit, the optimization of this functional with the constraints 
\begin{eqnarray*}
\int d\vec{p}_1 \dots d\vec{p}_N \,d\vec{r}_1 \dots d\vec{r}_N \,p(\vec{p}_1,\dots, \vec{p}_N,\vec{r}_1,\dots \vec{r}_N)&=&1  \\
\int d\vec{p}_1 \dots d\vec{p}_N \,d\vec{r}_1 \dots d\vec{r}_N \mathcal{H}^{(N)} \,p(\vec{p}_1,\dots, \vec{p}_N,\vec{r}_1,\dots \vec{r}_N)&=&U  \,,
\end{eqnarray*}
leads us to the continuous version of (\ref{eqProbDiscrete1}), which is given by
\begin{eqnarray*}
    \tilde{p}=\frac{\exp_{q_{energy}}[-\beta_{energy} (\mathcal{H}^{(N)}-U)]}{\int d\vec{p}_1\dots d\vec{p}_N \,d\vec{r}_1 \dots d\vec{r}_N\,\exp_{q_{energy}}[-\beta_{energy} (\mathcal{H}^{(N)}-U)]} \,,
\end{eqnarray*}
with $q_{energy}=2-q_{entropy}$ and $\beta_{energy}\in\mathbb{R}$ given by
\begin{eqnarray*}
    \int d\vec{p}_1\dots d\vec{p}_N \,d\vec{r}_1 \dots d\vec{r}_N\exp_{q_{energy}}[-\beta_{energy} (H^{(N)}-U)](\mathcal{H}^{(N)}-U)=0.
\end{eqnarray*}
The corresponding one-particle marginal probability distributions are, say, for the $i=1$ $x$-momentum $p_{1x}$ and energy $H_1$, defined as
\begin{eqnarray*}
    \tilde{p}^{(1)}(p_{1x}) &\equiv& \int d p_{1y}\dots  d p_{1d}\, d\vec{p}_2\dots d\vec{p}_N \,  d\vec{r}_1\dots d\vec{r}_N \, \tilde{p}(\vec{p}_1,\dots, \vec{p}_N,\vec{r}_1,\dots \vec{r}_N)\\
    \tilde{p}^{(1)}(H_1) &\equiv& \int d H_2\dots  d H_N \, \tilde{p}(H_1,H_2 \dots H_N) \,.
\end{eqnarray*}
Although a general proof appears to be elusive, it is possible (at least for a wide class of many-body Hamiltonians, in the limit $N\gg 1$) that 
\begin{eqnarray*}
\tilde{p}^{(1)}(p_{1x}) &=& \frac{\exp_{q_{moment}}(- \beta_{moment}^{(1)} \,p_{1x}^2/2m)}{\int dp_{1x}\exp_{q_{moment}}(- \beta_{moment}^{(1)}\,p_{1x}^2/2m)} \,,\\
\tilde{p}^{(1)}(H_1) &=& \frac{\exp_{q_{energy}}[- \beta_{energy}^{(1)} (H_1-U^{(1)})]}{\int dH_1\exp_{q_{energy}}[-\beta_{energy}^{(1)} (H_1-U^{(1)})]} \,,
\end{eqnarray*}
 with $\beta_{moment}^{(1)}, \beta_{energy}^{(1)}\in\mathbb{R}$, $U^{(1)}>0$ behaving like an effective chemical potential and
\begin{eqnarray}\label{eqConjecture}
q_{energy}-1= \frac{q_{moment}-1}{2} =1- q_{entropy}.
\end{eqnarray}

Some elements of this conjectural scenario are indeed numerically supported, at least for $d$-dimensional ($d=1,2$ or $3$) $XY$ and Heisenberg ferromagnetic systems \cite{CirtoRodriguezNobreTsallis2018,RodriguezNobreTsallis2019,RodriguezPluchinoTirnakliRapisardaTsallis2023}.  However, a numerical (or analytical) check of the corresponding values of $q_{entropy}$ is missing, and it would certainly be very welcome. Last but not least, within the discussion of escort mean values associated with $q$-exponential functions, a connection in the form of (\ref{eqConjecture}) emerges naturally (see (30) of \cite{TsallisPlastinoAlvarezEstrada2009}).

\section{Resemblance with low-dimensional nonlinear dynamical systems at the edge of chaos} \label{sec5}
A generalization of the logistic map, called the \textit{z-logistic map} \cite{CostaLyraPlastinoTsallis1997,SaberiTirnakliTsallis2026}, is given by
\begin{equation}\label{eqzLogisticMap}
x_{t+1}=1-a |x_t|^z \;\;(z \ge 1; \, 0 \le a \le 2; x_{0}\in [-1,1]; t=0,1,2,\dots),
\end{equation}
thus defining a family of nonlinear dynamical systems. 

For each value of $z\geq1$, there is a lower critical point $a_{c}(z)\in[1,2]$ (see, for instance, Table 1 in \cite{CostaLyraPlastinoTsallis1997}), called the \textit{Feigenbaum-Coullet-Tresser point}, for which the dynamical system exhibits non-chaotic behavior for $a<a_{c}(z)$, and strong chaos is possible for $a>a_{c}(z)$. Furthermore, $a_{c}(\cdot)$ is an increasing function, with $a_{c}(1)=1$ and $\lim_{z\rightarrow\infty}a_{c}(z)=2$.

In this section, we deal with the behavior of (\ref{eqzLogisticMap}) at the \textit{first edge of chaos}, i.e., for $z\geq1$ and $a=a_{c}(z)$. In this setting, \cite{CostaLyraPlastinoTsallis1997} have shown that the classical Lyapunov exponent provides a poor description of these dynamical systems (since it vanishes) and that, by replacing the exponential-sensitivity to changes in initial conditions by a power-law sensitivity, one obtains a more detailed description of the dynamics (see Figures 1 and 2 in \cite{CostaLyraPlastinoTsallis1997}). 

This is to say that, instead of looking for a classical Lyapunov exponent $\lambda\in\mathbb{R}$ such that
\begin{eqnarray*}
    \lim_{\Delta x_{0}\rightarrow0}\frac{\Delta x_{t}}{\Delta x_{0}}\sim \exp(\lambda t)\,\, (t\rightarrow\infty),
\end{eqnarray*}
we look for an index $q_{sen}\in\mathbb{R}$ (the subscript stands for \textit{sensitivity} to changes in initial conditions) and a $q$-generalized coefficient $\lambda_{sen}\in\mathbb{R}$ such that
\begin{eqnarray*}
     \lim_{\Delta x_{0}\rightarrow0}\frac{\Delta x_{t}}{\Delta x_{0}}\sim \exp_{q_{sen}}(\lambda_{sen} t)=[1+(1-q_{sen})\lambda_{sen}\,t]_{+}^{\frac{1}{1-q_{sen}}} \,\, (t\rightarrow\infty).
\end{eqnarray*}

For instance, for the classical logistic map (i.e., $z=2$), we have $q_{sen}(2)=0.24448\dots$, which is known to 
20,000 exact digits\footnote{The value of $q_{sen}$ is obtained from the Feigenbaum $\alpha$ constant (see \cite{LyraTsallis_1998}), which has been calculated \cite{molteni2016} up to 20,000 exact digits: \href{http://converge.to/feigenbaum/20k.html}{http://converge.to/feigenbaum/20k.html}.}. Furthermore, this value of $q_{sen}$ is expected to also emerge in relation to a possible $q$-generalized Kolmogorov-Sinai entropy (see (17) in \cite{CostaLyraPlastinoTsallis1997}).

A second fundamental result regarding the behavior of (\ref{eqzLogisticMap}) at the edge of chaos was recently obtained in \cite{SaberiTirnakliTsallis2026}. Motivated by the Central Limit Theorem (CLT), it has been shown that the iterates do not have a Gaussian attractor but a $q$-Gaussian one instead. This is to say that, instead of the convergence in distribution towards
\begin{eqnarray*}
    P(y)=\frac{1}{\sqrt{2\pi \sigma^{2}}}\exp\biggr(-\frac{y^{2}}{2\sigma^{2}}\biggr),
\end{eqnarray*}
the attractor becomes 
\begin{eqnarray*}
    P_{q_{clt}}(y)=\frac{\sqrt{\beta_{q_{clt}}}}{C_{q_{clt}}}\exp_{q_{clt}}(-\beta_{q_{clt}}y^{2}),
\end{eqnarray*}
for $\beta_{clt}>0$ and $C_{q_{clt}}$ given by (9) in \cite{SaberiTirnakliTsallis2026}, and
\begin{eqnarray*}
    q_{clt}(z)=\frac{z+3}{z+1}.
\end{eqnarray*}
Since $z\geq1$, we see that $q_{clt}(z)\in [1,2]$. Figure \ref{q_SensAndCLT} depicts $q_{sen}(z)$ and $q_{clt}(z)$ for $z\geq1$.

\begin{figure}[H]
\centering
\includegraphics[width=0.7\linewidth]{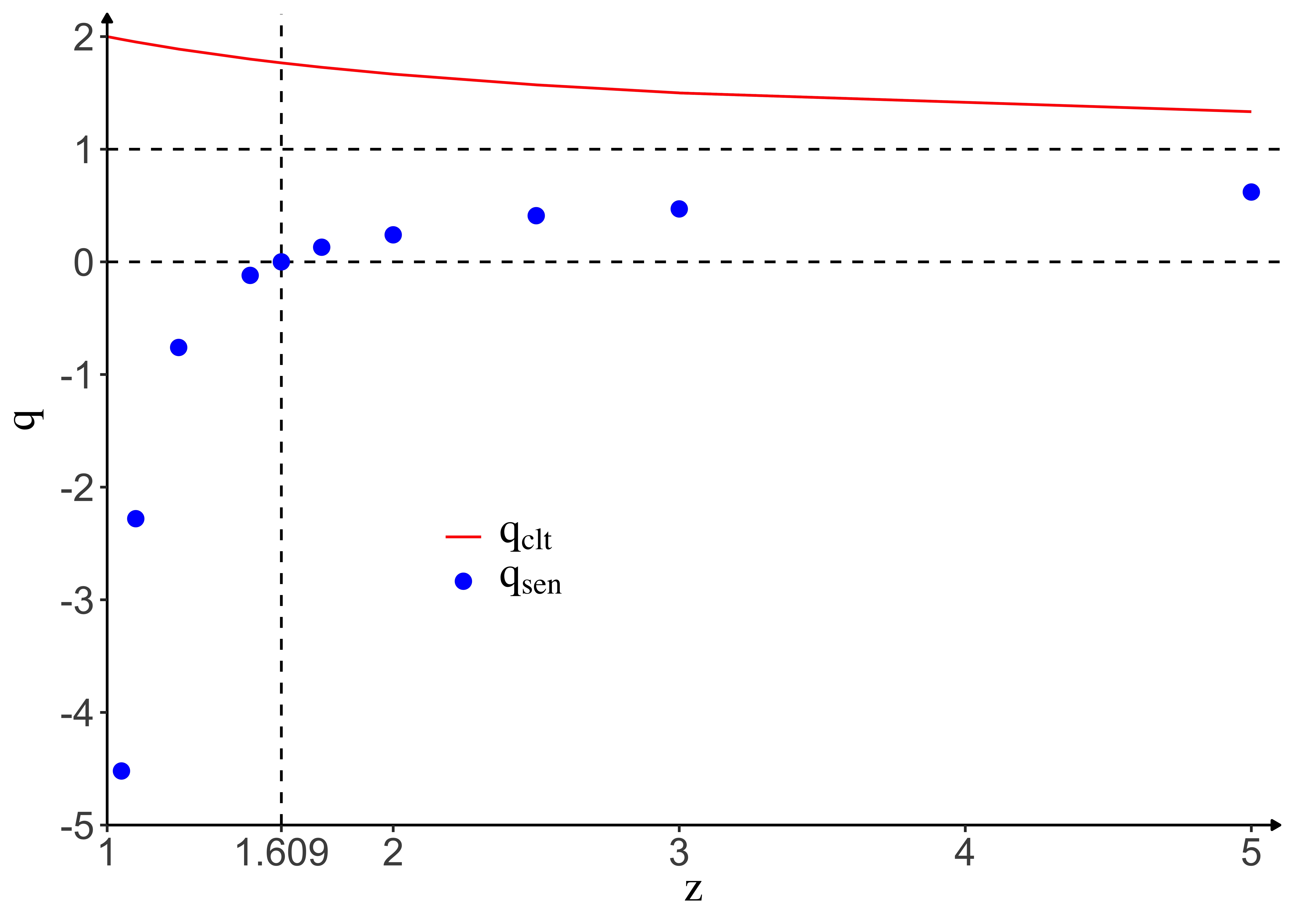}
\caption{$z$-dependencies of $q_{clt}$ and $q_{sen}$ (see Table 1 in \cite{CostaLyraPlastinoTsallis1997}). We remark that $\lim_{z\to 1}(q_{sen}, q_{clt})=(-\infty,2)$ and $\lim_{z\to \infty}(q_{sen}, q_{clt})=(1,1)$.}
\label{q_SensAndCLT}
\end{figure}

Figure \ref{q_SensAndCLT} reveals a notable resemblance to the results in Subsection \ref{subsec31}. In the linear case (i.e., $q^{\prime}=1$), the best-behaved case is the one for which $q_{entropy}\in(0,1)$ and, therefore, $q_{energy}\in (1,2)$, with the latter being related to an optimal probability distribution of $q$-exponential form. Also, $q_{entropy}$ and $q_{energy}$ are related by
\begin{eqnarray}\label{eqQentroQenergy}
    q_{entropy}+q_{energy}=2.
\end{eqnarray}

Notice that for $z>1.609$, we have $q_{sen}(z)\in (0,1)$, with this value related to a $q$-generalized Kolmogorov-Sinai entropy, and $q_{clt}(z)\in(1,2)$, with this value related to the $q$-Gaussian attractor. Furthermore, we have  $q_{sen}(\cdot)$ strictly increasing and $q_{clt}(\cdot)$ strictly decreasing, with $\lim_{z\to \infty}(q_{sen}, q_{clt})=(1,1)$ and, therefore, $\lim_{z\to \infty} q_{sen}+q_{clt}=2$. These resemblances lead us to question if the relation (\ref{eqQentroQenergy}) remains valid for these other two indexes, i.e., if $q_{sen}+q_{clt}=2$, and Figure \ref{figConj} reveals that it appears to be fairly accurate for $q_{sen}\in (0.41,1]$ (i.e., for $z>2.5$).
\begin{figure}[H]
\centering
\includegraphics[width=0.7\linewidth]{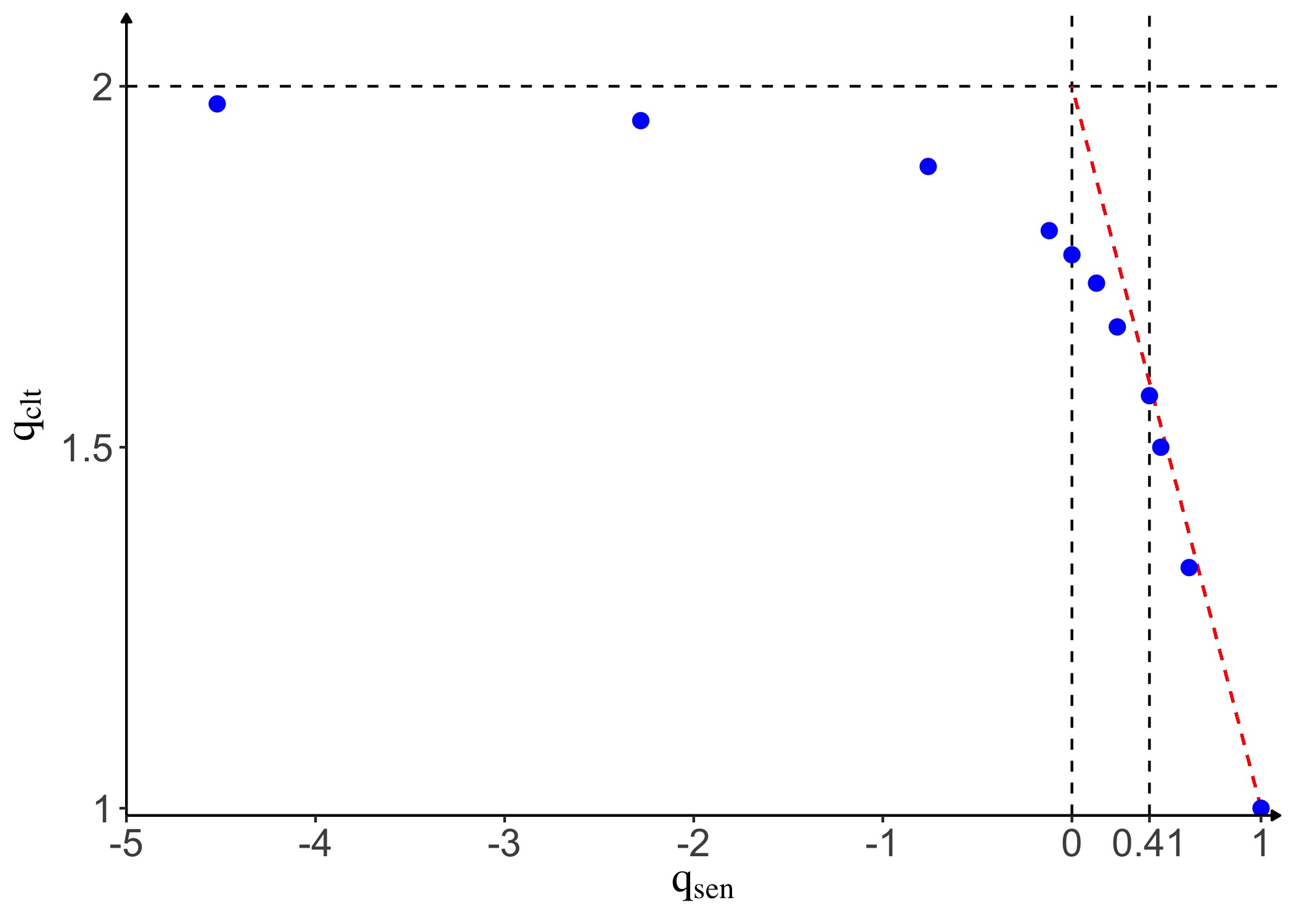}
       \caption{Values of $q_{sen}$ \cite{CostaLyraPlastinoTsallis1997} and $q_{clt}$ \cite{SaberiTirnakliTsallis2026} at the Feigenbaum-Coullet-Tresser critical point of the $z$-logistic map. The dashed red line corresponds to the relation $q_{sen}+q_{clt}=2$, which appears to be asymptotically verified.}
            \label{figConj}
\end{figure}

Last but not least, let us remind that strong numerical and analytical evidence exists in the literature  that $S_{q_{entropy}}(t)$ asymptotically grows linearly with time $t$ only for $q_{entropy}=q_{sen}$ ($q$-generalized Pesin-like identity), with $q_{entropy}(z)=1$ when the Lyapunov exponent is positive, whereas $q_{entropy}(z)<1$ at the Feigenbaum-Coullet-Tresser point (edge of chaos) where the Lyapunov exponent vanishes. We see therefore that, in such dynamical systems, $t$ plays, in what concerns entropy,  the role played by  $N$ in thermodynamical systems.

\section{Concluding remarks}\label{sec6}

This paper provides a unified treatment for the generalization of the Boltzmann-Gibbs distribution through the nonadditive entropic functional $S_{q}$ and its associated  optimization problem (\ref{maxProblem}). In particular, Theorem \ref{theo2} reveals that the generalized structural parameter $\beta_{q,q^{\prime}}$ leads to the definition of an effective temperature $T_{q,q^{\prime}}$ grounded on a Clausius-like relation (i.e., $\partial S_{q}/\partial U=1/T_{q,q^{\prime}}$) and the $0th$ Principle of Thermodynamics. 

In Subsections \ref{subsec31}, \ref{subsec32} and \ref{subsec33}, we have detailed three cases that extend the classical Boltzmann-Gibbs distribution through closed-form solutions of (\ref{maxProblem}). As in the classical Boltzmann-Gibbs optimization problem (\ref{maxProblemBG}), in all three cases, a single equation that depends only on the relative energy spectrum fully defines the structural parameter $\beta_{q,q^{\prime}}$ that yields the optimal probability distribution. Also, these three cases lead to a generalized partition function $Z_{q,q^{\prime}}$ that is coherent with the Helmholtz-like free energy $F_{q,q^{\prime}}=U-T_{q,q^{\prime}}S_{q}$.

Furthermore, Proposition \ref{propOnlyTwoCases} reveals that the two cases in Subsections \ref{subsec31} and \ref{subsec32} (i.e., $q^{\prime}=1$ and $q^{\prime}=q$) are the only ones that, under reasonable conditions, yield optimal probability distributions in the form of $q$-exponentials. Subsection \ref{subsec33} provides a closed-form solution to a new escort probabilities case (namely, $q^{\prime}=2-q$), which has not been, to the best of our knowledge, previously studied in the literature. This case can be seen as a dual of the well-known escort case (i.e., $q^{\prime}=q$) in the sense that they promote inversely related endogenous shifts (either contractions or expansions) in the relative energy spectrum.

We highlight that the linear constraint case described in Subsection \ref{subsec31}, for $q\in(0,1)$, is the best behaved one due to Theorem \ref{theo1} and Proposition \ref{propConcaveSq}. Indeed, these results reveal that, in this case, there always exists a strictly positive and unique solution of (\ref{maxProblem}) that preserves the Third Law of Thermodynamics. Also, there is empirical evidence in the literature of Hamiltonian systems with arbitrarily-ranged attractive interactions (see Section \ref{sec4}) that $q_{energy}>1$ is an adequate prediction, namely for $d$-dimensional ($d=1,2$ or $3$) $XY$, Heisenberg ferromagnetic systems (see Fig. 1(b) in \cite{CirtoRodriguezNobreTsallis2018} and Fig. 3 in \cite{RodriguezNobreTsallis2019}) as well as in the long-range-interacting Fermi-Pasta-Ulam anharmonic systems \cite{ChristodoulidiTsallisBountis2014}. Furthermore, this case resembles features of low-dimensional nonlinear dynamical systems, where time plays the role of the number of elements in  thermodynamical systems, as discussed in Section \ref{sec5}.

Finally, the linear constraint case with $q>1$ has been shown to correspond to confined overdamped many-body systems with repulsive two-body interactions \cite{AndradeSilvaMoreiraNobreCurado2010,NobreCuradoSouzaAndrade2015,VieiraCarmonaAndradeMoreira2016, SouzaAndradeNobreCurado2018, MoreiraVieiraCarmonaAndradeTsallis2018}. Such is the case of type-II superconductors.

\section*{Acknowledgments}
 We have benefited from useful remarks by E.M.F. Curado and A.R. Plastino. One of us (C. T.) also acknowledges partial financial support from CNPq and FAPERJ (Brazilian agencies).

\section*{Appendix}\label{app}

\begin{proof}[Proof of Theorem~{\upshape\ref{theo1}}]
First, notice that for $q>0$, $q\neq1$, we have
\begin{eqnarray*}
    \frac{S_{q}(p)}{k}=\begin{cases}
        \dfrac{\sum^{W}_{i=1}p_{i}^{q}-1}{1-q},\textrm{ if } q\in (0,1)\\
        \dfrac{1-\sum^{W}_{i=1}p_{i}^{q}}{q-1},\textrm{ if } q>1,
    \end{cases}
\end{eqnarray*}
and, therefore, $S_{q}(\cdot)/k$ is a strictly concave function for $q>0$, $q\neq1$. Also, for $q,q^{\prime}>0$, $q\neq1$, the constraints of (\ref{maxProblem}) define a compact non-empty set in which the objective function is continuous. Therefore, the Extreme Value Theorem implies the existence of a solution to (\ref{maxProblem}). If $e_{1}=e_{W}$, $U=e_{1}$ or $U=e_{W}$ (i.e., $E_{1}=0$ or $E_{W}=0$), the solution to (\ref{maxProblem}) is trivial.

Let, then, $e_{1}<e_{W}$ and $U\in(e_{1},e_{W})$, so that $E_{1}<0<E_{W}$. Also, let $\Delta=\{p\in\mathbb{R}^{W}_{+}\mid \sum^{W}_{i=1}p_{i}=1\}$, $\textrm{int\,}\Delta=\{p\in\mathbb{R}^{W}_{++}\mid \sum^{W}_{i=1}p_{i}=1\}$, $\partial\Delta=\Delta/\textrm{int\,}\Delta$ and $\tilde{p}\in\Delta$ be a solution of (\ref{maxProblem}). Notice that the strict concavity of $S_{q}(\cdot)/k$ implies that all states with the same energy must have equal optimal probabilities (i.e., if $e_{i}=e_{j}$, $i\neq j$, then $\tilde{p}_{i}=\tilde{p}_{j}$).

Suppose $\tilde{p}$ only has nonzero coordinates associated with a single energy value $e_{j}$, $1\leq j\leq W$. Then, since all probabilities os states with energy $e_{j}$ are equal, notice that $\sum^{W}_{i=1}\tilde{p}_{i}=g(e_{j})p_{j}=1$ implies $\tilde{p}_{j}=1/g(e_{j})$. Also,  $\sum^{W}_{i=1}\tilde{p}_{i}^{\,q^{\prime}}E_{i}=g(e_{j})\tilde{p}_{j}^{\,q^{\prime}}E_{j}=0$ implies $U=e_{j}$, so that $e_{1}<e_{j}<e_{W}$. 

For $0<\delta_{j}<\tilde{p}_{j}$, let
\begin{eqnarray*}
    \delta_{1}(\delta_{j})=\frac{E_{W}^{\frac{1}{q^{\prime}}}}{E_{W}^{\frac{1}{q^{\prime}}}+\vert E_{1}\vert^{\frac{1}{q^{\prime}}}}\delta_{j}\\
    \delta_{W}(\delta_{j})=\frac{\vert E_{1}\vert^{\frac{1}{q^{\prime}}}}{E_{W}^{\frac{1}{q^{\prime}}}+\vert E_{1}\vert^{\frac{1}{q^{\prime}}}}\delta_{j},
\end{eqnarray*}
and $\delta(\delta_{j})=(\delta_{1}(\delta_{j}),0,\ldots,-\delta_{j},\ldots,\delta_{W}(\delta_{j}))$. To ease notation, henceforth the argument $\delta_{j}$ will be omitted. Let $p(\delta)=\tilde{p}+\delta$ and notice that $\sum^{W}_{i=1}p_{i}(\delta)=\sum^{W}_{i=1}\tilde{p}_{i}=1$ and $\sum^{W}_{i=1}p(\delta)_{i}^{q^{\prime}}E_{i}=\sum^{W}_{i=1}\tilde{p}_{i}^{\, q^{\prime}}E_{i}=0$. Therefore, $p(\delta)$ satisfies the constraints of (\ref{maxProblem}) for $0<\delta_{j}<\tilde{p}_{j}$. Then,
\begin{eqnarray*}
        \frac{S_{q}(p(\delta))-S_{q}(\tilde{p})}{k}=\begin{cases}
        \dfrac{\delta_{1}^{q}+\delta_{W}^{q}+(\tilde{p}_{j}-\delta_{j})^{q}-\tilde{p}_{j}^{\,q}}{1-q},\textrm{ if } q\in (0,1)\\
        \dfrac{\tilde{p}_{j}^{\,q}-(\tilde{p}_{j}-\delta_{j})^{q}-\delta_{1}^{q}-\delta_{W}^{q}}{q-1},\textrm{ if } q>1,
    \end{cases}
\end{eqnarray*}
so that
\begin{eqnarray*}
        \frac{S_{q}(p(\delta))-S_{q}(\tilde{p})}{k/\vert1-q\vert}&=&\begin{cases}
        \delta_{j}^{q}\biggr(\theta^{q}+(1-\theta)^{q}+\delta_{j}^{1-q}\dfrac{(\tilde{p}_{j}-\delta_{j})^{q}-\tilde{p}_{j}^{\,q}}{\delta_{j}}\biggr),\textrm{ if } q\in (0,1)\\
        \delta_{j}\biggr(\dfrac{\tilde{p}_{j}^{\,q}-(\tilde{p}_{j}-\delta_{j})^{q}}{\delta_{j}}-\delta_{j}^{q-1}(\theta^{q}+(1-\theta)^{q})\biggr),\textrm{ if } q>1
    \end{cases}\\
    &=&\begin{cases}
        \delta_{j}^{q}\biggr(\theta^{q}+(1-\theta)^{q}+\delta_{j}^{1-q}(q\tilde{p}_{j}^{1-q}+o(1))\biggr),\textrm{ if } q\in (0,1)\\
        \delta_{j}\biggr(q\tilde{p}_{j}^{1-q}+o(1)-\delta_{j}^{q-1}(\theta^{q}+(1-\theta)^{q})\biggr),\textrm{ if } q>1,
    \end{cases}
\end{eqnarray*}
with $\theta=E_{W}^{\frac{1}{q^{\prime}}}/(E_{W}^{\frac{1}{q^{\prime}}}+\vert E_{1}\vert^{\frac{1}{q^{\prime}}})\in(0,1)$. Therefore, for $\delta_{j}>0$ sufficiently small, we have $S_{q}(p(\delta))>S_{q}(\tilde{p})$, absurd. Therefore, $\tilde{p}\in\Delta$ has at least two non-zero coordinates associated with different energy values. Since $ \sum^{W}_{i=1}\tilde{p}_{i}^{\,q^{\prime}}E_{i}=0$, there must be at least one non-zero coordinate of $\tilde{p}$ associated with a negative relative energy level and another with a positive one. Let, then, $1\leq m,n\leq W$, be such that $E_{m}<0<E_{n}$ and $\tilde{p}_{m},\tilde{p}_{n}>0$. 

Let $q^{\prime}>1$ and suppose $\tilde{p}\in\partial\Delta$, so that there is $1\leq j\leq W$ such that $\tilde{p}_{j}=0$. Clearly, $e_{j}\neq e_{m},e_{n}$. Define the following auxiliary function $h:\mathbb{R}\times[-\tilde{p}_{m},+\infty)\times[-\tilde{p}_{n},+\infty)\rightarrow\mathbb{R}^{2}$,
    \begin{eqnarray*}
        h(\delta_{j},\delta_{m},\delta_{n})=
            (\delta_{j}+\delta_{m}+\delta_{n},E_{j}\vert \delta_{j}\vert ^{q^{\prime}}+E_{m}(\tilde{p}_{m}+\delta_{m})^{q^{\prime}}+E_{n}(\tilde{p}_{n}+\delta_{n})^{q^{\prime}}),
    \end{eqnarray*}
and notice that $h(\cdot)$ is a $C^{1}$ function with $h(0,0,0)=(0,E_{m}\tilde{p}_{m}^{\,q^{\prime}}+E_{n}\tilde{p}_{n}^{\,q^{\prime}})$ and
    \begin{eqnarray*}
        \mathbf{J}h(0,0,0)=\begin{bmatrix}
            1 & 1 & 1\\
            0 & q^{\prime}E_{m}\tilde{p}_{m}^{\,q^{\prime}-1} & q^{\prime}E_{n}\tilde{p}_{n}^{\,q^{\prime}-1}
        \end{bmatrix}.
    \end{eqnarray*}
Since $E_{m}<0<E_{n}$, $\mathbf{J}h(0,0,0)$ has full rank. Therefore, the Implicit Function Theorem furnishes $\varepsilon>0$ and $C^{1}$ functions $\delta_{m}(\cdot)$ and $\delta_{n}(\cdot)$ such that $\delta_{m}(0)=0$, $\delta_{n}(0)=0$ and
    \begin{eqnarray*}
        \delta_{j}+\delta_{m}(\delta_{j})+\delta_{n}(\delta_{j})&=&0\\
        E_{j}\vert \delta_{j}\vert ^{q^{\prime}}+E_{m}(\tilde{p}_{m}+\delta_{m}(\delta_{j}))^{q^{\prime}}+E_{n}(\tilde{p}_{n}+\delta_{n}(\delta_{j}))^{q^{\prime}}&=&E_{m}\tilde{p}_{m}^{\,q^{\prime}}+E_{n}\tilde{p}_{n}^{\,q^{\prime}},
    \end{eqnarray*}
for $\vert \delta_{j}\vert<\varepsilon$. Also,
\begin{eqnarray*}
    \begin{bmatrix}
        1 & 1 \\
        q^{\prime}E_{m}\tilde{p}_{m}^{\,q^{\prime}-1} & q^{\prime}E_{n}\tilde{p}_{n}^{\,q^{\prime}-1}
    \end{bmatrix}\begin{bmatrix}
    \delta_{m}^{\prime}(0) \\ \delta_{n}^{\prime}(0)
    \end{bmatrix}=\begin{bmatrix}
            -1\\
            0 
        \end{bmatrix},
\end{eqnarray*}
and this implies $\delta_{m}^{\prime}(0),\delta_{n}^{\prime}(0)<0$.

Next, let $\delta(\delta_{j})=(0,\ldots,\delta_{j},\ldots,\delta_{m}(\delta_{j}),\ldots,\delta_{n}(\delta_{j}),\ldots,0)$ (assuming, without loss of generality, $j<m<n$). To ease notation, once again, the argument $\delta_{j}$ will be omitted. Let $p(\delta)=\tilde{p}+\delta$ and notice that $\sum^{W}_{i=1}p_{i}(\delta)=\sum^{W}_{i=1}\tilde{p}_{i}=1$ and $\sum^{W}_{i=1}p(\delta)_{i}^{q^{\prime}}E_{i}=\sum^{W}_{i=1}\tilde{p}_{i}^{\, q^{\prime}}E_{i}=0$. Therefore, $p(\delta)$ satisfies the constraints of (\ref{maxProblem}) for $\vert \delta_{j}\vert <\varepsilon$. For $0<\delta_{j}<\varepsilon$, we can write
\begin{eqnarray*}
        \frac{S_{q}(p(\delta))-S_{q}(\tilde{p})}{k/\vert 1-q\vert}&=&\begin{cases}
        \delta_{j}^{q}+(\tilde{p}_{m}+\delta_{m})^{q}-\tilde{p}_{m}^{\,q}+(\tilde{p}_{n}+\delta_{n})^{q}-\tilde{p}_{n}^{\,q},\textrm{ if } q\in (0,1)\\
        -\delta_{j}^{q}-(\tilde{p}_{m}+\delta_{m})^{q}+\tilde{p}_{m}^{\,q}-(\tilde{p}_{n}+\delta_{n})^{q}+\tilde{p}_{n}^{\,q},\textrm{ if } q>1
    \end{cases}\\
    &=&\begin{cases}
        \delta_{j}^{q}\biggr(1+\dfrac{(\tilde{p}_{m}+\delta_{m})^{q}-\tilde{p}_{m}^{\,q}}{\delta_{1}}\dfrac{\delta_{1}}{\delta_{j}}\delta_{j}^{1-q}+\dfrac{(\tilde{p}_{n}+\delta_{n})^{q}-\tilde{p}_{n}^{\,q}}{\delta_{n}}\dfrac{\delta_{n}}{\delta_{j}}\delta_{j}^{1-q}\biggr),\textrm{ if } q\in (0,1)\\
        \delta_{j}\biggr(-\delta_{j}^{q-1}-\dfrac{(\tilde{p}_{m}+\delta_{m})^{q}+\tilde{p}_{m}^{\,q}}{\delta_{m}}\dfrac{\delta_{m}}{\delta_{j}}-\dfrac{(\tilde{p}_{n}+\delta_{n})^{q}+\tilde{p}_{n}^{\,q}}{\delta_{n}}\dfrac{\delta_{n}}{\delta_{j}}\biggr),\textrm{ if } q>1
    \end{cases}\\
    &=&\begin{cases}
        \delta_{j}^{q}[1+(q\tilde{p}_{m}^{\, q-1}\delta_{m}^{\prime}(0)+o(1))\delta_{j}^{1-q}+(q\tilde{p}_{n}^{\, q-1}\delta_{n}^{\prime}(0)+o(1))\delta_{j}^{1-q}],\textrm{ if } q\in (0,1)\\
        \delta_{j}[-\delta_{j}^{q-1}-q\tilde{p}_{m}^{\, q-1}\delta_{m}^{\prime}(0)+o(1)-q\tilde{p}_{n}^{\, q-1}\delta_{n}^{\prime}(0)+o(1)],\textrm{ if } q>1.
    \end{cases}
\end{eqnarray*}
Then, for $\delta_{j}>0$ sufficiently small, we have $S_{q}(p(\delta))>S_{q}(\tilde{p})$, absurd. We conclude that $\tilde{p}\in\textrm{int\,}\Delta$.

Next, let $q^{\prime}=1$ and notice that the constraints of (\ref{maxProblem}) define a convex set and the objective function is strictly concave. Therefore, there is a unique solution for (\ref{maxProblem}).

Next, let $q^{\prime}=1$ and $q\in(0,1)$, and suppose $\tilde{p}\in\partial\Delta$, so that there is $1\leq j\leq W$ such that $\tilde{p}_{j}=0$. Once again, we have $E_{j}\neq E_{m},E_{n}$. Since $E_{m}<E_{n}$, we can define
\begin{eqnarray*}
    \delta_{m}(\delta_{j})=\frac{E_{j}-E_{n}}{E_{n}-E_{m}}\delta_{j}\\
    \delta_{n}(\delta_{j})=\frac{E_{m}-E_{j}}{E_{n}-E_{m}}\delta_{j},
\end{eqnarray*}
and $\delta(\delta_{j})=(0,\ldots,\delta_{j},\ldots,\delta_{m}(\delta_{j}),\ldots,\delta_{n}(\delta_{j}),\ldots,0)$. To ease notation, once again, the argument $\delta_{j}$ will be omitted. Let $p(\delta)=\tilde{p}+\delta$ and notice that $\sum^{W}_{i=1}p_{i}(\delta)=\sum^{W}_{i=1}\tilde{p}_{i}=1$ and $\sum^{W}_{i=1}p(\delta)_{i}E_{i}=\sum^{W}_{i=1}\tilde{p}_{i}E_{i}=0$. Also, there is $\varepsilon>0$ such that $0<\delta_{j}<\varepsilon$ implies $p(\delta)\in\mathbb{R}^{W}_{++}$. Therefore, $p(\delta)$ satisfies the constraints of (\ref{maxProblem}) for $0<\delta_{j}<\varepsilon$ and we can write
\begin{eqnarray*}
        \frac{S_{q}(p(\delta))-S_{q}(\tilde{p})}{k/(1-q)}&=&
        \delta_{j}^{q}+(\tilde{p}_{m}+\delta_{m})^{q}-\tilde{p}_{m}^{\,q}+(\tilde{p}_{n}+\delta_{n})^{q}-\tilde{p}_{n}^{\,q}\\
    &=&\delta_{j}^{q}\biggr(1+\dfrac{(\tilde{p}_{m}+\delta_{m})^{q}-\tilde{p}_{m}^{\,q}}{\delta_{m}}\dfrac{\delta_{m}}{\delta_{j}}\delta_{j}^{1-q}+\dfrac{(\tilde{p}_{n}+\delta_{n})^{q}-\tilde{p}_{n}^{\,q}}{\delta_{n}}\dfrac{\delta_{n}}{\delta_{j}}\delta_{j}^{1-q}\biggr)\\
    &=&\delta_{j}^{q}\biggr[1+\biggr(\dfrac{q\tilde{p}_{m}^{\, q-1}(E_{j}-E_{n})}{E_{n}-E_{m}}+o(1)\biggr)\delta_{j}^{1-q}+\biggr(\dfrac{q\tilde{p}_{n}^{\, q-1}(E_{m}-E_{j})}{E_{n}-E_{m}}+o(1)\biggr)\delta_{j}^{1-q}\biggr].
\end{eqnarray*}
Then, for $\delta_{j}>0$ sufficiently small, we have $S_{q}(p(\delta))>S_{q}(\tilde{p})$, absurd. We conclude that $\tilde{p}\in\textrm{int\,}\Delta$.
\end{proof}

\begin{proof}[Proof of Theorem~{\upshape\ref{theo2}}]
Let $\tilde{p}\in\mathbb{R}^{W}_{++}$ be a strictly positive solution of (\ref{maxProblem}) for $q\neq1$. The Lagrangian from (\ref{maxProblem}) is given by
\begin{eqnarray*}
    \mathcal{L}(p,\lambda)=S_{q}(p)-\lambda_{1}\biggr(\sum^{W}_{i=1}p_{i}-1\biggr)-\lambda_{2}\sum^{W}_{i=1}p_{i}^{q^{\prime}}E_{i},
\end{eqnarray*}
with $\lambda_{1},\lambda_{2}\in\mathbb{R}$ the multipliers. The first-order conditions are written
\begin{eqnarray}\label{eqFirstOrder}
    \nabla_{p}\,\mathcal{L}(p,\lambda)=\frac{kq}{1-q}(\tilde{p}_{1}^{\,q-1},\ldots, \tilde{p}_{W}^{\,q-1})-\lambda_{1}(1,\ldots,1)-\lambda_{2} q^{\prime}(\tilde{p}_{1}^{\,q^{\prime}-1}E_{1},\ldots,\tilde{p}_{W}^{\,q^{\prime}-1}E_{W})=0.
\end{eqnarray}
Notice that (\ref{eqFirstOrder}) implies
\begin{eqnarray*}
    \sum^{W}_{i=1}\tilde{p}_{i}^{\,q}=(\tilde{p}_{1}^{\,q-1},\ldots, \tilde{p}_{W}^{\,q-1})\cdot \tilde{p}=\frac{\lambda_{1}(1-q)}{kq}\sum^{W}_{i=1}\tilde{p}_{i}+\frac{\lambda_{2} q^{\prime}(1-q)}{kq}\sum^{W}_{i=1}\tilde{p}_{i}^{\,q^{\prime}}E_{i}
    =\frac{\lambda_{1}(1-q)}{kq},
\end{eqnarray*}
and, therefore,
\begin{eqnarray*}
    \lambda_{1}=\frac{kq}{1-q}\sum^{W}_{i=1}\tilde{p}_{i}^{\,q}>0.
\end{eqnarray*}
Let $\tilde{x}=\psi_{q}(\tilde{p})$ (i.e., the image by $\psi(\cdot)$ of the strictly positive solution $\tilde{p}$ of (\ref{maxProblem})) and notice that (\ref{eqFirstOrder}) can be written as
\begin{eqnarray*}
    \tilde{x}-(1,\ldots,1)=\frac{\lambda_{2} q^{\prime}}{\lambda_{1}}(E_{1}\tilde{p}_{1}^{\,q^{\prime}-1},\ldots,E_{W}\tilde{p}_{W}^{\,q^{\prime}-1})=(1-q)\beta_{q,q^{\prime}} f_{q,q^{\prime}}(\tilde{x}),
\end{eqnarray*}
with
\begin{eqnarray}\label{eqBetaLambda2}
    \beta_{q,q^{\prime}}=\frac{q^{\prime}}{kq}\biggr(\sum^{W}_{i=1}\tilde{p}_{i}^{\,q}\biggr)^{\frac{q^{\prime}-q}{q-1}}\lambda_{2}.
\end{eqnarray}
Notice that the Lagrange multiplier $\lambda_2$ does not generically coincide with $k\beta_{q,q^{\prime}}$, although for $q=q^{\prime}$, we do have $\lambda_{2}=k\beta_{q,q}$. Also,
\begin{eqnarray*}
    \sum^{W}_{i=1}\tilde{p}_{i}=1
    \implies \sum^{W}_{i=1}\tilde{x}_{i}^{\frac{1}{q-1}}(\tilde{x}_{i}-1)=0.
\end{eqnarray*}
Furthermore, assuming the value function of (\ref{maxProblem}) is differentiable, the Envelope Theorem and (\ref{eqBetaLambda2}) allow us to write
\begin{eqnarray}\label{eqEnvelope1}
    \frac{\partial S_{q}}{\partial U}=\lambda_{2}\sum^{W}_{i=1}\tilde{p}_{i}^{\,q^{\prime}}=\frac{kq}{q^{\prime}}\biggr(\sum^{W}_{i=1}\tilde{p}_{i}^{\,q}\biggr)^{\frac{q-q^{\prime}}{q-1}}\biggr(\sum^{W}_{i=1}\tilde{p}_{i}^{\,q^{\prime}}\biggr)\beta_{q,q^{\prime}}.
\end{eqnarray}
In particular, for $q=q^{\prime}=1$, we have $\lambda_{2}=k\beta_{1,1}$ and (\ref{eqEnvelope1}) implies
\begin{eqnarray*}
    \frac{\partial S_{1}}{\partial U}=k\beta_{1,1},
\end{eqnarray*}
which recovers, by the Clausius' relation $\partial S_{BG}/\partial U=1/T_{1,1}$, the usual connection $\beta_{1,1}=1/kT_{1,1}$.
\end{proof}

\begin{proof}[Proof of Proposition~{\upshape\ref{propQ1Case}}]
First, notice that $i<j<k$ implies $E_{i}\leq E_{j}\leq E_{k}$. We remind that, since the objective function in (\ref{maxProblem}) is strictly concave, all states with the same energy must have equal optimal probabilities (i.e., if $e_{m}=e_{n},m\neq n$, then $\tilde{p}_{m}=\tilde{p}_{n}$).

Suppose $\tilde{p}_{j}=0$. Then, by the previous reasoning, $\tilde{p}_{i},\tilde{p}_{k}>0$ and $i<j<k$ imply $E_{i}<E_{j}<E_{k}$. Since $E_{i}<E_{k}$, we can define
\begin{eqnarray*}
    \delta_{i}(\delta_{j})=\frac{E_{j}-E_{k}}{E_{k}-E_{i}}\delta_{j}\\
    \delta_{k}(\delta_{j})=\frac{E_{i}-E_{j}}{E_{k}-E_{i}}\delta_{j},
\end{eqnarray*}
and $\delta(\delta_{j})=(0,\ldots,\delta_{i}(\delta_{j}),\ldots,\delta_{j},\ldots,\delta_{k}(\delta_{j}),\ldots,0)$. To ease notation, once again, the argument $\delta_{j}$ will be omitted. Let $p(\delta)=\tilde{p}+\delta$ and notice that $\sum^{W}_{n=1}p_{n}(\delta)=\sum^{W}_{n=1}\tilde{p}_{n}=1$ and $\sum^{W}_{n=1}p(\delta)_{n}E_{n}=\sum^{W}_{n=1}\tilde{p}_{n}E_{n}=0$. Also, there is $\varepsilon>0$ such that $0<\delta_{j}<\varepsilon$ implies $p(\delta)\in\mathbb{R}^{W}_{++}$. Therefore, $p(\delta)$ satisfies the constraints of (\ref{maxProblem}) for $0<\delta_{j}<\varepsilon$ and we can write
\begin{eqnarray*}
        \frac{S_{q}(p(\delta))-S_{q}(\tilde{p})}{k/(q-1)}&=&
        -\delta_{j}^{q}-(\tilde{p}_{i}+\delta_{i})^{q}+\tilde{p}_{i}^{\,q}-(\tilde{p}_{k}+\delta_{k})^{q}+\tilde{p}_{k}^{\,q}\\
    &=&\delta_{j}\biggr(-\delta_{j}^{q-1}-\dfrac{(\tilde{p}_{i}+\delta_{i})^{q}-\tilde{p}_{i}^{\,q}}{\delta_{i}}\dfrac{\delta_{i}}{\delta_{j}}-\dfrac{(\tilde{p}_{k}+\delta_{k})^{q}-\tilde{p}_{k}^{\,q}}{\delta_{k}}\dfrac{\delta_{k}}{\delta_{j}}\biggr)\\
    &=&\delta_{j}\biggr[-\delta_{j}^{q-1}-\dfrac{q\tilde{p}_{i}^{\, q-1}(E_{j}-E_{k})}{E_{k}-E_{i}}+o(1)-\dfrac{q\tilde{p}_{k}^{\, q-1}(E_{i}-E_{j})}{E_{k}-E_{i}}+o(1)\biggr].
\end{eqnarray*}
Then, for $\delta_{j}>0$ sufficiently small, we have $S_{q}(p(\delta))>S_{q}(\tilde{p})$, absurd. We conclude that $\tilde{p}_{j}>0$.
\end{proof}

\begin{proof}[Proof of Proposition~{\upshape\ref{propConcaveSq}}]
To ease notation, for $q\in (0,1)$, $q^{\prime}=1$, let $v_{q}(U)=S_{q}(\tilde{p}(U))$, $U\in(e_{1},e_{W})=(0,e_{W})$, denote the corresponding value function of (\ref{maxProblem}). Notice that (\ref{eqProbDiscrete1}) implies that $v_{q}(\cdot)$ is smooth and therefore we can write
\begin{eqnarray}\label{eqSecondDerValue}
    v_{q}^{\prime\prime}(U)=(\tilde{p}^{\prime}(U)\mathbf{H}S_{q}(\tilde{p}(U)))\cdot \tilde{p}^{\prime}(U)+\nabla S_{q}(\tilde{p}(U))\cdot \tilde{p}^{\prime\prime}(U).
\end{eqnarray}
First-order conditions of the optimization problems imply
\begin{eqnarray}\label{eqNablaSq}
    \nabla S_{q}(\tilde{p}(U))=\lambda_{1}(1,\ldots,1) +\lambda_{2} (e_{1}-U,\ldots,e_{W}-U),
\end{eqnarray}
where $\lambda_{1}$, $\lambda_{2}\in\mathbb{R}$ are the corresponding Lagrange multipliers. Furthermore, the constraints imply $(1,\ldots,1)\cdot \tilde{p}(U)=1$ and $(e_{1},\ldots,e_{W})\cdot \tilde{p}(U)=U$, and differentiating these relations yields $(e_{1},\ldots,e_{W})\cdot \tilde{p}^{\,\prime}(U)=1$, so that $\tilde{p}^{\,\prime}(U)\neq0$, and  $(1,\ldots,1)\cdot \tilde{p}^{\,\prime\prime}(U)=(e_{1},\ldots,e_{W})\cdot \tilde{p}^{\,\prime\prime}(U)=0$. 

Notice that $\tilde{p}(U)\in\mathbb{R}^{W}_{++}$ implies $\mathbf{H}S_{q}(\tilde{p}(U))$ is a diagonal matrix with strictly negative diagonal elements. Therefore, (\ref{eqNablaSq}) implies
\begin{eqnarray*}
    v_{q}^{\prime\prime}(U)=(\tilde{p}^{\,\prime}(U)\mathbf{H}S_{q}(\tilde{p}(U)))\cdot \tilde{p}^{\,\prime}(U)<0,
\end{eqnarray*}
and we conclude that $v_{q}(\cdot)$ is a strictly concave function. Next, carefully inspecting the optimization problem, notice that the maximum of $v_{q}(\cdot)$ is attained at $\sum^{W}_{i=1}e_{i}/W\in(0,e_{W})$, with 
\begin{eqnarray*}
    v_{q}\biggr(\frac{\sum^{W}_{i=1}e_{i}}{W}\biggr)=k\ln_{q}W.
\end{eqnarray*}
Notice also that $\sum^{W}_{i=1}\tilde{p}_{i}=1$, $\sum^{W}_{i=1}\tilde{p}_{i}e_{i}=U$ and the fact that $\tilde{p}_{i}=\tilde{p}_{j}$ if $e_{i}=e_{j}$, $1\leq i,j\leq W$, imply 
\begin{eqnarray*}
    U\geq g(e_{1})\tilde{p}_{1}(U)e_{1}+(1-g(e_{1})\tilde{p}_{1}(U))e_{n}=(1-g(e_{1})\tilde{p}_{1}(U))e_{n},
\end{eqnarray*}
where $n=\min\{1\leq i\leq W\mid e_{i}>e_{1}=0\}$.
Therefore,
\begin{eqnarray*}
    1\geq g(e_{1})\tilde{p}_{1}(U)\geq \frac{e_{n}-U}{e_{n}}\implies \lim_{U\rightarrow 0} \tilde{p}_{1}(U)=\frac{1}{g(e_{1})}\implies\lim_{U\rightarrow 0} S_{q}(\tilde{p}(U))=k\ln_{q}g(e_{1}).
\end{eqnarray*}
The case $\lim_{U\rightarrow e_{W}} S_{q}(\tilde{p}(U))=k\ln_{q}g(e_{W})$ is obtained by an analogous reasoning.

Define, then, for $q\in(0,1)$ and $U\in(0,e_{n})$, the following auxiliary optimization problem
\begin{eqnarray}
    \max& \dfrac{k}{1-q}\biggr(g(e_{1})p_{1}^{q}+g(e_{n})p_{n}^{q}-1\biggr)\\
\text{s.t.}& p_{1},p_{n}\geq0\nonumber\\
&g(e_{1})p_{1}+g(e_{n})p_{n}=1\nonumber\\
&g(e_{1})p_{1}e_{1}+g(e_{n})p_{n}e_{n}=U\nonumber,
\end{eqnarray}
which has the following value function
\begin{eqnarray*}
    \tilde{v}(U)=\frac{k}{1-q}\biggr(g(e_{1})^{1-q}\biggr(\frac{e_{n}-U}{e_{n}}\biggr)^{q}+g(e_{n})^{1-q}\biggr(\frac{U}{e_{n}}\biggr)^{q}-1\biggr).
\end{eqnarray*}
Notice, then, that $\tilde{v}(\cdot)$ is a concave function, $\tilde{v}(0)=k\ln_{q}g(e_{1})$ and $\lim_{U\rightarrow 0}\tilde{v}^{\prime}(U)=+\infty$. Also,
\begin{eqnarray*}
    0\leq \tilde{v}(U)\leq v(U),
\end{eqnarray*}
for $U\in(0,e_{n})$, and, therefore, we conclude that $\lim_{U\rightarrow 0}v^{\prime}(U)=+\infty$. The case $\lim_{U\rightarrow e_{W}}v^{\prime}(U)=-\infty$ and $q=1$ is obtained by an analogous reasoning.
\end{proof}

\begin{proof}[Proof of Lemma~{\upshape\ref{LemmaDomainQ1}}]
Since $q\in(0,1)$, notice that $ \exp_{2-q}(-q\beta U)>0$ is, and only if, $\beta U>-1/q(1-q)$. Also, $g_{q,1}(\beta,U)>0$ if, and only if, $\beta E_{1}>-1/(1-q)$ or  $\beta E_{W}>-1/(1-q)$. By assumption, we have $U>0$ and $e_{w}-U=E_{W}>0$. Therefore, $\beta>0$ implies $\exp_{2-q}(-q\beta U)>0$ and $g_{q,1}(\beta,U)>0$. Then, we can write
\begin{eqnarray*}
    Z_{q,1}(\beta,U)&=&[\exp_{2-q}(-q\beta U)^{q-1}+g_{q,1}(\beta,U)^{q-1}-1]_{+}^{\frac{1}{q-1}}\\
    &=&[(1-q)q\beta U +g_{q,1}(\beta,U)^{q-1}]^{\frac{1}{q-1}}_{+}\\
    &=&[(1-q)q\beta U +g_{q,1}(\beta,U)^{q-1}]^{\frac{1}{q-1}}>0,
\end{eqnarray*}
so that
\begin{eqnarray*}
    \ln_{2-q} Z_{q,1}(\beta,U)=\frac{(1-q)q\beta U +g_{q,1}(\beta,U)^{q-1}-1}{q-1}
    =-q\beta U +\ln_{2-q} g_{q,1}(\beta,U),
\end{eqnarray*}
for $\beta>0$.
\end{proof}

\begin{proof}[Proof of Lemma~{\upshape\ref{LemmaNablaQ1}}]
First, notice that, for $(\beta,U)\in O_{q,1}$, we have
\begin{eqnarray*}
    \nabla\ln_{2-q}Z_{q,1}(\beta,U)=\biggr(-qU+g_{q,1}(\beta,U)^{q-2}\frac{\partial g_{q,1}(\beta,U)}{\partial \beta}, -q\beta +g_{q,1}(\beta,U)^{q-2}\frac{\partial g_{q,1}(\beta,U)}{\partial U}\biggr).
\end{eqnarray*}
Then,
\begin{eqnarray*}
    \frac{\partial g_{q,1}(\beta,U)}{\partial \beta}=\frac{\partial}{\partial \beta}\biggr(\sum^{W}_{i=1}(1+(1-q)\beta E_{i})^{\frac{q}{q-1}}\bigg)=q\sum^{W}_{i=1}\exp_{2-q}(-\beta E_{i})E_{i},
\end{eqnarray*}
and (\ref{eqBetaDiscrete}) implies that $\partial g_{q,1}(\beta_{q,1},U)/\partial \beta=0$. Also,
\begin{eqnarray*}
    \frac{\partial g_{q,1}(\beta,U)}{\partial U}=\frac{\partial}{\partial U}\biggr(\sum^{W}_{i=1}(1+(1-q)\beta(e_{i}-U))^{\frac{q}{q-1}}\bigg)=q\beta\sum^{W}_{i=1}\exp_{2-q}(-\beta E_{i}),
\end{eqnarray*}
and (\ref{eqBetaDiscreteOriginal1}) implies that
\begin{eqnarray*}
    \frac{\partial g_{q,1}(\beta_{q,1},U)}{\partial U}=q\beta_{q,1}g_{q,1}(\beta_{q,1},U).
\end{eqnarray*}
We conclude that
\begin{eqnarray*}
     \nabla\ln_{2-q}Z_{q,1}(\beta_{q,1},U)=(-qU, q(q-1)\beta_{q,1}\ln_{2-q}g_{q,1}(\beta_{q,1},U)).
\end{eqnarray*}
\end{proof}

\begin{proof}[Proof of Proposition~{\upshape\ref{propQQCase}}]
First, notice that $i<j<k$ implies $E_{i}\leq E_{j}\leq E_{k}$. We remind that, since the objective function in (\ref{maxProblem}) is strictly concave, all states with the same energy must have equal optimal probabilities (i.e., if $e_{m}=e_{n},m\neq n$, then $\tilde{p}_{m}=\tilde{p}_{n}$).

Suppose $\tilde{p}_{j}=0$. By the previous reasoning, $\tilde{p}_{i},\tilde{p}_{k}>0$ and $i<j<k$ imply $E_{i}<E_{j}<E_{k}$.

Then, suppose $E_{j}=0$. Let $h:[-\tilde{p}_{i},+\infty)\times[-\tilde{p}_{k},+\infty) \rightarrow \mathbb{R}$ be given by
    \begin{eqnarray*}
        h(\delta_{i},\delta_{k})=E_{i}(\tilde{p}_{i}^{q}-(\tilde{p}_{i}-\delta_{i})^{q})+E_{k}(\tilde{p}_{k}^{q}-(\tilde{p}_{k}-\delta_{k})^{q}),
    \end{eqnarray*}
so that $\nabla h(0,0)=(E_{i}q\tilde{p}_{i}^{q-1},E_{k}q\tilde{p}_{k}^{q-1})\neq0$, since $E_{i}<0<E_{k}$. Then, the Implicit Function Theorem provides a smooth function $\delta_{k}:(-\varepsilon,\varepsilon)\rightarrow\mathbb{R}$, $\varepsilon>0$, such that $\delta_{k}(0)=0$ and $h(\delta_{i},\delta_{k}(\delta_{i}))=0$ and 
    \begin{eqnarray*}
        \delta^{\prime}_{k}(0)=-\frac{E_{i}\tilde{p}_{i}^{q-1}}{E_{k}\tilde{p}_{k}^{q-1}}>0.
    \end{eqnarray*}

Let  $\delta(\delta_{i})=(0,\ldots,-\delta_{i},\ldots,\delta_{i}+\delta_{k}(\delta_{i}),\ldots,-\delta_{k}(\delta_{i}),\ldots,0)$ and, since $\delta_{k}(0)=0$ and $\delta^{\prime}_{k}(0)>0$, we can assume $\delta_{i}+\delta_{k}(\delta_{i})>0$, for $0<\delta_{i}<\varepsilon$. To ease notation, the argument $\delta_{i}$ will be omitted. Let $p(\delta)=\tilde{p}+\delta$ and notice that $\sum^{W}_{n=1}p_{n}(\delta)=\sum^{W}_{n=1}\tilde{p}_{n}=1$ and $\sum^{W}_{n=1}p(\delta)_{n}E_{n}=\sum^{W}_{n=1}\tilde{p}_{n}E_{n}=0$. Therefore, $p(\delta)$ satisfies the constraints of (\ref{maxProblem}) for $0<\delta_{i}<\varepsilon$ and we can write
\begin{eqnarray*}
        \frac{S_{q}(p(\delta))-S_{q}(\tilde{p})}{k/(1-q)}&=&
        (\tilde{p}_{i}-\delta_{i})^{q}-\tilde{p}_{i}^{\,q}+(\delta_{i}+\delta_{k})^{q}+(\tilde{p}_{k}-\delta_{k})^{q}-\tilde{p}_{k}^{\,q}\\
        &=&\delta_{i}^{q}\biggr(\biggr(1+\frac{\delta_{k}}{\delta_{i}}\biggr)^{q}+\biggr(1-\frac{E_{i}}{E_{k}}\biggr) \biggr(\frac{(\tilde{p}_{i}-\delta_{i})^{q}-\tilde{p}_{i}^{\,q}}{\delta_{i}}\biggr)\delta_{i}^{1-q}\biggr)\\
        &=&\delta_{i}^{q}\biggr((1+\delta_{k}^{\prime}(0)+o(1))^{q}-\biggr(1-\frac{E_{i}}{E_{k}}\biggr)(q\tilde{p}^{q-1}+o(1))\delta_{i}^{1-q}\biggr).
\end{eqnarray*}
Then, for $\delta_{i}>0$ sufficiently small, we have $S_{q}(p(\delta))>S_{q}(\tilde{p})$, absurd.

Next, for $E_{j}<0$, let $h_{r}:[0,\tilde{p}_{i}]\rightarrow\mathbb{R}$, $r\in[0,\tilde{p}_{k}/\tilde{p}_{i})$, be given by
    \begin{eqnarray*}
        h_{r}(\delta_{i})=E_{i}(\tilde{p}_{i}^{q}-(\tilde{p}_{i}-\delta_{i})^{q})-E_{j}(1+r)^{q}\delta_{i}^{q}+E_{k}(\tilde{p}_{k}^{q}-(\tilde{p}_{k}-r\delta_{i})^{q}).
    \end{eqnarray*}
Notice that
    \begin{eqnarray*}
        \lim_{r\rightarrow 0}h_{r}(\tilde{p}_{i})=(E_{i}-E_{j})\tilde{p}_{i}^{q}<0,
    \end{eqnarray*}
so that there is $\varepsilon\in (0,\tilde{p}_{k}/\tilde{p}_{i})$ such that $h_{r}(\tilde{p}_{i})<0$, $r\in[0,\varepsilon]$. Furthermore, for $r\in[0,\varepsilon]$, $h_{r}(0)=0$ and 
    \begin{eqnarray*}
        \lim_{\delta_{i}\rightarrow0}h^{\prime}_{r}(\delta_{i})=E_{i}q\tilde{p}_{i}^{q-1}+E_{k}rq\tilde{p}_{k}^{q-1}-\lim_{\delta_{i}\rightarrow0}E_{j}(1+r)^{q}q\delta_{i}^{q-1}=+\infty.
    \end{eqnarray*}
The Mean Value Theorem implies that, for $r\in[0,\varepsilon]$, $h_{r}(\cdot)>0$ in a neighborhood of the origin. Then, the Intermediate Value Theorem allows us to find $\delta_{i}(r)\in(0,\tilde{p}_{i})$ such that $h_{r}(\delta_{i}(r))=0$. Let  $\delta(r)=(0,\ldots,-\delta_{i}(r),\ldots,(1+r)\delta_{i}(r),\ldots,-r\delta_{i}(r),\ldots,0)$. Let $p(r)=\tilde{p}+\delta(r)$ and notice that $\sum^{W}_{n=1}p_{n}(r)=\sum^{W}_{n=1}\tilde{p}_{n}=1$ and $\sum^{W}_{n=1}p(r)_{n}E_{n}=\sum^{W}_{n=1}\tilde{p}_{n}E_{n}=0$. Therefore, $p(\delta)$ satisfies the constraints of (\ref{maxProblem}) for $r\in(0,\varepsilon]$ and, since $E_{i}<E_{j}<0$, we can write
\begin{eqnarray*}
        \frac{S_{q}(p(r))-S_{q}(\tilde{p})}{k/(1-q)}&=&
        (\tilde{p}_{i}-\delta_{i}(r))^{q}-\tilde{p}_{i}^{\,q}+(1+r)^{q}\delta_{i}(r)^{q}+(\tilde{p}_{k}-r\delta_{i}(r))^{q}-\tilde{p}_{k}^{\,q}\\
        &=&\biggr(1-\frac{E_{j}}{E_{i}}\biggr)(1+r)^{q}\delta_{i}(r)^{q}+\biggr(1-\frac{E_{k}}{E_{i}}\biggr)((\tilde{p}_{k}-r\delta_{i}(r))^{q}-\tilde{p}_{k}^{\,q})\\
        &=&r\delta_{i}(r)\bigg(\frac{(E_{i}-E_{j})(1+r)^{q}}{E_{i}r\delta_{i}(r)^{1-q}}+\biggr(1-\frac{E_{k}}{E_{i}}\biggr)(q\tilde{p}_{k}^{q-1}+o(1))\biggr).
\end{eqnarray*} 
Since $(E_{i}-E_{j})/E_{i}>0$, for $r\in (0,\varepsilon]$ sufficiently small we have $S_{q}(p(r))>S_{q}(\tilde{p})$, absurd.

Next, for $E_{j}>0$, let $h_{r}:[0,\tilde{p}_{k}]\rightarrow\mathbb{R}$, $r\in[0,\tilde{p}_{i}/\tilde{p}_{k})$, be given by
    \begin{eqnarray*}
        h_{r}(\delta_{k})=E_{i}(\tilde{p}_{i}^{q}-(\tilde{p}_{i}-r\delta_{k})^{q})-E_{j}(1+r)^{q}\delta_{k}^{q}+E_{k}(\tilde{p}_{k}^{q}-(\tilde{p}_{k}-\delta_{k})^{q}).
    \end{eqnarray*}
Notice that
    \begin{eqnarray*}
        \lim_{r\rightarrow 0}h_{r}(\tilde{p}_{k})=(E_{k}-E_{j})\tilde{p}_{k}^{q}>0,
    \end{eqnarray*}
so that there is $\varepsilon\in (0,\tilde{p}_{i}/\tilde{p}_{k})$ such that $h_{r}(\tilde{p}_{k})<0$, $r\in[0,\varepsilon]$. Furthermore, for $r\in[0,\varepsilon]$, $h_{r}(0)=0$ and 
    \begin{eqnarray*}
        \lim_{\delta_{k}\rightarrow0}h^{\prime}_{r}(\delta_{k})=E_{i}rq\tilde{p}_{i}^{q-1}+E_{k}q\tilde{p}_{k}^{q-1}-\lim_{\delta_{k}\rightarrow0}E_{j}(1+r)^{q}q\delta_{k}^{q-1}=-\infty.
    \end{eqnarray*}
We proceed as before to reach a contradiction and conclude that $\tilde{p}_{j}>0$.
\end{proof}

\begin{proof}[Proof of Lemma~{\upshape\ref{LemmaDomainQQ}}]
Since $q\in(0,1)$, notice that $\exp_{2-q}(-\beta U)>0$ if, and only if, $\beta U>-1/(1-q)$. Also, $g_{q,q}(\beta,U)>0$ if, and only if, $\beta E_{1}<1/(1-q)$ or $\beta E_{W}<1(1-q)$. By assumption, we have $U>0$ and $e_{1}-U=E_{1}<0$. Therefore, $\beta>0$ implies $\exp_{2-q}(-\beta U)>0$ and $g_{q,q}(\beta,U)>0$. Then, we can write
\begin{eqnarray*}
    Z_{q,q}(\beta,U)&=&[\exp_{2-q}(-\beta U)^{q-1}+g_{q,q}(\beta,U)^{q-1}-1]_{+}^{\frac{1}{q-1}}\\
    &=&[(1-q)\beta U +g_{q,q}(\beta,U)^{q-1}]^{\frac{1}{q-1}}_{+}\\
    &=&[(1-q)\beta U +g_{q,q}(\beta,U)^{q-1}]^{\frac{1}{q-1}}>0,
\end{eqnarray*}
so that
\begin{eqnarray*}
    \ln_{2-q} Z_{q,q}(\beta,U)=\frac{(1-q)\beta U +g_{q,q}(\beta,U)^{q-1}-1}{q-1}
    =-\beta U +\ln_{2-q} g_{q,q}(\beta,U),
\end{eqnarray*}
for $\beta>0$.
\end{proof}

\begin{proof}[Proof of Lemma~{\upshape\ref{LemmaNablaQQ}}]
First, notice that, for $(\beta,U)\in O_{q,q}$, we have
\begin{eqnarray*}
    \nabla\ln_{2-q}Z_{q,q}(\beta,U)=\biggr(-U+g_{q,q}(\beta,U)^{q-2}\frac{\partial g_{q,q}(\beta,U)}{\partial \beta}, -\beta +g_{q,q}(\beta,U)^{q-2}\frac{\partial g_{q,q}(\beta,U)}{\partial U}\biggr).
\end{eqnarray*}
Then,
\begin{eqnarray*}
    \frac{\partial g_{q,q}(\beta,U)}{\partial \beta}=\frac{\partial}{\partial \beta}\biggr(\sum^{W}_{i=1}(1-(1-q)\beta E_{i})^{\frac{1}{1-q}}\bigg)=-\sum^{W}_{i=1}\exp_{q}(-\beta E_{i})^{q}E_{i},
\end{eqnarray*}
and (\ref{eqBetaDiscrete2}) implies that $\partial g_{q,q}(\beta_{q,q},U)/\partial \beta=0$. Also,
\begin{eqnarray*}
    \frac{\partial g_{q,q}(\beta,U)}{\partial U}=\frac{\partial}{\partial U}\biggr(\sum^{W}_{i=1}(1-(1-q)\beta(e_{i}-U))^{\frac{1}{1-q}}\bigg)=\beta\sum^{W}_{i=1}\exp_{q}(-\beta E_{i})^{q},
\end{eqnarray*}
and (\ref{eqBetaDiscreteOriginal2}) implies that
\begin{eqnarray*}
    \frac{\partial g_{q,q}(\beta_{q,1},U)}{\partial U}=\beta_{q,q}g_{q,q}(\beta_{q,q},U).
\end{eqnarray*}
We conclude that
\begin{eqnarray*}
     \nabla\ln_{2-q}Z_{q,q}(\beta_{q,q},U)=(-U, (q-1)\beta_{q,q}\ln_{2-q}g_{q,q}(\beta_{q,q},U)).
\end{eqnarray*}
\end{proof}

\begin{proof}[Proof of Proposition~{\upshape\ref{prop*Equivalence}}]
First, notice that $(\beta^{*},U)\in O^{*}$ implies
\begin{eqnarray*}
    -\frac{\ln_{q}Z^{*}}{\beta^{*}}=U-\frac{\ln_{q}g_{q,q}(\beta^{*}/\sum_{i=1}^{W}\tilde{p}_{i}(U)^{q},U)}{\beta^{*}}=U-\frac{\ln_{q}g_{q,q}(\beta_{q,q},U)}{\beta_{q,q}\sum_{i=1}^{W}\tilde{p}_{i}(U)^{q}}=F_{q,q}.
\end{eqnarray*}
Also, 
\begin{eqnarray*}
    \frac{\partial \ln_{q}Z^{*}(\beta,U)}{\partial \beta}&=&\frac{\partial}{\partial \beta}\biggr(-\beta U+\ln_{q} g_{q,q}\biggr(\frac{\beta}{\sum_{i=1}^{W}\tilde{p}_{i}^{q}},U\biggr)\biggr)\\
    &=&-U+g_{q,q}\biggr(\frac{\beta}{\sum_{i=1}^{W}\tilde{p}_{i}(U)^{q}},U\biggr)^{-q}\biggr(\partial_{1}g_{q,q}\biggr(\frac{\beta}{\sum_{i=1}^{W}\tilde{p}_{i}(U)^{q}},U\biggr)\biggr)\frac{1}{\sum_{i=1}^{W}\tilde{p}_{i}(U)^{q}},
\end{eqnarray*}
and (see the proof of Lemma \ref{LemmaNablaQQ}) 
\begin{eqnarray*}
    \partial_{1}g_{q,q}\biggr(\frac{\beta^{*}}{\sum_{i=1}^{W}\tilde{p}_{i}^{q}},U\biggr)=\partial_{1}g_{q,q}\biggr(\beta_{q,q},U\biggr)=0,
\end{eqnarray*}
where the operator $\partial_{1}$ denotes the partial derivative relative to the first argument of the function. Therefore,
\begin{eqnarray*}
U=-\frac{\partial \ln_{q}Z^{*}(\beta,U)}{\partial \beta}.
\end{eqnarray*}
\end{proof}

\begin{proof}[Proof of Proposition~{\upshape\ref{propOnlyTwoCases}}]
To ease notation, we write $q_{e}\equiv q_{energy}$ in this proof. Since $\tilde{p}_{i}>0$, $1\leq i\leq W$, Theorem \ref{theo2} implies that there is $\beta_{q,q^{\prime}}\in\mathbb{R}$, satisfying
\begin{eqnarray*}
        \tilde{x}&=&1+(1-q)\beta_{q,q^{\prime}} g_{q,q^{\prime}}(\tilde{x}),
\end{eqnarray*}
with $\tilde{x}=\psi(\tilde{p})\in \mathbb{R}^{W}_{++}$. Let $U\neq \sum^{W}_{i=1}e_{i}/W$, so that $\beta_{q,q^{\prime}},\gamma\neq0$. Then,
\begin{eqnarray*}
    p_{i}\propto (1-(1-r)\gamma E_{i})^{\frac{1}{1-q_{e}}}>0\implies \frac{x_{i}}{x_{j}}=\biggr(\frac{p_{i}}{p_{j}}\biggr)^{q-1}=\biggr(\frac{1-(1-q_{e})\gamma E_{i}}{1-(1-q_{e})\gamma E_{j}}\biggr)^{\frac{q-1}{1-q_{e}}}>0,
\end{eqnarray*}
for $1\leq i,j\leq W$.  Let $\sigma>0$ be such that
\begin{eqnarray*}
    \tilde{x}=\sigma\biggr((1-(1-q_{e})\gamma E_{1})^{\frac{q-1}{1-q_{e}}},\ldots,(1-(1-q_{e})\gamma E_{W})^{\frac{q-1}{1-q_{e}}}\biggr),
\end{eqnarray*}
so that (\ref{eqDiffeoX}) implies
\begin{eqnarray}\label{eqNumberEi}
    \sigma(1-(1-q_{e})\gamma E_{i})^{\frac{q-1}{1-q_{e}}}-(1-q)\beta_{q,q^{\prime}}E_{i}\sigma^{\frac{q^{\prime}-1}{q-1}} (1-(1-q_{e})\gamma E_{i})^{\frac{q^{\prime}-1}{1-q_{e}}}=1,
\end{eqnarray}
for $1\leq i\leq W$. Let $f:(0,+\infty)\rightarrow\mathbb{R}$ be given by
\begin{eqnarray*}
    f(y)=\sigma y^{\frac{q-1}{1-q_{e}}}+\theta(y-1) y^{\frac{q^{\prime}-1}{1-q_{e}}},
\end{eqnarray*}
with $\theta=(1-q)\beta_{q,q^{\prime}}\sigma^{\frac{q^{\prime}-1}{q-1}}/(1-q_{e})\gamma$.
 Notice, then, that we can write 
\begin{eqnarray*}
    f^{\prime}(y)=\frac{\sigma(q-1)}{1-q_{e}}y^{\frac{q-1}{1-q_{e}}-1}+\frac{\theta(q^{\prime}-q_{e})}{1-q_{e}}y^{\frac{q^{\prime}-q_{e}}{1-q_{e}}-1}-\frac{\theta(q^{\prime}-1)}{1-q_{e}}y^{\frac{q^{\prime}-1}{1-q_{e}}-1},
\end{eqnarray*}
so that
\begin{eqnarray}
    f^{\prime}(y)=0&\iff& \theta y^{\frac{q^{\prime}-1}{1-q_{e}}-1}((q^{\prime}-q_{e})y-q^{\prime}+1)=\sigma(1-q)y^{\frac{q-1}{1-q_{e}}-1}\nonumber\\
    &\iff& \theta ((q^{\prime}-q_{e})y-q^{\prime}+1)=\sigma(1-q)y^{\frac{q-q^{\prime}}{1-q_{e}}}\nonumber\\
    &\iff&\frac{\beta_{q,q^{\prime}}\sigma^{\frac{q^{\prime}-q}{q-1}}((q^{\prime}-q_{e})y-q^{\prime}+1)}{(1-q_{e})\gamma}=y^{\frac{q-q^{\prime}}{1-q_{e}}}\label{eqRoots}.
\end{eqnarray}
Notice that if $q^{\prime}=1$, $q_{e}=2-q$, $\gamma=\beta_{q,q^{\prime}}$ and $\sigma=1$ we have $f^{\prime}(y)=0$, $y>0$. If $q^{\prime}=q_{e}=q$, $\gamma=\beta_{q,q^{\prime}}$ and $\sigma=1$ we also have $f^{\prime}(y)=0$, $y>0$. In these two cases, we have $f^{-1}(1)=(0,+\infty)$.

If we are in neither of these two cases, notice that (\ref{eqRoots}) has at most two solutions in $(0,+\infty)$ and, therefore, by the Mean Value Theorem, $f^{-1}(1)$ can have at most three elements. However, (\ref{eqNumberEi}) implies
\begin{eqnarray*}
    1-(1-q_{e})\gamma E_{i}\in f^{-1}(1),
\end{eqnarray*}
for $1\leq i\leq W$. Since the energy spectrum has at least four distinct energy values, this implies that $f^{-1}(1)$ has at least four elements, absurd. 
\end{proof}

\begin{proof}[Proof of Lemma~{\upshape\ref{LemmaNablaQ2MinusQ}}]
First, notice that, for $(\beta,U)\in O_{q,2-q}$, we have
\begin{eqnarray*}
    \frac{\partial \ln_{2-q}Z_{q,2-q}(\beta,U)}{\partial \beta}=\frac{\partial}{\partial \beta} \biggr(-\frac{\partial g_{q,2-q}(\beta,U)}{\partial U}\frac{U}{g_{q,2-q}(\beta,U)}\biggr)+\frac{\partial g_{q,2-q}(\beta,U)}{\partial \beta}.
\end{eqnarray*}
Also, (\ref{eqBetaDiscrete3}) implies
\begin{eqnarray*}
    \frac{\partial g_{q,2-q}(\beta_{q,2-q},U)}{\partial \beta}=0.
\end{eqnarray*}
Therefore, 
\begin{eqnarray*}
    \frac{\partial \ln_{2-q}Z_{q,2-q}(\beta_{q,2-q},U)}{\partial \beta}=-\frac{\partial^{2} g_{q,2-q}(\beta,U)}{\partial\beta \partial U}\frac{U}{g_{q,2-q}(\beta,U)}.
\end{eqnarray*}
\end{proof}

\printbibliography
\end{document}